\documentclass[journal]{IEEEtran}
\usepackage{cite}
\usepackage{array}
\usepackage{graphicx}
\usepackage{amsmath,amssymb,amsfonts}
\usepackage[center]{caption}
\usepackage[font=footnotesize]{caption}
\usepackage{xcolor}
\usepackage{algorithm,algorithmicx,algpseudocode}
\usepackage{bm}
\usepackage{mathtools}
\usepackage{nccmath}
\definecolor{Green}{rgb}{0.13, 0.55, 0.13}
\usepackage{comment}
\excludecomment{draftnotes}
\usepackage{subcaption}
\usepackage{enumitem}

\usepackage{makecell}
\usepackage{mathrsfs}
\usepackage{upgreek}
\usepackage{pifont}  
\newcommand{\cmark}{\ding{51}} 
\newcommand{\xmark}{\ding{55}} 

\usepackage{multirow}
\usepackage{tabularx}
\usepackage{booktabs}

\usepackage{titlesec}

\usepackage{amsthm}
\newtheorem{theorem}{Theorem}
\newtheorem{proposition}{Proposition}
\newtheorem{lemma}{Lemma}

\DeclareMathOperator*{\argmin}{arg\,min}

\newcommand{\algrule}[1][.2pt]{\par\vskip.5\baselineskip\hrule height #1\par\vskip.5\baselineskip}

\title{Version-Aware Communication in Multi-Hop IoT Networks with Feedback}
\author{\IEEEauthorblockN{Erfan Delfani and Nikolaos Pappas, \IEEEmembership{Senior Member, IEEE}}
        \thanks{The authors are with the Department of Computer and Information Science at Linköping University, Sweden, email: \{\textit{erfan.delfani, nikolaos.pappas\}@liu.se}. This work has been supported in part by the Swedish Research Council (VR), ELLIIT, and the European Union (ELIXIRION, 101120135, 6G-LEADER, 101192080, and SOVEREIGN, 101131481). A shorter version has been published in \cite{delfani2025timestamps}.}}
\date{August 2025}

\begin{document}

\setlength{\abovecaptionskip}{2pt}
\setlength{\belowcaptionskip}{-4pt}

\maketitle

\begin{abstract}
		Timely communication of information in Internet of Things (IoT) networks is critical to enhancing system performance and energy efficiency by minimizing the transmission of outdated or redundant data. Although timeliness metrics such as the Age of Information (\textsc{AoI}) effectively quantify information freshness, they do not account for content evolution. The Version Age of Information (\textsc{VAoI}) addresses this gap by tracking version lag at the receiver, thereby providing a practical content-aware metric. However, prior research has primarily focused on first-moment analyses in single-hop settings, leaving the distributional properties of \textsc{VAoI} in multi-hop networks, as well as the impact of feedback mechanisms, unexplored. In this study, we provide a comprehensive characterization of \textsc{VAoI} in multi-hop networks with transmission constraints and acknowledgment-based feedback. A bi-level optimization framework is formulated to jointly optimize the update policy of a rate-constrained source and the feedback-aware forwarding policies of the intermediate nodes, aiming to minimize communication overhead while maintaining \textsc{VAoI} performance at the destination. We show that the optimal source policy follows a threshold-based update strategy and derive the optimal threshold in closed form. For both the optimal threshold policy and a randomized baseline, we obtain closed-form expressions for the stationary distribution and average \textsc{VAoI}, along with the corresponding update rates across network nodes under feedback-aware forwarding. Numerical results corroborate the analytical findings and illustrate the advantages of utilizing \textsc{VAoI} and feedback to reduce redundant transmissions while preserving data freshness and informativeness in multi-hop systems. 
	\end{abstract}

\section{Introduction}
Efficient data management is critical for ensuring optimal performance in communication networks across diverse scenarios, ranging from single-hop IoT monitoring systems to multi-hop satellite communication networks. As the volume of data generated by network nodes increases, transmitting all data indiscriminately without considering its semantic significance or task relevance becomes increasingly impractical and unsustainable. Such an approach leads to excessive consumption of critical resources, including energy and bandwidth, ultimately degrading system performance. To address these challenges, there is an urgent need for network management approaches that dynamically optimize data transmission by leveraging semantic metrics to deliver the most \emph{timely} and \emph{informative} data within a constrained transmission frequency~\cite{kountouris2021semantics, luo2025survey}. This requires a shift from traditional passive, semantics-agnostic communication toward active, goal-oriented, semantics-aware approaches.

The Age of Information (\textsc{AoI})~\cite{kaul2012real} is a widely used metric that quantifies the freshness or timeliness of information in status update systems, defined as the time elapsed since the generation of the most recently received data. AoI-aware scheduling minimizes staleness by adapting transmissions to source data arrivals and network service times~\cite{yates2021age}. However, \textsc{AoI} captures freshness solely through data \emph{timestamps}, without accounting for actual changes in the source content. As a result, simply \emph{refreshing timestamps} may fail to deliver new information, and avoiding such redundant updates can reduce transmission and energy consumption.

To address this limitation, \emph{content-based} metrics such as Age of Incorrect Information (\textsc{AoII}) \cite{maatouk2020age} and Version Age of Information (\textsc{VAoI}) \cite{yates2021vage} have been introduced. \textsc{AoII} adds a distortion-aware dimension by measuring the staleness of incorrect information, specifically when the receiver's content deviates from the source, unlike \textsc{AoI}, which treats correct and incorrect data uniformly. However, \textsc{AoII} requires precise knowledge of the information \emph{state} at both the source and destination for comparison, which is practical only when the state space is small, fully modeled, and all state transitions are known. In many real-world applications, such complete knowledge may not be available.

In such cases, \textsc{VAoI} provides a more practical content-based metric by focusing solely on \emph{content changes} at the source, where data evolve through successive, non-reverting \emph{versions}. This requires only minimal knowledge: at any given time, either a new or a previous version exists, and the receiver must track these versions as timely as possible. Defined as the number of versions by which the receiver lags behind the source, \textsc{VAoI} further improves upon \textsc{AoI} by replacing timestamps with version numbers, thereby eliminating the challenging requirement of clock synchronization between the transmitter and receiver \cite{Mehrdad2025CL}. It is computed simply by comparing the receiver's stored version with the source's current version.

While these metrics have attracted considerable attention, most existing research has focused primarily on single-hop communication setups and on first-moment analyses, i.e., average values. In contrast, there remains a notable gap regarding multi-hop communication scenarios and the analysis of the full distributions of these metrics, particularly \textsc{VAoI}. Such analyses are essential for a deeper understanding of system behavior and for enabling more effective optimization, especially in resource-constrained IoT systems.

Another mechanism for improving the efficiency of data communication networks is the provision of reliable and prompt \emph{acknowledgment feedback} upon data delivery. Feedback-enabled networks facilitate active, closed-loop decision-making at transmitting nodes, enabling more efficient data transmission and the delivery of the most timely and informative data \cite{feng2022timely,munari2025s}. This feedback mechanism is particularly important in multi-hop networks. Whereas a lack of feedback in single-hop links affects only one transmitter, in multi-hop networks it causes a cascade of redundant retransmissions across intermediate nodes, wasting network resources without providing any benefit. Investigating the impact of feedback on communication overhead in multi-hop networks while preserving information semantics, specifically \textsc{VAoI}, is another direction that warrants further exploration.

In this work, we address these gaps by considering \textsc{VAoI}-aware and feedback-enabled communication approaches, in contrast to \textsc{VAoI}-agnostic and feedback-disabled approaches, in a multi-hop setup to enable efficient data delivery under source transmission rate constraints. \textsc{VAoI}-aware communication improves network efficiency by prioritizing the transmission of fresh and informative data. \emph{When network resources are limited, this approach reduces transmissions of stale or redundant information, thereby conserving resources and enabling timely and informative updates. When resource constraints are relaxed, \textsc{VAoI}-aware communication maintains data freshness and informativeness while reducing unnecessary resource consumption by filtering redundant transmissions.}

We consider these advantages in a multi-hop network consisting of a rate-constrained source node (node $0$), a sequence of $N$ intermediate relay nodes (nodes $1,2,\dots,N$) with no strict rate constraints, and a destination node (node $N+1$). The source node determines its transmission or \emph{update policy}, denoted by $\phi$, to optimize the timeliness and informativeness of the data at the destination node, as captured by \textsc{VAoI}. Each intermediate node $i \in \{1,2,\dots,N\}$ adopts a relaying or \emph{forwarding policy} $\theta_i$, whose primary objective is to reliably deliver all received data versions to the destination, thereby optimizing the \textsc{VAoI} at node~$N+1$. As a secondary objective, each forwarding policy also seeks to minimize communication overhead. The resulting problem naturally leads to a bi-level optimization problem:

\begin{itemize}
    \item \textbf{Upper-Level (\textsc{VAoI} Optimization):}  
    The network nodes optimize the \textsc{VAoI} at the destination subject to a transmission \emph{rate constraint} at the source. All intermediate nodes are assumed to employ an ideal \emph{always-update} forwarding policy, thereby establishing a baseline for the \emph{best achievable VAoI} at the destination.

    \item \textbf{Lower-Level (Update Rate Optimization):}  
    Given the optimal source policy $\phi^*$ obtained from the upper-level problem, the objective is to minimize the transmission rates of the intermediate nodes while preserving the best achievable \textsc{VAoI} at the destination. To reduce communication overhead, intermediate nodes use acknowledgment (\textsc{ACK}) feedback from downstream nodes to suppress redundant retransmissions once the latest data version has been successfully delivered, following a \emph{feedback-aware} (or \emph{\textsc{VAoI}-aware}) policy.
\end{itemize}

Thus, the upper-level optimization determines the optimal update policy for the rate-constrained source, e.g., an \emph{IoT device}, while the lower-level optimization determines the minimum update/forwarding rates under a feedback-aware policy at the intermediate nodes, e.g., a \emph{network operator}. The formal problem formulation is presented in Section~\ref{sec_ProblemFormulation}.

\subsection{Main Contributions}
The main contributions of this study are as follows:
\begin{itemize}
    \item We analyze the \textsc{VAoI} evolution at the nodes of a multi-hop network under a general update policy $\phi$ at a rate-constrained source, assuming an always-update forwarding policy at intermediate nodes.
    \item We formulate a bi-level optimization problem that first characterizes the best achievable \textsc{VAoI} under a source-rate constraint with always-update forwarding, and then determines the minimum update rates at intermediate nodes under a feedback-aware forwarding policy while preserving this \textsc{VAoI}.
    \item We show that the optimal source update policy, $\phi^*$, is a threshold policy and derive the optimal threshold in closed form.
    \item We derive closed-form expressions for the stationary distribution and average \textsc{VAoI} at network nodes under two source policies: the optimal (version-aware) threshold policy $\phi^*$ and a randomized stationary policy $\phi_{\mathcal{R}}$, serving as a version-agnostic baseline.
    \item We derive closed-form expressions for the update rates at network nodes under a feedback-aware forwarding policy for both threshold and randomized source policies.
    \item We validate the analytical results via simulations and investigate the behavior of \textsc{VAoI} and update rates in the multi-hop network.
\end{itemize}

\subsection{Related Works}
\label{sec_RelatedWorks}
The literature on information freshness and semantic metrics can be broadly categorized by network topology and depth of analysis. For single-hop setups, existing research has primarily focused on first-moment or average analysis. Several studies have investigated the distributions of \textsc{AoI} and Peak \textsc{AoI} (\textsc{PAoI}) in continuous-time systems using queueing theory \cite{costa2016age,champati2019distribution,inoue2019general,yates2020age,Chiariotti2021PAoI,jiang2021joint,abd2022closed,moltafet2022moment,fiems2023age,akar2025age,inoue2025characterizing}, while others have considered discrete-time settings \cite{kosta2021age,zhang2021age,akar2021discrete,ji2024age,Zhang2025AoIVehicles}. Notably, \cite{kosta2021age} derives general expressions for the stationary distributions and generating functions of \textsc{AoI} and \textsc{PAoI} in discrete-time single-server queues under various disciplines, together with methods for nonlinear age functions. Extending stochastic hybrid system techniques, \cite{zhang2021age} models \textsc{AoI} and packet age as a two-dimensional Markov process in bufferless queues with Bernoulli arrivals. A matrix-analytic quasi-birth--death framework is proposed in \cite{akar2021discrete} to obtain exact per-source \textsc{AoI} and \textsc{PAoI} distributions in multi-source IoT systems with phase-type service times. The study in \cite{ji2024age} investigates age-optimal scheduling with delayed feedback and long-term constraints, providing closed-form benchmarks for random and deterministic policies, while \cite{Zhang2025AoIVehicles} analyzes \textsc{AoI} and \textsc{PAoI} in multi-source Ber/Geo/1/1 systems under preemptive and non-preemptive policies.
	
For content-based metrics, several works have examined the distribution of \textsc{AoII} \cite{maatouk2020age,chen2024minimizing,salimnejad2024age}. Specifically, \cite{maatouk2020age} derives stationary \textsc{AoII} distributions for symmetric multi-state Markov sources under always-update and threshold policies. \cite{chen2024minimizing} studies \textsc{AoII} in slotted systems with random delays for two-state Markov sources under threshold policies. Using \textsc{DTMC} analysis, \cite{salimnejad2024age} derives stationary \textsc{AoII} and \textsc{AoIV} distributions for two-state Markov sources under given transmission policies. \textsc{AoCI} was introduced in \cite{wang2021age}, which derives optimal thresholds that minimize a weighted sum of \textsc{AoCI} and update cost. Stationary \textsc{VAoI} distributions for energy-harvesting systems are modeled in \cite{delfani2024semantics} using \textsc{DTMC}s with stochastic energy arrivals and threshold-based transmissions, while \cite{karevvanavar2024version} analyzes \textsc{VAoI} in NOMA fading broadcast channels with random version arrivals and power constraints. Although these studies provide valuable insights into the behavior of freshness metrics, they are limited to single-hop configurations.
	
In multi-hop networks, most existing studies on two-hop \cite{arafa2019timely,li2020age,gu2021optimizing} and general multi-hop settings \cite{Talak2018Mhop, Bedewy2019Mhop, buyukates2019age,lou2021boosting,liu2021minimizing,ke2025information,Chiariotti2022Mhop, Tripathi2023Mhop, Kaswan2023Mhop, Sinha2024Tandem,asvadi2024age} predominantly focus on analyzing or optimizing average freshness metrics, particularly \textsc{AoI}, while largely neglecting content-aware metrics such as \textsc{VAoI}, as well as the characterization of the full distributional behavior.
A small subset of prior work goes beyond this focus. \cite{ayan2020probability} derives the distribution of discrete-time \textsc{AoI} in $N$-hop systems with time-invariant packet loss via recursive methods, while \cite{Vikhrova2020Mhop} investigates the distributions of \textsc{AoI} and \textsc{PAoI} in continuous-time two-hop networks. However, these studies remain centered on AoI-based metrics and do not extend to content-aware measures such as \textsc{VAoI}. More recently, \cite{Delfani2025LEO} studies \textsc{VAoI} in multi-hop satellite networks under an always-update forwarding policy. Nevertheless, this work does not account for update rates at intermediate relaying nodes, nor does it investigate the role of feedback mechanisms, which are crucial for efficient closed-loop operation and reducing redundant retransmissions and communication overhead in multi-hop topologies.
Overall, the current literature reveals a clear gap in the joint analysis of multi-hop communication, feedback-enabled control, and the full distributional behavior of content-based freshness metrics such as \textsc{VAoI}. Addressing this gap is essential for understanding the cascading effects of feedback in multi-hop systems and for designing resource-efficient policies that preserve information semantics under network constraints. Unlike the aforementioned works, this paper jointly addresses multi-hop VAoI optimization, stationary distribution analysis, and feedback-aware forwarding under source transmission constraints.

\section{System Model and Formulation}

We consider a multi-hop line network designed for the communication of status updates from a source node to a destination through a sequence of intermediate relay nodes. The network consists of $N+2$ nodes indexed by $i \in \{0,1,\dots,N+1\}$, where node $0$ represents the source and node $N+1$ denotes the final destination. Nodes $1,\ldots,N$ act as intermediate forwarding nodes that relay updates toward the destination over unreliable communication links.
A representative real-world scenario is illustrated in Fig.~\ref{fig_SysModel}, where a remote energy-limited IoT device (node $0$) samples updates from an information source, such as a physical process, and transmits the resulting update packets to a distant ground node (node $N+1$). The updates are delivered through $N$-hop communication via a sequence of Low Earth Orbit (LEO) satellites that serve as intermediate relay nodes.
In this model, the source node is assumed to be energy-constrained and, therefore, has limited transmission capability. The intermediate nodes are assumed to have no strict resource constraints and aim to deliver data from the source to the destination as reliably as possible while minimizing communication overhead. In all analyses, we consider a slotted (discrete-time) system indexed by $t=0,1,2,\dots$. The details of the system model are described below.

\begin{figure}[!t]
    \centering
    \includegraphics[trim={4.2cm 4cm 4.2cm 4.2cm}, clip, width=\linewidth]{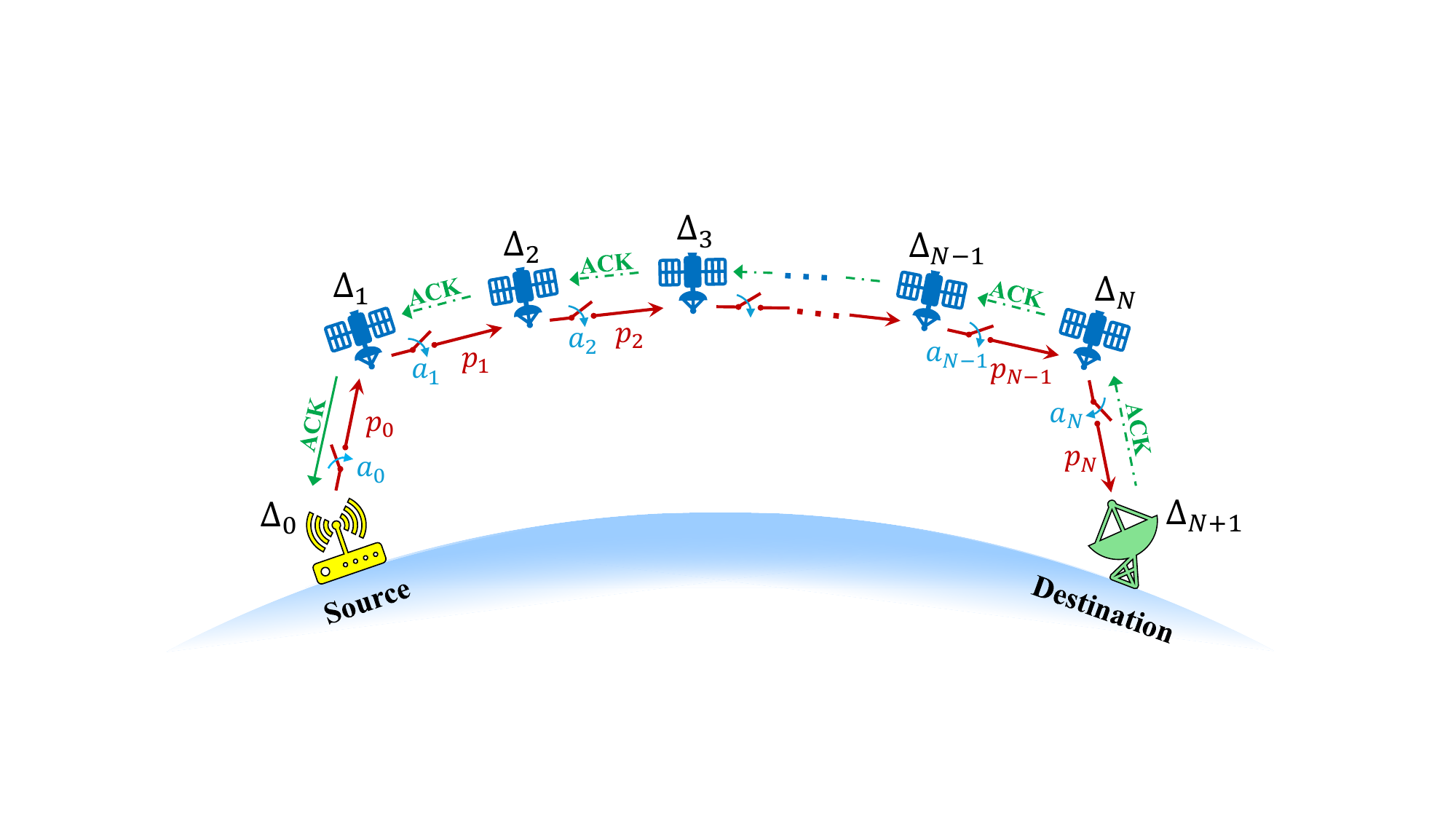}
    \caption{System model of a multi-hop network.}
    \label{fig_SysModel}
\end{figure}

\subsubsection{Communication Links}
\textit{Forward links:}
In this setup, at the beginning of each time slot and in accordance with the update policies outlined in Sec.~\ref{Sec_UpdatePolicies}, each node may attempt to transmit an update to its next-hop node. The completion of each transmission occupies exactly one time slot. Accordingly, updates propagate through the network via hop-by-hop transmissions over unreliable links. 
The communication link between node $i$ and node $i+1$ is modeled as a packet erasure channel. Specifically, when node $i$ attempts a transmission during time slot $t$, the transmission succeeds with probability (w.p.) $p_i$, and consequently, fails (i.e., is dropped) w.p. $\epsilon_i = 1-p_i$. 
We define the channel success indicator as:
\begin{align}
    h_i(t) \sim \text{Bernoulli}(p_i), \quad i \in \{0,1,2,\dots,N\},
\end{align}
where $h_i(t)=1$ indicates that a transmission from node $i$ to node $i+1$ at time slot $t$ is successful, and $h_i(t)=0$ otherwise. Channel outcomes are assumed to be independent across both time slots and different links.
Moreover, each node is assumed to maintain a buffer capable of storing a single update. The buffer always retains the most recently received update (version), discarding any older updates upon the arrival of a new one.

\textit{Backward (feedback) links:}
For the feedback-aware policy, we consider feedback links to acknowledge the successful delivery of updates within the network. In particular, when enabled, node $i+1$ sends an \textsc{ACK} to node $i$ upon receiving an update from that node. We assume that \textsc{ACK}s from receivers to transmitters are instantaneous and error-free, as sufficient resources can be allocated to ensure the reliable delivery of these small \textsc{ACK} packets.

\subsubsection{Update and Version Generation} 
The source node $0$ continuously monitors the information source and generates a status-update packet in each discrete time slot. However, not every update necessarily contains a new version of the information. We assume that the generation of a new version is modeled as an independent and identically distributed (i.i.d.) Bernoulli process with probability $p_g$. This stochastic model effectively captures the evolution of the information source over the long term.
The version generation process at the source is described by:
\begin{align}
w(t) \sim \text{Bernoulli}(p_g),
\end{align}
where $w(t) = 1$ indicates that a new version is generated in time slot $t$, and $w(t) = 0$ otherwise. Given this version generation process, the version index at the source node, denoted by $V_0(t)$, evolves according to $V_0(t+1)=V_0(t)+w(t)$. 

\subsubsection{Temporal Order of Events}
The system follows a temporal ordering within each time slot. Transmission attempts occur at the beginning of the slot after observing the current state of the system, while channel outcomes and packet receptions are determined during the slot (see Fig. \ref{fig_NodesEventOrder}). Moreover, the generation of a new version at the source is completed at the end of the slot (see Fig. \ref{fig_SourceEventOrder}). Under this timing structure, if node $i$ receives a new version during slot $t$, it can start forwarding that version at the beginning of slot $t+1$.
\begin{figure}[!t]
    \centering
    \includegraphics[width=0.95\linewidth]{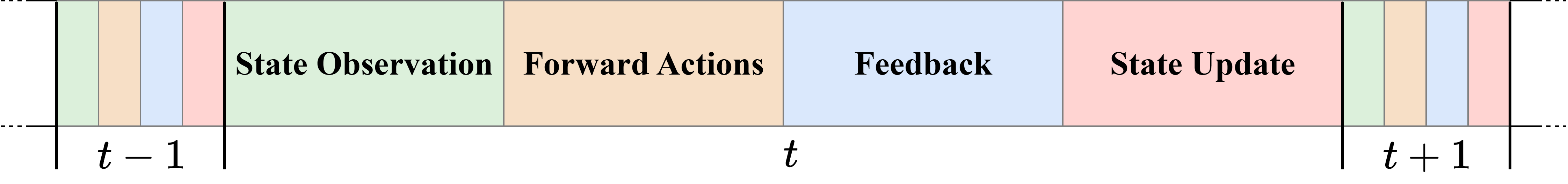}
    \vspace{-4pt}
    \caption{Order of events at the network nodes.}
    \label{fig_NodesEventOrder}
    \bigskip
    \centering
    \includegraphics[width=0.95\linewidth]{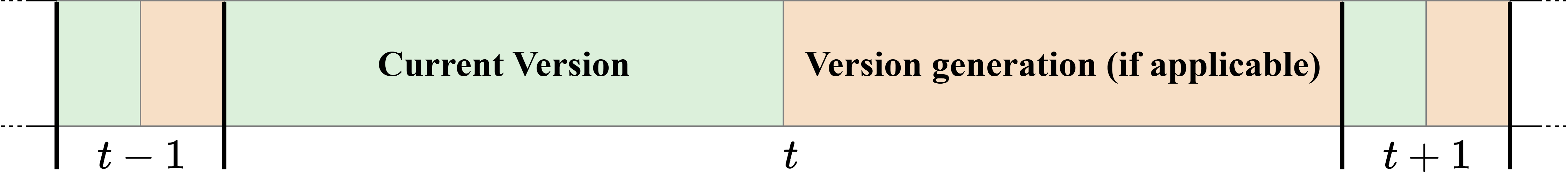}
    \vspace{-4pt}
    \caption{Order of events at the source node.}
    \label{fig_SourceEventOrder}
\end{figure}

\subsection{Update Policies and ACK Feedback}
\label{Sec_UpdatePolicies}

\subsubsection{Update Policy at the Source}
The transmission behavior of the source node is characterized by an update policy $\phi=\{\mathrm{a}_0^\phi(t)\}_{t=0}^{\infty}$, where $\mathrm{a}_0^\phi(t) \in \{0,1\}$. In particular, $\mathrm{a}_0^\phi(t)=1$ indicates that the source transmits its current update at time $t$, whereas $\mathrm{a}_0^\phi(t)=0$ indicates that the source remains idle. This policy must satisfy an update rate constraint imposed by the resource limitations at the source. We define here (and use throughout the paper) two classes of stationary policies for the source: 

\begin{itemize}
    \item \emph{Threshold policy}: 
    Under a threshold policy $\phi_{\mathcal{T}}$, the source transmits only when the \textsc{VAoI} at node $1$, $\Delta^{\phi_\mathcal{T}}_1(t)$, reaches or exceeds a threshold $\Delta_{\mathcal{T}}$:
    \begin{align*}
        \mathrm{a}^{\phi_{\mathcal{T}}}_0(t) = \mathbf{1} \left\{\Delta^{\phi_\mathcal{T}}_1(t) \geq \Delta_{\mathcal{T}} \right\},
    \end{align*}
    where $\mathbf{1}\{\cdot\}$ is the indicator function. This policy requires \textsc{ACK} feedback from node $1$, which enables the source to calculate $\Delta^{\phi_\mathcal{T}}_1(t)$ and compare it with $\Delta_{\mathcal{T}}$.

    \item \emph{Randomized policy}: 
    Under a randomized policy $\phi_{\mathcal{R}}$, the source transmits independently in each time slot with probability $\uppsi$, i.e., $\mathbb{P}\!\left(\mathrm{a}^{\phi_{\mathcal{R}}}_0(t) \!=\! 1\right) \!=\! \uppsi$.
    This policy is version-agnostic and does not require feedback from node~$1$. 
\end{itemize}

\subsubsection{Forwarding Policy at the Intermediate Nodes}
Intermediate nodes, which are not rate-constrained, adapt their forwarding policies to maximize update delivery based on \textsc{ACK} feedback. If feedback is absent, intermediate nodes transmit continuously in each time slot, as they are unaware of previous successes. We call this policy \emph{always-update} and denote it by $\bm{\theta}^{\textsc{au}}=\theta_{1:N}^{\textsc{au}}=\{ \theta_1^{\textsc{au}},\theta_2^{\textsc{au}},\dots,\theta_N^{\textsc{au}} \}$. If feedback is available, we define two policies: \emph{\textsc{VAoI}-aware} and \emph{feedback-aware}, denoted by $\bm{\theta}^{\textsc{va}}=\theta_{1:N}^{\textsc{va}}$ and $\bm{\theta}^{\textsc{fa}}=\theta_{1:N}^{\textsc{fa}}$, respectively. According to the \textsc{VAoI}-aware policy, an intermediate node keeps transmitting until its \textsc{VAoI} is lower than that of the subsequent node, i.e., until its latest version has been delivered to the next node: 
\begin{align*}
    \mathrm{a}^{\phi,\bm{\theta}^{\textsc{va}}}_i \!(t) \!=\! \mathbf{1} \left\{\Delta^{\phi,\bm{\theta}^{\textsc{va}}}_i \!(t) \!<\! \Delta^{\phi,\bm{\theta}^{\textsc{va}}}_{i+1} \!(t) \right \}, \quad i \!\in\! \{1,2,\dots,N\}.
\end{align*}

Under the feedback-aware policy, a node remains idle only when its stored update has been successfully delivered and no newer update has arrived, ensuring persistent retransmissions until success, irrespective of \textsc{VAoI}. Formally, for node $i \in \{1,2,\dots,N\}$ at time $t+1$:
\begin{align*}
    \mathrm{a}^{\phi, \bm{\theta}^{\textsc{fa}}}_i \!(t+1) \!=\! \left\{ \mathrm{a}^{\phi, \bm{\theta}^{\textsc{fa}}}_{i-1} \!(t) h_{i-1}(t) \right\} \vee \left\{ \mathrm{a}^{\phi, \bm{\theta}^{\textsc{fa}}}_i \!(t) \big(1\!-\!h_i(t)\big) \right\},
\end{align*}
where $\vee$ denotes the logical \textsc{OR} operator. 

\textit{Remark:} The feedback-aware policy is version-agnostic and differs from the \textsc{VAoI}-aware forwarding policy. Specifically, under a feedback-aware policy, the same version may be received multiple times and is forwarded upon each reception. In contrast, a \textsc{VAoI}-aware policy avoids retransmissions by preventing a previously delivered version from being transmitted again. However, when each received update from the source corresponds to a distinct new version, for example, under the threshold policy $\phi = \phi_{\mathcal{T}}$ with $\Delta_{\mathcal{T}} > 0$ at the source, as considered in this paper, the two policies become identical. Consequently, the feedback-aware and \textsc{VAoI}-aware forwarding policies coincide when the source operates under a threshold policy, and for notational convenience, we may therefore use $\bm{\theta}^{\textsc{fa}}$ to also represent $\bm{\theta}^{\textsc{va}}$ in this case.

These policies naturally cover four scenarios with respect to version awareness and feedback availability at intermediate nodes, as summarized in Table~\ref{tab_Policies}.

\begin{table}[tb]
\caption{Update policies and Feedback Requirements}
\label{tab_Policies}
\centering
\setlength{\tabcolsep}{3pt} 
\renewcommand{\cellset}{\renewcommand{\arraystretch}{0.9}} 

\begin{tabular}{@{}llccc@{}}
\toprule
\textbf{\makecell[l]{Scenarios}} & 
  \textbf{\makecell[l]{Policies\\ $(\phi,\bm{\theta})$}} & \textbf{\makecell[l]{Variables \\ $\left(\bar{\eta}^{\phi,\bm{\theta}}_i,\bar{\Delta}^{\phi,\bm{\theta}}_i\right)$}} &
  \textbf{\makecell[c]{\textsc{ACK} from\\ node $1$}} & 
  \textbf{\makecell[c]{\textsc{ACK} from\\ node $i \geq 2$}} \\ \midrule
\makecell[l]{1: Version-aware,\\ \hphantom{1:~}no feedback} & 
  $(\phi_\mathcal{T},\bm{\theta}^{\textsc{au}})$ & $\left(1,\bar{\Delta}^{\phi_{\mathcal{T}}}_i\right)$ &
  \cmark & 
  \xmark \\ 
\makecell[l]{2: Version-aware,\\ \hphantom{2:~}feedback enabled} & 
  $(\phi_\mathcal{T},\bm{\theta}^{\textsc{fa}})$ & $\left(\bar{\eta}^{\phi_{\mathcal{T}}}_i,\bar{\Delta}^{\phi_{\mathcal{T}}}_i\right)$ &
  \cmark & 
  \cmark \\ 
\makecell[l]{3: Version-agnostic,\\ \hphantom{3:~}no feedback} & 
  $(\phi_\mathcal{R},\bm{\theta}^{\textsc{au}})$ & $\left(1,\bar{\Delta}^{\phi_{\mathcal{R}}}_i\right)$ &
  \xmark & 
  \xmark \\ 
\makecell[l]{4: Version-agnostic,\\ \hphantom{4:~}feedback enabled} & 
  $(\phi_\mathcal{R},\bm{\theta}^{\textsc{fa}})$ & $\left(\bar{\eta}^{\phi_{\mathcal{R}}}_i,\bar{\Delta}^{\phi_{\mathcal{R}}}_i\right)$ &
  \xmark & 
  \xmark \\ \bottomrule
\end{tabular}
\end{table}

\subsection{Metrics: Update Rate and VAoI}

Let us define the cumulative number of transmissions by node $i \in \{0,1,2,\dots,N\}$ up to time $t$ as $\eta^{\phi,\bm{\theta}}_i(t) \in \mathbb{N}_0$, given by:
\begin{align}
    \eta^{\phi,\bm{\theta}}_i(t+1) = \eta^{\phi,\bm{\theta}}_i(t) + \mathrm{a}^{\phi,\bm{\theta}}_i(t),
\end{align}
where $\eta^{\phi,\bm{\theta}}_i(0)=0$, $\mathbb{N}_0$ denotes the set of non-negative integers, i.e., $\mathbb{N}_0 = \{0, 1, 2, \dots\}$, and $\phi$ and $\bm{\theta}$ denote the update and forwarding policies adopted at the source and intermediate nodes, respectively. Then, the average number of transmissions, or \emph{update rate}, at node $i$ is defined as follows:
\begin{gather}
    \bar{\eta}^{\phi,\bm{\theta}}_i =  \lim_{T \to \infty} \frac{1}{T} \mathbb{E} \!\left[\eta^{\phi,\bm{\theta}}_i(T)\right] = \lim_{T \to \infty} \frac{1}{T} \mathbb{E} \!\left[\sum_{t=0}^{T-1} \mathrm{a}^{\phi,\bm{\theta}}_i(t) \right]\!. \label{eqn_AvgRateNode}
\end{gather}

The \textsc{VAoI} at node $i$ under policy $(\phi,\bm{\theta})$, denoted by $\Delta^{\phi,\bm{\theta}}_i(t)$, is defined as the number of versions by which node $i$ lags behind the source node $0$:
\begin{align}
    \Delta^{\phi,\bm{\theta}}_i(t) = V_0(t) - V^{\phi,\bm{\theta}}_i(t),
\end{align}
where $V_0(t)$ denotes the current version at the source node $0$, and $V^{\phi,\bm{\theta}}_i(t)$ denotes the version stored at node $i$. The average \textsc{VAoI} at node $i$ under policy $(\phi,\bm{\theta})$, is defined as follows:
\begin{gather}
    \bar{\Delta}^{\phi,\bm{\theta}}_i = \lim_{T \to \infty} \frac{1}{T} \mathbb{E} \!\left[\sum_{t=0}^{T-1} \Delta^{\phi,\bm{\theta}}_i(t) \right]\!.
    \label{eqn_AvgVAoINode}
\end{gather}

Fig.~\ref{fig_VAoI_Evolution} illustrates a sample evolution of \textsc{VAoI} in a two-hop network. The source generates new versions at time slots $0$, $2$, $3$, $6$, and $8$, and its version index $V_0(t)$ increases by one at the beginning of each subsequent time slot. The versions at node $1$, denoted by $V^{\phi}_1(t)$, are updated according to the update policy and the transmissions from node $0$, i.e., $\phi$. If a transmission at time $t$ is successful, then $V^{\phi}_1(t+1) \!=\! V_0(t)$; otherwise, $V^{\phi}_1(t+1) \!=\! V^{\phi}_1(t)$. The corresponding \textsc{VAoI}, $\Delta^{\phi}_1(t) \!=\! V_0(t) \!-\! V^{\phi}_1(t)$, is listed in parentheses in the second row of the table and plotted in blue. Similarly, node $2$ stores the versions $V^{\phi,\bm{\theta}}_2(t)$ received from node $1$, which are transmitted in every slot in this example. In this case, $V^{\phi,\bm{\theta}}_2(t+1) \!=\! V^{\phi}_1(t)$ if the transmission at time $t$ is successful; otherwise, $V^{\phi,\bm{\theta}}_2(t+1) \!=\! V^{\phi,\bm{\theta}}_2(t)$. The \textsc{VAoI} at node $2$, $\Delta^{\phi,\bm{\theta}}_2(t) \!=\! V_0(t) \!-\! V^{\phi,\bm{\theta}}_2(t)$, is shown in parentheses in the third row of the table and plotted in green in Fig. \ref{fig_VAoI_Evolution}.

	\begin{figure}[!t]
		\centering
		\includegraphics[scale=0.46]{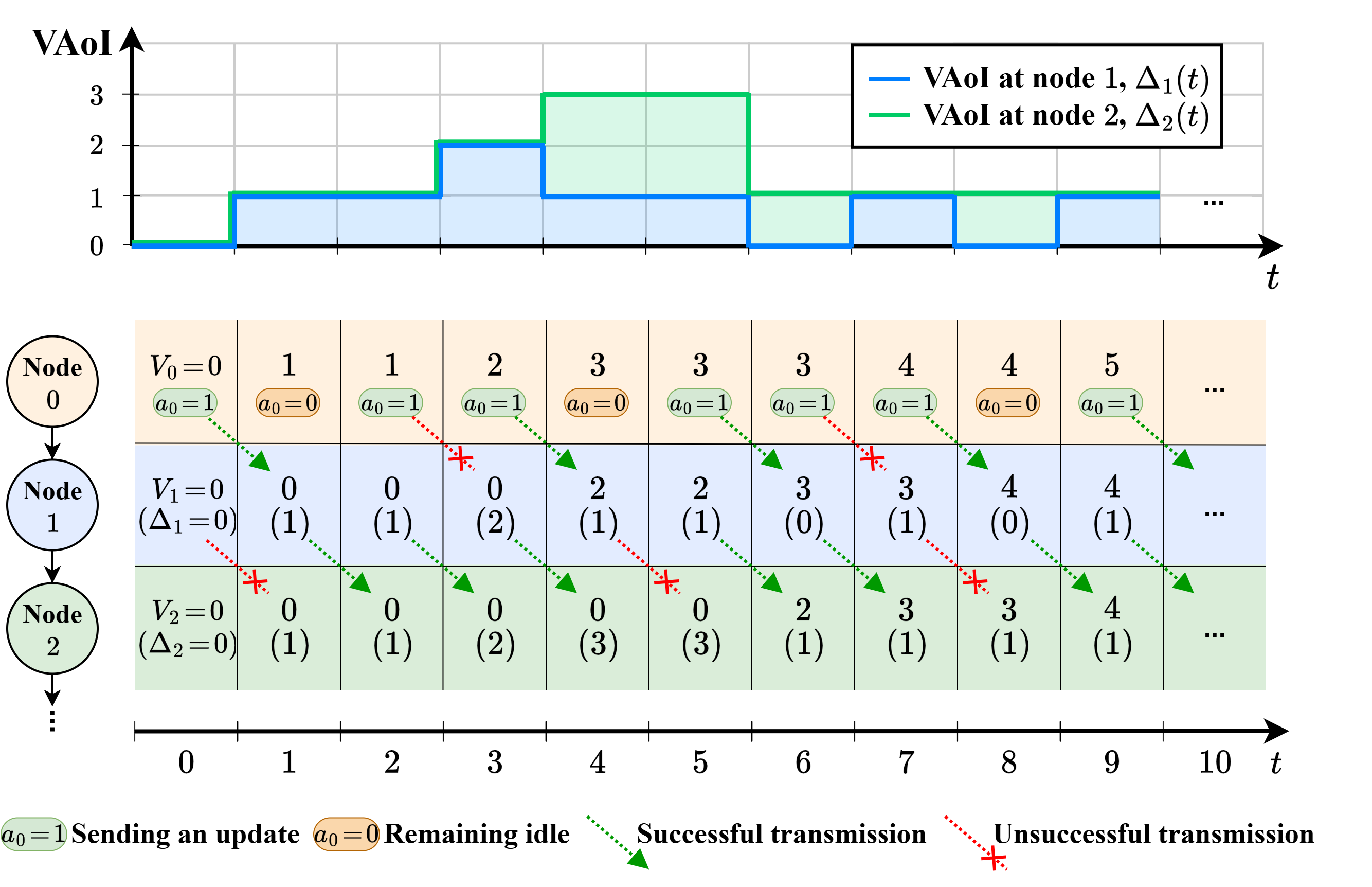} 
		\caption{Evolution of \textsc{VAoI} within the network over time.}
		\label{fig_VAoI_Evolution}
	\end{figure}

\subsection{Problem Formulation}
\label{sec_ProblemFormulation}
Our goal is to minimize \textsc{VAoI} at the destination node while satisfying the rate constraint at the source and minimizing the communication overhead at the intermediate nodes. This problem is formulated as a bi-level optimization problem, where the upper-level problem determines the optimal source policy $\phi^*$, and the lower-level problem determines the optimal update rates at the intermediate nodes under the optimal forwarding policies $\bm{\theta^*}=\theta_{1:N}^*=\{\theta_1^*, \theta_2^*,\dots,\theta_N^*\}$.  The bi-level optimization problem is formally defined as follows:
\medskip

\textbf{$\mathcal{P}_1$: \textsc{VAoI} Optimization}
\begin{align}
\min_{\phi \in \Phi} \quad & \bar{\Delta}_{N+1}^{\phi,\bm{\theta}^{\textsc{au}}} \\
\text{subject to} \quad & \bar{\eta}_0^\phi \le \psi    \label{eq_constraint}
\end{align}
where $\bm{\theta}^{\textsc{au}}=\theta_{1:N}^{\textsc{au}}$ is the always-update forwarding policy, and the optimal policy at the source and the optimal \textsc{VAoI} at the destination are denoted by $\phi^*$ and $\bar{\Delta}_{N+1}^{\phi^*\!,\bm{\theta}^{\textsc{au}}}$, respectively.
\medskip

\textbf{$\mathcal{P}_2$: Update Rate Optimization}
\begin{align}
\min_{\bm{\theta}} \quad & \left\{ \bar{\eta}_1^{\phi^*\!,\bm{\theta}}, \bar{\eta}_2^{\phi^*\!,\bm{\theta}}, \dots, \bar{\eta}_N^{\phi^*\!,\bm{\theta}} \right\} \\
\text{subject to} \quad & \bar{\Delta}_{N+1}^{\phi^*\!,\bm{\theta}} = \bar{\Delta}_{N+1}^{\phi^*\!,\bm{\theta}^{\textsc{au}}}
\end{align}
where $\bm{\theta}=\theta_{1:N}=\{ \theta_1, \theta_2,\dots,\theta_N \}$. This is a multi-objective optimization problem, for which the optimal solution, in the absence of feedback mechanisms, is the \emph{always-update} policy. When \textsc{ACK} feedback is available at the intermediate links, the optimal forwarding policy is a \emph{\textsc{VAoI}-aware} policy, whereby each node $i$ transmits until the successful delivery of its currently stored version, i.e., $\bm{\theta}^*=\bm{\theta}^{\textsc{va}}.$

\begin{table}[tbp]
\centering
\caption{Notation and Variable Descriptions}
\label{tab_notation}
\begin{tabular}{cl}
\toprule
\textbf{Variable} & \textbf{Description} \\ \midrule
$t$ & Discrete time index \\
$i, n$ & Network node indices \\
$N$ & Number of intermediate (relaying) nodes \\
$w(t), p_g$ & Version generation process and its probability \\
$V_i(t)$ & Version index at node $i$ at time $t$ \\
$h_i(t)$ & Success indicator of link $i \rightarrow i+1$ \\
$p_i, \epsilon_i$ & Success and erasure probabilities of link $i \rightarrow i+1$ \\
$\upsigma_{i}(t)$ & Successful arrival process at node $i$ \\ \midrule
$\phi$ & Update policy at source node $0$ \\
$\theta_i$ & Update policy at node $i$ \\
$\phi_\mathcal{T}, \Delta_\mathcal{T}$ & Threshold policy and its \textsc{VAoI} threshold \\
$\phi_\mathcal{R}, \psi$ & Randomized policy and its transmission probability \\ \midrule
$\mathrm{a}_i^\phi(t), a_i$ & Update action process and its value at node $i$ \\
$\Delta_i^\phi(t), \delta_i$ & \textsc{VAoI} process and its value at node $i$ \\
$\bar{\Delta}_i^\phi$ & Average \textsc{VAoI} at node $i$ under policy $\phi$ \\
$\eta_i^\phi(t), \bar{\eta}_i^\phi$ & Cumulative updates and update rate at node $i$ \\
$\pi_{\Delta_i}^\phi(\delta_i)$ & Stationary distribution of \textsc{VAoI} at node $i$ \\
$\pi_{\mathrm{a}_i}^\phi(a_i)$ & Stationary distribution of update action at node $i$ \\ \bottomrule
\end{tabular}
\end{table}

\textit{Notation remark:} For the average update rate, we omit the superscript $\bm{\theta}$ in $\bar{\eta}^{\phi,\bm{\theta}}_i$. This is without loss of generality since, under the always-update policy at intermediate nodes, $\bar{\eta}^{\phi,\bm{\theta}^{\textsc{au}}}_i = 1$. Hence, when analyzing update rates at intermediate nodes, we restrict attention to the policy $\bm{\theta}^{\textsc{fa}}$ and drop the superscript for notational simplicity.
Similarly, we omit the superscript $\bm{\theta}$ in $\bar{\Delta}^{\phi,\bm{\theta}}_i$ since, under all considered policies $\bm{\theta}^{\textsc{au}}$, $\bm{\theta}^{\textsc{va}}$, and $\bm{\theta}^{\textsc{fa}}$, intermediate nodes forward all received versions. While the number of forwarded updates may differ across policies, the resulting \textsc{VAoI} is unchanged. The adopted notation conventions are summarized in Table~\ref{tab_Policies}.
For further simplicity, we also omit the superscript $\phi$ from $\mathrm{a}^{\phi}_i(t)$ and $\Delta^{\phi}_i(t)$ in sections where the analysis applies to general source update policies, i.e., independently of the specific choice of $\phi$.
Moreover, we define $\bar{p}_g = 1 - p_g$ and $\bar{p}_i = 1 - p_i$, for $i = 0,1,2,\dots,N$. A comprehensive list of notations is provided in Table~\ref{tab_notation}.

\section{VAoI Optimization}

    We analyze the objective function of $\mathcal{P}_1$ and derive an expression for the \textsc{VAoI} at the destination node, $\bar{\Delta}_{N+1}$, as a function of the \textsc{VAoI} at the first intermediate node, $\bar{\Delta}_{1}$. To begin, we show that the \textsc{VAoI} at each node can be expressed in terms of the \textsc{VAoI} at the preceding node.
	
	\begin{proposition}
		\label{Prop_VAoInodei}
		The \textsc{VAoI} at node $i\!+\!1$ is given by:
		\begin{align}
			\label{eqn_VAoInodei}
			\Delta_{i+1}(t) &= \Delta_{i}(t-m_i) + \xi_{m_i}, \quad i=1,2,\dots,N,
		\end{align}
		where $m_i$ is a Geometric Random Variable (RV) with parameter $p_i$, and  $\xi_{m_i} \mid m_i$ is a Binomial RV with parameters $m_i$ and $p_g$:
		\begin{align}
			\mathbb{P}\!\left(m_i\!=\!\mathscr{\ell}\right)&\!=\!(1\!-\!p_i)^{\mathscr{\ell}-1}p_i, \quad \hfill \mathscr{\ell}=1,2,\dots. \label{eq_GeometricPMF} \\
            \mathbb{P}\!\left(\xi_{m_i}\!=\!r \!\mid\! m_i=k \right)&\!=\!\mathcal{B}(r; k, p_g)\!=\!\binom{k}{r}p_g^{r}(1-p_g)^{k-r}, \notag \\
            &r=0,1,\dots,k. \label{eq_BinomialPMF} 
		\end{align}
	\end{proposition}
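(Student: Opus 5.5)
Under the always-update forwarding policy, node $i$ transmits its current content in every slot. Our plan is a backward-looking argument from time $t$: the content of node $i+1$ at time $t$ is exactly what node $i$ held at the start of the slot of the most recent successful transmission on link $i\to i+1$. Define $m_i$ as the number of slots since that last success, so that the success occurred in slot $t-m_i$. By the timing convention, reception during slot $t-m_i$ places node $i$'s content $V_i(t-m_i)$ at node $i+1$ at the beginning of slot $t-m_i+1$. No success occurs in slots $t-m_i+1,\dots,t-1$, so $V_{i+1}(t)=V_i(t-m_i)$.

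We then decompose
\begin{align*}
\Delta_{i+1}(t) &= V_0(t)-V_i(t-m_i) \\
&= \bigl[V_0(t-m_i)-V_i(t-m_i)\bigr]+\bigl[V_0(t)-V_0(t-m_i)\bigr].
\end{align*}
The first bracket is $\Delta_i(t-m_i)$. The second bracket is $\xi_{m_i}=\sum_{s=t-m_i}^{t-1} w(s)$, since $V_0(t+1)=V_0(t)+w(t)$.

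Next we identify the distributions.
\begin{itemize}
\item Since node $i$ attempts in every slot and the $h_i(s)$ are i.i.d.\ Bernoulli$(p_i)$, looking backward from slot $t-1$, the index of the first success is geometric on $\{1,2,\dots\}$ with parameter $p_i$. This gives \eqref{eq_GeometricPMF}. We work in steady state ($t\to\infty$), so truncation at time $0$ is negligible; alternatively we assume the process started in the infinite past.
\item Conditioned on $m_i=k$, the second bracket is a sum of $k$ values $w(s)$. These are i.i.d.\ Bernoulli$(p_g)$ and independent of the channel process $\{h_i\}$, which determines $m_i$. Hence $\xi_{m_i}\mid m_i=k\sim\mathcal{B}(\cdot;k,p_g)$, giving \eqref{eq_BinomialPMF}.
\end{itemize}

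The main obstacle is justifying the independence structure. We need $\xi_{m_i}$ to have the Binomial law given $m_i$, and we want the representation to be usable later, which requires $\Delta_i(t-m_i)$ to be suitably independent of $m_i$ and $\xi_{m_i}$.
\begin{itemize}
\item $m_i$ depends only on $\{h_i(s)\}$. Link $i$ is never used upstream of node $i$, so $\{h_i(s)\}$ is independent of $\Delta_i(\cdot)$ and of $w(\cdot)$.
\item $\xi_{m_i}$ involves the versions generated in slots $t-m_i,\dots,t-1$. These lie after the epoch that determines $\Delta_i(t-m_i)$, because the version created in slot $t-m_i$ is only completed at the end of that slot.
\item One subtlety: $\Delta_i(t-m_i)$ and these $w$'s are not independent when the source policy is version-aware. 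This is the case for the threshold policy, where node $i$'s subsequent content depends on future $w$'s. However, the proposition only asserts the marginal and conditional laws of $m_i$ and $\xi_{m_i}$, so we only need $w$ to be independent of $m_i$.
\end{itemize}
If it is needed later, we would also check that, by stationarity, $\Delta_i(t-m_i)$ has the stationary law of $\Delta_i$. This holds because $m_i$ is independent of node $i$'s process.

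Finally, we note that the same argument applies verbatim under $\bm{\theta}^{\textsc{fa}}$ and $\bm{\theta}^{\textsc{va}}$, since these forward every received version. We would remark that $m_i$ should then be read as the number of slots since the last effective delivery, which has the same law.
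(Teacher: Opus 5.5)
Your proof is correct and follows the same route as the paper's Appendix~\ref{Appen_Proof_VAoInodei}. You look back from $t$ to the last slot in which link $i$ succeeded, write $V_{i+1}(t)=V_i(t-m_i)$, and split $V_0(t)-V_i(t-m_i)$ into $\Delta_i(t-m_i)$ plus the $\mathcal{B}(\cdot;m_i,p_g)$ count $\sum_{s=t-m_i}^{t-1}w(s)$. Your indexing and your explicit use of the independence of $\{h_i\}$ from $\{w\}$ are somewhat cleaner than the paper's case-by-case sketch.

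Two side remarks need correcting.

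\begin{itemize}
\item Your ``subtlety'' bullet is wrong, and it contradicts the bullet before it. Under any non-anticipative source policy, the threshold policy included, $\Delta_i(t-k)$ is a function of $\{w(s),h_0(s),\dots,h_{i-1}(s)\}_{s<t-k}$ only. Only node $i$'s content after $t-k$ depends on later $w$'s. So $\xi_{m_i}$ is in fact conditionally independent of $\Delta_i(t-m_i)$ given $m_i$, which is exactly what Theorem~\ref{Theorem_RecDistVAoIi} uses.
\item Under $\bm{\theta}^{\textsc{va}}$ or $\bm{\theta}^{\textsc{fa}}$, keep $m_i$ defined through the indicators $h_i(s)$, whether or not node $i$ actually transmits. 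This is valid because node $i$ is idle only when $V_i=V_{i+1}$. By contrast, the time since the last \emph{actual} delivery is not Geometric$(p_i)$.
\end{itemize}
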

	
	\begin{proof}
		The proof is provided in Appendix \ref{Appen_Proof_VAoInodei}.
	\end{proof}
	
	\begin{lemma}
		\label{Lemma_VAoIDestNode}
		The \textsc{VAoI} at the destination node is given by:
		\begin{align}
			\label{eq_VAoInodeNp1}
			\Delta_{N+1}(t)=\Delta_{1} (t-\tau_{N})+\beta_{N},
		\end{align}
		where $\tau_{N} = \sum_{i=1}^{N} \!m_{i}$ and $\beta_{N}=\sum_{i=1}^{N} \!\xi_{m_i}$ are two RVs with expected values 
		$\mathbb{E} \left[\tau_N \right] = \sum_{i=1}^{N} \frac{1}{p_i}$ and $\mathbb{E} \left[\beta_N \right] = p_g \sum_{i=1}^{N} \frac{1}{p_i}$.
	\end{lemma}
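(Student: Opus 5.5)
The plan is to unroll Proposition~\ref{Prop_VAoInodei} backward from the destination to node $1$ and then take expectations term by term.

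\textbf{Step 1 (unrolling).} Apply Proposition~\ref{Prop_VAoInodei} with $i=N$:
\begin{align*}
\Delta_{N+1}(t)=\Delta_N(t-m_N)+\xi_{m_N}.
\end{align*}
Next apply it with $i=N-1$ at the shifted time $t-m_N$:
\begin{align*}
\Delta_N(t-m_N)=\Delta_{N-1}(t-m_N-m_{N-1})+\xi_{m_{N-1}}.
\end{align*}
Proceeding by induction on the number of hops, suppose that for some $k$
\begin{align*}
\Delta_{N+1}(t)=\Delta_{N+1-k}\Big(t-\sum_{i=N+1-k}^{N} m_i\Big)+\sum_{i=N+1-k}^{N}\xi_{m_i}.
\end{align*}
One more application of Proposition~\ref{Prop_VAoInodei} at the shifted time gives the same statement for $k+1$. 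Taking $k=N$ yields \eqref{eq_VAoInodeNp1} with $\tau_N=\sum_{i=1}^N m_i$ and $\beta_N=\sum_{i=1}^N \xi_{m_i}$.

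\textbf{Step 2 (justifying the substitution).} This is the step I expect to need the most care. The Proposition is stated at a fixed time $t$, but here it is applied at the random time $t-m_N$. I would argue as follows.
\begin{itemize}
\item The backward representation for hop $i$ depends only on the channel outcomes $h_i(\cdot)$ of link $i$ and on the version arrivals $w(\cdot)$ in the window $(t-\sum_{j\ge i} m_j,\, t-\sum_{j>i} m_j]$.
\item Under always-update (or feedback-aware) forwarding, $m_i$ is the backward time since the last successful transmission on link $i$.
\item Because channels are independent across links and slots and i.i.d.\ in time, conditioning on $m_{i+1},\dots,m_N$ leaves the backward success process of link $i$ before that random instant i.i.d.\ Bernoulli$(p_i)$. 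Hence $m_i$ is again Geometric$(p_i)$, and it is independent of the later $m_j$.
\item The version arrivals in disjoint windows are independent Bernoulli$(p_g)$. So, given $m_i$, the count $\xi_{m_i}$ is Binomial$(m_i,p_g)$, and the $\xi_{m_i}$ are conditionally independent given the $m_i$.
\end{itemize}
This is a stopping-time (strong Markov) argument on the reversed i.i.d.\ processes. I would state it briefly rather than formalize it fully.

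\textbf{Step 3 (expectations).} By linearity and \eqref{eq_GeometricPMF},
\begin{align*}
\mathbb{E}[\tau_N]=\sum_{i=1}^N \mathbb{E}[m_i]=\sum_{i=1}^N \frac{1}{p_i}.
\end{align*}
By the tower property and \eqref{eq_BinomialPMF}, $\mathbb{E}[\xi_{m_i}]=\mathbb{E}\big[\mathbb{E}[\xi_{m_i}\mid m_i]\big]=p_g\,\mathbb{E}[m_i]=p_g/p_i$. Summing over $i$ gives
\begin{align*}
\mathbb{E}[\beta_N]=p_g\sum_{i=1}^N \frac{1}{p_i}.
\end{align*}
Only linearity is needed for these means, so independence across hops is not used here. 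It matters only for the later distributional results.
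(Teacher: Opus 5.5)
Your proposal is correct and follows the paper's proof. The paper likewise stacks the recursions of Proposition~\ref{Prop_VAoInodei} for hops $1,\dots,N$ and telescopes them into \eqref{eq_VAoInodeNp1}, then gets $\mathbb{E}[\tau_N]$ by linearity and $\mathbb{E}[\beta_N]$ by the tower rule with $\mathbb{E}[\xi_{m_i}\mid m_i]=m_i p_g$; your Step~2 (justifying substitution at random shifted times via independence across links and of version arrivals) is extra rigor the paper leaves implicit and is not needed for the two means.
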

	\begin{proof}
		The proof is provided in Appendix \ref{Appen_Proof_VAoIDestNode}.
	\end{proof}
	\vspace{-2pt}
	The variable $\tau_N$ is the \emph{relaying delay} of each version from node $1$ to node $N\!+\!1$ through $N$ relaying nodes, while $\beta_N$ represents the number of version generations at the source during this delay. 
	
	\begin{theorem}
		\label{Theorem_AvgVAoIlastNode}
		The average \textsc{VAoI} of the destination node which is $N+1$ hops away from the source is given by:
		\begin{align}
			\label{eqn_AvgVAoIlastNode}
			\bar{\Delta}_{N+1} = \bar{\Delta}_1 + p_g \sum_{i=1}^{N} \frac{1}{p_i}. 
		\end{align}
	\end{theorem}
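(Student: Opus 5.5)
The plan is to take expectations in Lemma~\ref{Lemma_VAoIDestNode} and then pass to the time average defined in \eqref{eqn_AvgVAoINode}. The starting point is the representation \eqref{eq_VAoInodeNp1}, $\Delta_{N+1}(t)=\Delta_1(t-\tau_N)+\beta_N$. Averaging over $t=0,\dots,T-1$, dividing by $T$, and taking expectations gives
\begin{align*}
\frac{1}{T}\mathbb{E}\Big[\sum_{t=0}^{T-1}\Delta_{N+1}(t)\Big] = \frac{1}{T}\mathbb{E}\Big[\sum_{t=0}^{T-1}\Delta_1(t-\tau_N)\Big] + \frac{1}{T}\sum_{t=0}^{T-1}\mathbb{E}[\beta_N].
\end{align*}
By Lemma~\ref{Lemma_VAoIDestNode}, the second term equals $\mathbb{E}[\beta_N]=p_g\sum_{i=1}^N 1/p_i$ for every $T$. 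This constant term yields the additive correction in \eqref{eqn_AvgVAoIlastNode}.

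The key step is to show that the first term converges to $\bar{\Delta}_1$. The relevant structure is as follows. The random shift $\tau_N$ is a sum of independent geometric variables $m_i$, generated by the channel outcomes $h_1,\dots,h_N$ on links downstream of node~$1$. These outcomes are independent of the source process $w(\cdot)$, of $h_0(\cdot)$, and of the source policy, since $\phi$ depends only on $\Delta_1$. Conditioning on $\tau_N=k$, the first term becomes $\frac{1}{T}\sum_{t}\mathbb{E}[\Delta_1(t-k)]$. This is a shifted time average of $\Delta_1$ and differs from $\frac{1}{T}\sum_t\mathbb{E}[\Delta_1(t)]$ by at most $\frac{1}{T}\sum_{t<k}$ boundary terms. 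For each fixed $k$, these boundary terms vanish as $T\to\infty$.

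The main obstacle is justifying this interchange of limit and expectation over $\tau_N$. To handle it, I would bound the boundary contribution by $\frac{k}{T}\sup_t\mathbb{E}[\Delta_1(t)]$. Here $\sup_t\mathbb{E}[\Delta_1(t)]<\infty$ for any stationary policy with positive update rate, because $\Delta_1$ is a positive-recurrent chain with geometric tails. Since $\mathbb{E}[\tau_N]=\sum_i 1/p_i<\infty$, dominated convergence then gives the limit $\bar{\Delta}_1$.

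There is an alternative argument in stationarity. The process $\Delta_1(\cdot)$ is ergodic, and in steady state $\Delta_1(t-\tau_N)$, with $\tau_N$ independent of the $\Delta_1$ process, has the stationary law of $\Delta_1$. Hence $\mathbb{E}[\Delta_1(t-\tau_N)]=\bar{\Delta}_1$ directly. Either route combines with the first paragraph to give $\bar{\Delta}_{N+1}=\bar{\Delta}_1+p_g\sum_{i=1}^N 1/p_i$.
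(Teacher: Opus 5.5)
Your proposal is correct and follows essentially the same approach as the paper. Both take expectations in the representation of Lemma~\ref{Lemma_VAoIDestNode}, condition on $\tau_N$ (which is independent of the $\Delta_1$ process), use shift-invariance of $\Delta_1$ in the limit, and add $\mathbb{E}[\beta_N]=p_g\sum_{i=1}^N 1/p_i$; your alternative stationarity argument is exactly the paper's step (b).

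The only difference is in how the limit is handled. Your main route works directly with the time average in \eqref{eqn_AvgVAoINode} and justifies exchanging the limit with the sum over $\tau_N$ explicitly, via $\sup_t\mathbb{E}[\Delta_1(t)]<\infty$ and $\mathbb{E}[\tau_N]<\infty$. The paper instead takes the pointwise limit $t\to\infty$ and appeals to ergodicity of the chain. Your version is slightly more careful, since it does not need aperiodicity.
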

	
	\begin{proof}
		The proof is provided in Appendix \ref{Appen_Proof_AvgVAoIlastNode}.
	\end{proof}

    Equation \eqref{eqn_AvgVAoIlastNode} shows that each additional node contributes an additive expected \textsc{VAoI} penalty proportional to its expected relaying delay.
    
    According to Theorem \ref{Theorem_AvgVAoIlastNode}, the \textsc{VAoI} optimization problem $\mathcal{P}_1$ for the destination node is transformed into a \textsc{VAoI} optimization problem at the first intermediate node, as follows:
    \medskip

    \textbf{$\mathcal{P}^\prime_1$: \textsc{VAoI} Optimization for Node 1}
        \begin{align}
            \min_{\phi \in \Phi} \quad & \bar{\Delta}_{1}^{\phi} \\
            \text{subject to} \quad & \bar{\eta}_0^\phi \le \psi
    \end{align}

    Next, we demonstrate that the optimal on-off policy at the source for optimizing the \textsc{VAoI} at node $1$, under an update rate constraint, is a threshold policy.

    \begin{lemma}
		\label{Lemma_OptimalCMDP}
		The optimal on-off scheduling policy for problem $\mathcal{P}^\prime_1$ is a threshold policy.
	\end{lemma}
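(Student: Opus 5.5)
We cast $\mathcal{P}^\prime_1$ as a constrained Markov decision process (CMDP). The state is the \textsc{VAoI} at node $1$, $\Delta_1(t)\in\mathbb{N}_0$. The action is $\mathrm{a}_0(t)\in\{0,1\}$, the per-stage cost is $\Delta_1(t)$, and the per-stage constraint cost is $\mathrm{a}_0(t)$.

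The dynamics follow from the event ordering. Write $\delta=\Delta_1(t)$.
\begin{itemize}
\item If $\mathrm{a}_0(t)=1$ and $h_0(t)=1$, node $1$ receives $V_0(t)$, so $\Delta_1(t+1)=w(t)$.
\item Otherwise, $\Delta_1(t+1)=\delta+w(t)$.
\end{itemize}
Hence, for $\delta\ge 0$, idling moves the chain to $\delta+1$ w.p. $p_g$ and keeps it at $\delta$ w.p. $\bar p_g$. Transmitting moves it to $1$ w.p. $p_0p_g$, to $0$ w.p. $p_0\bar p_g$, and otherwise behaves as idling.

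By the standard Lagrangian approach for CMDPs with countable state space and unbounded but nonnegative cost (Altman; Sennott), an optimal stationary policy exists. It is deterministic except possibly for a randomization at a single state, and it is optimal for the unconstrained problem with cost $\Delta_1(t)+\lambda\mathrm{a}_0(t)$ for a suitable multiplier $\lambda\ge 0$.

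The steps, in order, are as follows.
\begin{enumerate}
\item Verify Sennott's conditions for the average-cost problem with cost $\delta+\lambda a$. A reference policy with finite average cost exists, for example always-update, whose \textsc{VAoI} is geometric-like. The one-stage cost is nonnegative and grows without bound. Together these give a relative value function $h_\lambda$ satisfying the average-cost optimality inequality/equation.
\item Show by value iteration that the discounted value function $V_{\alpha,\lambda}(\delta)$ is nondecreasing in $\delta$. This is immediate, since the cost is increasing and the transitions are stochastically monotone: both actions shift the state up by $w(t)$. Passing to the limit $\alpha\to1$ transfers monotonicity to $h_\lambda$.
\item Read off the threshold structure. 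The action-value difference between transmitting and idling at state $\delta$ is
\[
\lambda - p_0\Big(p_g\big[h(\delta+1)-h(1)\big]+\bar p_g\big[h(\delta)-h(0)\big]\Big).
\]
Transmitting is optimal iff this is $\le 0$. Because $h$ is nondecreasing, the bracketed term is nondecreasing in $\delta$. Consequently the set of states where transmitting is optimal has the form $\{\delta\ge\Delta_{\mathcal T}\}$.
\item Conclude that the constrained optimum is a threshold policy. If the constraint is slack, take $\lambda=0$, which gives the always-update policy, i.e., threshold $0$. Otherwise the optimum is a threshold policy, possibly mixing two adjacent thresholds at the boundary state so that $\bar\eta_0^\phi=\psi$ holds with equality.
\end{enumerate}

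The main obstacle is the countable, unbounded state space in step 1. We must ensure the vanishing-discount limit exists and that $h$ inherits monotonicity. We would handle this through Sennott's conditions, using the always-update policy as the reference. That policy makes the chain positive recurrent with the required bounded hitting costs to state $0$.

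A secondary subtlety is the randomization at the threshold boundary. We would state the lemma as allowing this, and rely on the later closed-form analysis to pick the threshold that meets the rate constraint.
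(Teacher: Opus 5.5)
Your proposal is correct and follows the same route as the paper. Both arguments use Lagrangian relaxation of the CMDP, then show by induction on value iteration that the value function is nondecreasing (both actions preserve the order of states under coupled $w(t),h_0(t)$), which makes the transmit-minus-idle action difference nonincreasing in $\Delta_1$ and hence yields a threshold. The only difference is technical: the paper truncates the state space at $\Delta_{\max}$ and invokes finite-state results (Altman's Corollary~12.2 and weak accessibility from Bertsekas) on the average-cost Bellman equation, whereas you work on the full countable state space via Sennott's vanishing-discount conditions, which treats the unbounded state more carefully.
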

    
    \begin{proof}
		The proof is provided in Appendix \ref{Sec_CMDP}.
	\end{proof}

    Therefore, the optimal update policy at the source for solving the \textsc{VAoI} optimization problem $\mathcal{P}_1$ is a threshold policy, i.e., $\phi^*=\phi^*_\mathcal{T}$. In the next subsection, we derive the closed-form expressions for $\bar{\Delta}_{1}^{\phi_\mathcal{T}}$ and $\bar{\eta}_0^{\phi_\mathcal{T}}$, and solve the problem $\mathcal{P}^\prime_1$ to obtain the optimal threshold $\Delta^*_{\mathcal{T}}$ in closed form. We also derive the complete stationary distribution of the \textsc{VAoI} and the update rates for node $1$ and the other network nodes under the version-aware communication policies $(\phi_\mathcal{T},\bm{\theta}^{\textsc{va}})$.

\subsection{Analysis of $\Delta^{\phi_{\mathcal{T}}}_1$ and $\bar{\eta}^{\phi_{\mathcal{T}}}_0$}

Under a threshold policy $\phi_{\mathcal{T}}$, node $0$ controls the transmission actions $\mathrm{a}^{\phi_{\mathcal{T}}}_0(t) \in \{0,1\}$ based on the \textsc{VAoI} at node $1$ relative to a threshold $\Delta_\mathcal{T}$:
$\mathrm{a}^{\phi_{\mathcal{T}}}_0(t) = \mathbf{1}\{\Delta^{\phi_{\mathcal{T}}}_1(t) \geq \Delta_\mathcal{T}\}.$
Under this policy, the evolution of $\Delta^{\phi_{\mathcal{T}}}_1(t)$ is independent of the actions of other nodes and is fully characterized by $\mathrm{a}^{\phi_{\mathcal{T}}}_0(t)$ and the system processes, including $w(t)$ and $h_0(t)$. Hence, $\Delta^{\phi_{\mathcal{T}}}_1(t)$ can be modeled as a \textsc{DTMC}, and its marginal stationary distribution, denoted by $\pi^{\phi_{\mathcal{T}}}_{\Delta_1}(\delta_1)$, gives the long-run probability of being in state $\delta_1 \in \mathbb{N}_0$. It is obtained from the balance equations:
\begin{align}
    \pi^{\phi_{\mathcal{T}}}_{\Delta_1}(\delta_1) = \sum_{\delta_1^\prime \in \mathbb{N}_0} \mathbb{P}(\delta_1 | \delta_1^\prime)\,\pi^{\phi_{\mathcal{T}}}_{\Delta_1}(\delta_1^\prime),
\end{align}
with $\sum_{\delta_1 \in \mathbb{N}_0} \pi^{\phi_{\mathcal{T}}}_{\Delta_1}(\delta_1) = 1$, where $\mathbb{P}(\delta_1 | \delta_1^\prime)$ is the transition probability from $\Delta^{\phi_{\mathcal{T}}}_1(t)=\delta_1^\prime$ to $\Delta^{\phi_{\mathcal{T}}}_1(t+1)=\delta_1$. The average \textsc{VAoI} at node $1$ and the update rate at node $0$ then follow from this distribution as:
\begin{align}
    \bar{\Delta}^{\phi_{\mathcal{T}}}_1 &\!=\! \mathbb{E}\left[\Delta^{\phi_{\mathcal{T}}}_1\!(t)\right]\!=\!\sum_{\delta_1 \in \mathbb{N}_0} \delta_1 \pi^{\phi_{\mathcal{T}}}_{\Delta_1}(\delta_1), \\
    \bar{\eta}^{\phi_{\mathcal{T}}}_0 &\!=\! \mathbb{E}\left[\mathrm{a}^{\phi_{\mathcal{T}}}_0\!(t)\right]\!=\!\mathbb{P}(\Delta^{\phi_{\mathcal{T}}}_1\!(t) \geq \Delta_\mathcal{T})\!=\!\!\!\sum_{\delta_1 \!\geq\! \Delta_\mathcal{T}} \!\pi^{\phi_{\mathcal{T}}}_{\Delta_1}(\delta_1).
\end{align}

The \textsc{VAoI} at node $1$, $\Delta^{\phi_{\mathcal{T}}}_1(t)$, measures the difference between the version index available at node $1$ and the version at the source. Its dynamics are governed by version generation at node $0$ and transmissions over the erasure channel. In each slot $t$, the source generates a new version with probability $p_g$, following the Bernoulli process $w(t) \in \{0, 1\}$. Node $0$ transmits its current version, which is successfully received with probability $p_0$.

The \textsc{VAoI} evolution depends on these two processes. If node $1$ successfully receives an update at time $t$, its buffer is refreshed to the source’s current version and the \textsc{VAoI} resets as $\Delta^{\phi_{\mathcal{T}}}_1(t+1) = w(t)$. Otherwise, the \textsc{VAoI} is carried over and may increase due to new version generation, i.e.,
$\Delta^{\phi_{\mathcal{T}}}_1(t+1) = \Delta^{\phi_{\mathcal{T}}}_1(t) + w(t)$. Thus, the \textsc{VAoI} remains unchanged if no new version is generated ($w(t)=0$), and increases by one if a new version is generated ($w(t)=1$), provided that node $1$ does not receive a successful update.

\begin{proposition}
		\label{Prop_StateProbThr}
		The steady-state probability of the \textsc{VAoI} at node $1$ under a threshold policy with threshold $\Delta_\mathcal{T}$ is given by the following:
		\begin{itemize}[leftmargin=1em]
			\item For $\Delta_\mathcal{T} \in \{0,1\}$: 
            \begin{align}
			\pi^{\phi_{\mathcal{T}}}_{\Delta_1}(\delta_1) \!=\!\!
			     \begin{cases}
				\frac{\bar{p}_g p_0}{\alpha_\mathcal{T}}, & \delta_1=0,\\
				 \frac{p_g p_0}{\alpha_\mathcal{T}^2} \left [ \frac{p_g \bar{p}_0}{\alpha_\mathcal{T}} \right ]^{\delta_1-1}, & \delta_1 \geq 1.
			     \end{cases}
		      \end{align}
			\item For $\Delta_\mathcal{T} \geq 2$:
			\begin{align}
				\pi^{\phi_{\mathcal{T}}}_{\Delta_1}(\delta_1) \!=\!\!
				\begin{cases}
					\frac{\bar{p}_g p_0}{(\Delta_\mathcal{T}\!-\!1)p_0+\alpha_\mathcal{T}}, & \delta_1\!=\!0,\\
					\frac{p_0}{(\Delta_\mathcal{T}\!-\!1)p_0+\alpha_\mathcal{T}}, & 1 \!\leq\! \delta_1 \!<\! \Delta_\mathcal{T}, \\
					\frac{p_g}{\alpha_\mathcal{T}} \pi^{\phi_{\mathcal{T}}}_{\Delta_1} \!\! \left(\Delta_\mathcal{T}\!-\!1\right) \! \left (\!\frac{p_g \bar{p}_0}{\alpha_\mathcal{T}} \!\right)^{\delta_1\!-\!\Delta_\mathcal{T}}\!\!\!\!, & \delta_1 \!\geq\! \Delta_\mathcal{T},
				\end{cases}
			\end{align}
		\end{itemize}
		where $\alpha_\mathcal{T}=1- \bar{p}_g \bar{p}_0$.
	\end{proposition}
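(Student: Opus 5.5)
Write the one-step transition law of the DTMC $\Delta^{\phi_{\mathcal{T}}}_1(t)$ and solve the balance equations directly, exploiting the birth-type structure: the chain either resets to $\{0,1\}$ or moves up by at most one. From the dynamics described before the statement, in a state $\delta<\Delta_\mathcal{T}$ the source is idle. The chain therefore stays at $\delta$ w.p.\ $\bar p_g$ and moves to $\delta+1$ w.p.\ $p_g$. In a state $\delta\ge\Delta_\mathcal{T}$ the source transmits:
\begin{itemize}
\item on success (w.p.\ $p_0$) the chain goes to $w(t)$, i.e., to $0$ w.p.\ $p_0\bar p_g$ and to $1$ w.p.\ $p_0p_g$;
\item on failure the chain stays at $\delta$ w.p.\ $\bar p_0\bar p_g$ and goes to $\delta+1$ w.p.\ $\bar p_0 p_g$.
\end{itemize}
Note that $1-\bar p_0\bar p_g=\alpha_\mathcal{T}$ is the probability of leaving a transmitting state, other than by a success followed by a return to $1$. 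I will first treat $\Delta_\mathcal{T}\ge 2$ and then handle $\Delta_\mathcal{T}\in\{0,1\}$ as a separate but analogous case.

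\emph{Case $\Delta_\mathcal{T}\ge2$.} For states $\delta> \Delta_\mathcal{T}$, the only inflow is from $\delta-1$ via a failed transmission with a new version, and the self-loop has weight $\bar p_0\bar p_g$. Balance gives $\pi(\delta)\alpha_\mathcal{T}=p_g\bar p_0\,\pi(\delta-1)$, which yields the geometric tail with ratio $p_g\bar p_0/\alpha_\mathcal{T}$. At $\delta=\Delta_\mathcal{T}$ the inflow comes from the idle state $\Delta_\mathcal{T}-1$ w.p.\ $p_g$, so $\pi(\Delta_\mathcal{T})\alpha_\mathcal{T}=p_g\pi(\Delta_\mathcal{T}-1)$, matching the stated prefactor $p_g/\alpha_\mathcal{T}$. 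For the idle states $2\le\delta\le\Delta_\mathcal{T}-1$, balance reads $p_g\pi(\delta)=p_g\pi(\delta-1)$, so these probabilities are all equal. For state $0$, the outflow is $p_g\pi(0)$ and the inflow is $p_0\bar p_g\sum_{\delta\ge\Delta_\mathcal{T}}\pi(\delta)$. The cleanest route is a cut argument between $\{0,\dots,\delta\}$ and its complement, for $\delta<\Delta_\mathcal{T}$: the flow $p_g\pi(\delta)$ upward equals the total success flow $p_0 S$ from the transmitting set back into $\{0,1\}$, where $S=\sum_{\delta\ge\Delta_\mathcal{T}}\pi(\delta)$. Hence $p_g\pi(\delta)=p_0S$ for all $1\le\delta\le\Delta_\mathcal{T}-1$. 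Combined with state $0$'s balance, $p_g\pi(0)=p_0\bar p_g S$, this gives $\pi(0)=\bar p_g\pi(1)$. Summing the geometric tail gives $S=\frac{p_g}{\alpha_\mathcal{T}}\pi(\Delta_\mathcal{T}-1)\cdot\frac{\alpha_\mathcal{T}}{\alpha_\mathcal{T}-p_g\bar p_0}$. Using $\alpha_\mathcal{T}-p_g\bar p_0=p_0$, this simplifies to $S=\frac{p_g}{p_0}\pi(\Delta_\mathcal{T}-1)$, which is consistent with the cut equation. Now set $c=\pi(1)$. Then normalization reads $\bar p_g c+(\Delta_\mathcal{T}-1)c+\frac{p_g}{p_0}c=1$. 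Multiplying through by $p_0$ gives $c\,[(\Delta_\mathcal{T}-1)p_0+\bar p_gp_0+p_g]=p_0$. Since $\bar p_gp_0+p_g=1-\bar p_g\bar p_0=\alpha_\mathcal{T}$, this yields exactly the stated denominator.

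\emph{Case $\Delta_\mathcal{T}\in\{0,1\}$.} Every state $\delta\ge1$ transmits. When $\Delta_\mathcal{T}=0$, state $0$ also transmits, but a transmission from $0$ does not change the outcome: the chain still goes to $w(t)$. The two thresholds therefore have the same transition law from $0$, namely to $0$ w.p.\ $\bar p_g$ and to $1$ w.p.\ $p_g$, and can be handled together. Balance at $\delta\ge2$ gives $\alpha_\mathcal{T}\pi(\delta)=p_g\bar p_0\pi(\delta-1)$. At $\delta=1$, balance gives $\alpha_\mathcal{T}\pi(1)=p_g\pi(0)+p_0p_g\sum_{\delta\ge1}\pi(\delta)$. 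At $\delta=0$, it gives $p_g\pi(0)=p_0\bar p_g(1-\pi(0))$, so $\pi(0)=p_0\bar p_g/(p_g+p_0\bar p_g)=\bar p_gp_0/\alpha_\mathcal{T}$. Substituting back, $\alpha_\mathcal{T}\pi(1)=p_g[\pi(0)+p_0(1-\pi(0))]$. The bracket simplifies to $p_0[\bar p_g+\alpha_\mathcal{T}-\bar p_gp_0]/\alpha_\mathcal{T}=p_0/\alpha_\mathcal{T}$, which gives $\pi(1)=p_gp_0/\alpha_\mathcal{T}^2$.

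The main obstacle is the bookkeeping at the boundary states $0$, $1$ and $\Delta_\mathcal{T}$, where flows from the reset mechanism and from the idle chain merge. Using cut (flow-conservation) equations rather than raw row-by-row balance should keep this algebra short. Throughout, the identity $\alpha_\mathcal{T}-p_g\bar p_0=p_0$ is what makes the tail sums collapse.
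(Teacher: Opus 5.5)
Your proposal is correct and follows essentially the paper's route: write the balance equations of the DTMC of $\Delta^{\phi_\mathcal{T}}_1$, observe that a transmission from state $0$ changes nothing (so $\Delta_\mathcal{T}=0$ and $\Delta_\mathcal{T}=1$ coincide), and solve with normalization. Your cut equation $p_g\pi(\delta)=p_0S$ together with $\alpha_\mathcal{T}-p_g\bar p_0=p_0$ is just a tidier version of the paper's ``simplifying'' step; the one small difference is that the paper gets the $\Delta_\mathcal{T}\in\{0,1\}$ case by reduction to the randomized policy with $\psi=1$, whereas you solve it directly.
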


    \begin{proof}
        The proof is provided in Appendix \ref{Appen_Proof_ThrPolicy_SSProbs}.
    \end{proof}

\begin{lemma}
    \label{Lemma_AvgD1Eta0}
    The average \textsc{VAoI} at node $1$ and the update rate at node $0$ under a threshold policy with threshold $\Delta_\mathcal{T} \geq 1$ are given by:
    \begin{align}
        \label{eqn_AvgD1Threshold}
        \bar{\Delta}^{\phi_{\mathcal{T}}}_1 &= \frac{1}{2} \frac{(\Delta_\mathcal{T}\!-\!1)\Delta_\mathcal{T} p_0}{(\Delta_\mathcal{T}\!-\!1)p_0 + \alpha_\mathcal{T}} + \frac{p_g}{p_0}, \\
        \bar{\eta}^{\phi_{\mathcal{T}}}_0 &= \frac{p_g}{(\Delta_\mathcal{T}-1)p_0+\alpha_\mathcal{T}},
    \end{align}
   where $\alpha_\mathcal{T}=1- \bar{p}_g \bar{p}_0$.
\end{lemma}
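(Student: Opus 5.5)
The plan is to compute both quantities directly from the stationary distribution in Proposition~\ref{Prop_StateProbThr}, handling $\Delta_\mathcal{T}\geq 2$ and $\Delta_\mathcal{T}=1$ separately. For $\Delta_\mathcal{T}=1$, the formula for $\delta_1\geq 1$ in the first bullet coincides with the tail formula of the second bullet, and the two expressions should agree. Throughout, set $D=(\Delta_\mathcal{T}-1)p_0+\alpha_\mathcal{T}$ and $\rho = p_g\bar p_0/\alpha_\mathcal{T}$. Since $\alpha_\mathcal{T}-p_g\bar p_0 = 1-\bar p_0 = p_0$, we have $1-\rho = p_0/\alpha_\mathcal{T}$, so every geometric tail sum collapses cleanly.

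\textbf{Update rate.} By the threshold rule, $\bar\eta_0^{\phi_\mathcal{T}}=\sum_{\delta_1\geq\Delta_\mathcal{T}}\pi^{\phi_{\mathcal{T}}}_{\Delta_1}(\delta_1)$. For $\Delta_\mathcal{T}\geq 2$ this equals
\[
\frac{p_g}{\alpha_\mathcal{T}}\cdot\frac{p_0}{D}\cdot\frac{1}{1-\rho}=\frac{p_g}{D}.
\]
For $\Delta_\mathcal{T}=1$ it is $1-\pi(0)=1-\bar p_g p_0/\alpha_\mathcal{T}$. Using $\alpha_\mathcal{T}-\bar p_g p_0 = p_g$, this equals $p_g/\alpha_\mathcal{T}=p_g/D$. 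A cleaner alternative is a flow-balance argument: in steady state the VAoI is increased by version generation at rate $p_g$ and reset by successful deliveries. However, the direct sum is already short, so I would use it.

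\textbf{Average VAoI.} Split $\sum_{\delta_1}\delta_1\pi(\delta_1)$ into the flat part and the tail.
\begin{itemize}
\item The flat part, over $1\le\delta_1\le\Delta_\mathcal{T}-1$, contributes $\frac{p_0}{D}\cdot\frac{(\Delta_\mathcal{T}-1)\Delta_\mathcal{T}}{2}$. This is exactly the first term of \eqref{eqn_AvgD1Threshold}.
\item For the tail, write $\delta_1=\Delta_\mathcal{T}+k$. The tail then contributes $\Delta_\mathcal{T}\bar\eta_0+\frac{p_g p_0}{\alpha_\mathcal{T} D}\sum_k k\rho^k$. Using $\sum_k k\rho^k=\rho/(1-\rho)^2=p_g\bar p_0\alpha_\mathcal{T}/p_0^2$, the tail equals $\frac{p_g}{D}\big(\Delta_\mathcal{T}+\frac{p_g\bar p_0}{p_0}\big)$.
\end{itemize}

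\textbf{Main obstacle.} The key step is showing that this tail collapses to $p_g/p_0$. That requires the identity
\[
p_g\big(\Delta_\mathcal{T}p_0+p_g\bar p_0\big)=p_g D .
\]
Since $D=\Delta_\mathcal{T}p_0+\alpha_\mathcal{T}-p_0$, this amounts to $\alpha_\mathcal{T}-p_0=p_g\bar p_0$. That holds because $\alpha_\mathcal{T}=1-\bar p_0+p_g\bar p_0$. So the tail is exactly $p_g/p_0$, which proves \eqref{eqn_AvgD1Threshold}. For $\Delta_\mathcal{T}=1$, the flat part is empty and the same tail computation applies verbatim with $D=\alpha_\mathcal{T}$, giving $p_g/p_0$. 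As a sanity check, this matches the always-transmit intuition: the age is geometric between successful deliveries, averaging $p_g/p_0$.
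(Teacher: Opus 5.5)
Your proposal is correct and follows essentially the paper's route: you split $\sum_{\delta_1}\delta_1\pi^{\phi_{\mathcal{T}}}_{\Delta_1}(\delta_1)$ into the flat part and the geometric tail of Proposition~\ref{Prop_StateProbThr}, and you sum that same tail to get $\bar{\eta}^{\phi_{\mathcal{T}}}_0$. The differences are only cosmetic: you make the paper's unstated ``algebraic manipulation'' explicit through $\alpha_\mathcal{T}=p_0+p_g\bar{p}_0$ (hence $1-\rho=p_0/\alpha_\mathcal{T}$), and you handle $\Delta_\mathcal{T}=1$ directly from the first bullet, where the paper instead appeals to the randomized policy with $\psi=1$.
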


\begin{proof}
    The proof is provided in Appendix \ref{Appen_Proof_AvdD1Eta1Thr}.
\end{proof}

It is evident that $\Delta_\mathcal{T}=0$ results in an \emph{always-update} policy at node $0$, where $\bar{\Delta}^{\phi_{\mathcal{T}}}_1=\frac{p_g}{p_0}$ and $\bar{\eta}^{\phi_{\mathcal{T}}}_0=1$.
Using Lemma \ref{Lemma_AvgD1Eta0}, the optimization problem $\mathcal{P}^\prime_1$ can be reformulated as follows:
\medskip 

    \textbf{$\mathcal{P}^{''}_1$: \textsc{VAoI} Optimization for Node 1}
        \begin{align}
            \min_{\Delta_\mathcal{T} \in \mathbb{N}} \quad & \frac{1}{2} \frac{(\Delta_\mathcal{T}\!-\!1)\Delta_\mathcal{T} p_0}{(\Delta_\mathcal{T}\!-\!1)p_0 + \alpha_\mathcal{T}} + \frac{p_g}{p_0} \\
            \text{subject to} \quad & \frac{p_g}{(\Delta_\mathcal{T}-1)p_0+\alpha_\mathcal{T}} \le \psi
    \end{align}
    where its solution is obtained in the following theorem.

\begin{theorem}
		\label{Theorem_OptimalThreshold}
		The optimal threshold policy that minimizes the average \textsc{VAoI} at the network under the rate constraint at the source is a randomized mixture of two threshold policies with thresholds $\Delta_{\mathcal{T}}^\ast$ and $\Delta_{\mathcal{T}}^\ast - 1$, applied with probabilities $\kappa$ and $1 - \kappa$, respectively. The optimal threshold is:
		\begin{align}
			\label{eqn_OptimalThreshold}
			\Delta_{\mathcal{T}}^\ast= \Bigg\lceil \frac{p_g}{p_0}\left( \frac{1}{\psi} - 1 +p_0\right) \!\! \Bigg\rceil,
		\end{align}
		and the corresponding mixing probability $\kappa$ is:
		\begin{align}
			\label{Optimal_gamma}
			\kappa = \frac{f(\Delta_{\mathcal{T}}^\ast \!-\! 1) - \psi}{f(\Delta_{\mathcal{T}}^\ast \!-\! 1) - f(\Delta_{\mathcal{T}}^\ast)},
		\end{align}
		where $f(\Delta_{\mathcal{T}}) = \frac{p_g}{(\Delta_{\mathcal{T}}-1)p_0+\alpha_{\mathcal{T}}}$ for $\Delta_{\mathcal{T}} \geq 1$, and $f(0) = 1$.
	\end{theorem}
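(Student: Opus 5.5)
The plan is to solve $\mathcal{P}^{''}_1$ by exploiting monotonicity of the two functions in Lemma~\ref{Lemma_AvgD1Eta0}, and then use the standard constrained-MDP fact (Lemma~\ref{Lemma_OptimalCMDP} and its appendix) that an optimal policy for a single-constraint problem randomizes between at most two deterministic policies that differ in a single state.

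First, I will show that $f(\Delta_\mathcal{T})=\bar{\eta}^{\phi_\mathcal{T}}_0$ is strictly decreasing in $\Delta_\mathcal{T}$. For $\Delta_\mathcal{T}\ge1$ the denominator $(\Delta_\mathcal{T}-1)p_0+\alpha_\mathcal{T}$ is increasing. At the boundary, $f(1)=p_g/\alpha_\mathcal{T}<1=f(0)$ because $\alpha_\mathcal{T}=1-\bar p_g\bar p_0\ge p_g$. I will also show that $g(\Delta_\mathcal{T})=\bar{\Delta}^{\phi_\mathcal{T}}_1$ is nondecreasing. Writing $x=\Delta_\mathcal{T}-1$, the first term is $\tfrac12 x(x+1)p_0/(xp_0+\alpha_\mathcal{T})$, whose discrete difference is easily checked to be positive. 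Hence the constraint $f(\Delta_\mathcal{T})\le\psi$ is equivalent to $\Delta_\mathcal{T}\ge$ some minimal integer. The best deterministic threshold is the smallest feasible one, and whenever $\psi<1$ the constraint should be active.

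Second, I will solve $f(\Delta)=\psi$ for a real $\Delta$:
\[
(\Delta-1)p_0+\alpha_\mathcal{T}=\frac{p_g}{\psi}
\quad\Longrightarrow\quad
\Delta=\frac{1}{p_0}\left(\frac{p_g}{\psi}-\alpha_\mathcal{T}\right)+1 .
\]
Substituting $\alpha_\mathcal{T}=p_g+p_0-p_gp_0$ gives $\Delta=\frac{p_g}{p_0}\bigl(\frac{1}{\psi}-1+p_0\bigr)$. Taking the ceiling gives the smallest integer with $f(\Delta_\mathcal{T}^\ast)\le\psi$, which is \eqref{eqn_OptimalThreshold}. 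By minimality we also have $f(\Delta_\mathcal{T}^\ast-1)\ge\psi$. The convention $f(0)=1$ covers the case $\Delta_\mathcal{T}^\ast=1$.

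Third, to meet the constraint with equality and exploit the slack, I will mix thresholds $\Delta_\mathcal{T}^\ast$ and $\Delta_\mathcal{T}^\ast-1$ with probabilities $\kappa$ and $1-\kappa$. These two deterministic policies differ only in the action at state $\Delta_\mathcal{T}^\ast-1$. Following the CMDP/Lagrangian argument of Appendix~\ref{Sec_CMDP}, the optimal Lagrange multiplier $\lambda^\ast$ makes both threshold policies optimal for the Lagrangian $\bar\Delta_1+\lambda^\ast\bar\eta_0$. I will then choose the mixture to satisfy the rate constraint with equality. Interpreting the mixture as a time-sharing (or state-randomization) scheme whose rate is linear in the mixing weight, I set
\[
\kappa f(\Delta_\mathcal{T}^\ast)+(1-\kappa)f(\Delta_\mathcal{T}^\ast-1)=\psi ,
\]
which gives \eqref{Optimal_gamma}. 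Because $f(\Delta_\mathcal{T}^\ast)\le\psi\le f(\Delta_\mathcal{T}^\ast-1)$, the value of $\kappa$ lies in $[0,1]$. Finally, Theorem~\ref{Theorem_AvgVAoIlastNode} transfers optimality at node~$1$ to the destination, since the additive term $p_g\sum_i 1/p_i$ does not depend on $\phi$.

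The main obstacle is rigorously justifying that the rate of the mixture is the convex combination of $f$ values. Randomizing the action at state $\Delta_\mathcal{T}^\ast-1$ changes the stationary distribution nonlinearly. I would handle this by invoking the Beutler--Ross result for constrained MDPs: an optimal policy mixes two Lagrangian-optimal deterministic policies, and the mixing parameter is chosen so the constraint binds. As a fallback, I would interpret $\kappa$ as a time-sharing weight across regenerative cycles, for which linearity of the long-run averages holds.
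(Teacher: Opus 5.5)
Your proposal is correct and follows the paper's proof essentially step for step: monotonicity of $f$ and of $\bar{\Delta}^{\phi_{\mathcal{T}}}_1$ in $\Delta_{\mathcal{T}}$, solving $f(\Delta_{\mathcal{T}})\le\psi$ for the real lower bound whose ceiling is $\Delta_{\mathcal{T}}^\ast$, and mixing the adjacent thresholds via $\kappa f(\Delta_{\mathcal{T}}^\ast)+(1-\kappa)f(\Delta_{\mathcal{T}}^\ast-1)=\psi$, with optimality of the mixture delegated to Beutler--Ross. You also note that this linear formula for $\kappa$ holds for a policy-level (time-sharing) mixture rather than for randomizing the action at state $\Delta_{\mathcal{T}}^\ast-1$; the paper passes over this silently, so it is a useful refinement rather than a gap (one nit: the constraint need not bind when $\psi\ge p_g/\alpha_{\mathcal{T}}$, which is harmless because thresholds $0$ and $1$ give the same \textsc{VAoI}).
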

	
	\begin{proof}
		The proof is provided in Appendix \ref{Appen_Proof_OptimalThr}.
	\end{proof}
	
	The resulting optimal average \textsc{VAoI} at node $1$ under the \emph{mixed threshold policy} is given by:
    \begin{align}
        \label{eqn_OptimalVAoI_Mix}
        \bar{\Delta}^{\phi^\ast}_1 = \kappa \left[\bar{\Delta}^{\phi_{\mathcal{T}}}_1 \right]_{\Delta_{\mathcal{T}}=\Delta_{\mathcal{T}}^\ast} + \left(1-\kappa\right) \left[\bar{\Delta}^{\phi_{\mathcal{T}}}_1 \right]_{\Delta_{\mathcal{T}}=\Delta_{\mathcal{T}}^\ast-1},
    \end{align}
    where $\left[\bar{\Delta}^{\phi_{\mathcal{T}}}_1 \right]_{\Delta_{\mathcal{T}}=\Delta_{\mathcal{T}}^\ast}$ and $\left[\bar{\Delta}^{\phi_{\mathcal{T}}}_1 \right]_{\Delta_{\mathcal{T}}=\Delta_{\mathcal{T}}^\ast-1}$ are obtained from \eqref{eqn_AvgD1Threshold} by setting $\Delta_{\mathcal{T}}$ to $\Delta_{\mathcal{T}}^\ast$ and $\Delta_{\mathcal{T}}^\ast-1$, respectively. The resulting optimal \textsc{VAoI} at the destination node $N+1$ is then given by \eqref{eqn_AvgVAoIlastNode} as follows:
    \begin{align}
        \bar{\Delta}^{\phi^\ast}_{N+1} = \bar{\Delta}^{\phi^\ast}_1 + p_g \sum_{i=1}^{N} \frac{1}{p_i}.
    \end{align}

    This is the optimal solution to problem $\mathcal{P}_1$, representing the best achievable \textsc{VAoI} at the destination node, as well as at all network nodes:
    $\bar{\Delta}^{\phi^\ast}_{n+1} = \bar{\Delta}^{\phi^\ast}_1 + p_g \sum_{i=1}^{n} \frac{1}{p_i}$, for $n\in\{1,2,\dots,N\}$.
    We now derive the \textsc{VAoI} at the network nodes; specifically, we obtain the distribution of the \textsc{VAoI} for nodes $2,3,\dots,N+1$. 

    \subsection{Distribution of $\{\Delta^{\phi}_i\}_{i=2}^{N}$}

    \begin{theorem}
        \label{Theorem_RecDistVAoIi}
        The stationary distribution of the \textsc{VAoI} at node $i+1$ is given by:
        \begin{align}
            \label{eqn_Dist_Di}
            \pi^{\phi}_{\Delta_{i+1}}(\delta) &= \sum_{\mathscr{\ell}=1}^{\infty} \mathbb{P}(m_i\!=\!\mathscr{\ell}) \left[ \mathcal{B}(\delta; \mathscr{\ell}, p_g) \circledast \pi^{\phi}_{\Delta_{i}}(\delta) \right] \\
            &= \sum_{\mathscr{\ell}=1}^{\infty} \sum_{r=0}^{\mathscr{\ell}} \mathbb{P}(m_i\!=\!\mathscr{\ell}) \mathbb{P}(\xi_{m_i}\!=\!r \!\mid\! m_i\!=\!\mathscr{\ell}) \pi^{\phi}_{\Delta_{i}}(\delta-r), \notag
        \end{align}
        for $i \in \{1,2,\dots,N\}$ and a source update policy $\phi$, where $\circledast$ denotes the convolution operator.
    \end{theorem}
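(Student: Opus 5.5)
The plan is to build directly on the sample-path relation of Proposition~\ref{Prop_VAoInodei}, $\Delta_{i+1}(t)=\Delta_i(t-m_i)+\xi_{m_i}$, and take stationary laws on both sides. The distribution of $\Delta_{i+1}$ then becomes a mixture, over the geometric lag $m_i$, of the law of $\Delta_i$ convolved with a Binomial$(m_i,p_g)$ term. Two things make this work. First, conditioned on $m_i=\ell$, the increment $\xi_{m_i}$ counts version generations $w(t-\ell),\dots,w(t-1)$ in the window $[t-\ell,t)$. Second, $\Delta_i(t-\ell)$ depends only on the generation and channel processes up to slot $t-\ell-1$, together with the policy decisions made on that past.

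The first key step is to make this independence precise. Under the always-update (or feedback-aware) forwarding at node $i$, I would interpret $m_i$ as the backward time to the most recent successful transmission on link $i\to i+1$: $m_i=\ell$ means $h_i(t-\ell)=1$ and $h_i(t-\ell+1)=\dots=h_i(t-1)=0$. Its law is $\mathbb{P}(m_i=\ell)=\bar p_i^{\,\ell-1}p_i$, which is exactly \eqref{eq_GeometricPMF}. The event $\{m_i=\ell\}$ is determined by link-$i$ channel outcomes alone. Those outcomes are independent of $w(\cdot)$, of the links $0,\dots,i-1$, and hence of $\Delta_i(t-\ell)$. Here I use that, under always-update forwarding, the action of node $i$ does not depend on $h_i$ at all. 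Also, $\Delta_i(t-\ell)$ depends only on $w(s)$ for $s<t-\ell$, while $\xi_{m_i}$ uses $w(s)$ for $s\in[t-\ell,t-1]$. These are disjoint blocks of an i.i.d.\ Bernoulli sequence, so given $m_i=\ell$ the pair $(\Delta_i(t-\ell),\xi_\ell)$ consists of independent variables, with $\xi_\ell\sim\mathcal{B}(\cdot;\ell,p_g)$.

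One subtlety is that the source policy $\phi$ may be feedback-driven, as in the threshold policy. Even so, $\Delta_i(t-\ell)$ remains a function of the past only. Future $w$'s and link-$i$ outcomes cannot influence it, so the independence argument goes through for a general $\phi$.

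With independence in hand, I would condition on $m_i$ and apply the law of total probability:
\begin{align*}
\mathbb{P}(\Delta_{i+1}(t)=\delta)=\sum_{\ell\ge1}\mathbb{P}(m_i=\ell)\sum_{r=0}^{\ell}\mathcal{B}(r;\ell,p_g)\,\mathbb{P}(\Delta_i(t-\ell)=\delta-r).
\end{align*}
Letting $t\to\infty$, each $\mathbb{P}(\Delta_i(t-\ell)=\cdot)$ tends to $\pi^\phi_{\Delta_i}(\cdot)$ for fixed $\ell$. The geometric weights are summable and all terms are bounded by $1$, so dominated convergence lets me exchange the limit with the sum. This yields \eqref{eqn_Dist_Di}, and the inner sum is recognized as the convolution $\mathcal{B}(\cdot;\ell,p_g)\circledast\pi^\phi_{\Delta_i}$.

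The step I expect to be the main obstacle is justifying that $\pi^\phi_{\Delta_i}$ exists as a limit, not merely as a Ces\`aro average. I would handle it by induction on $i$. For $i=1$, Proposition~\ref{Prop_StateProbThr} gives this directly for threshold policies, since the DTMC is irreducible and aperiodic. For the randomized policy the chain is similarly ergodic. For the induction step, the displayed identity itself shows that convergence at node $i$ implies convergence at node $i+1$. If only time averages are available, the same argument goes through by averaging the identity over $t$, since Ces\`aro limits also commute with the dominated sum.
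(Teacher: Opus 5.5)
Your proposal is correct and follows the same route as the paper: condition on the geometric lag $m_i=\ell$, use that $\xi_{m_i}\mid m_i=\ell$ is Binomial$(\ell,p_g)$ and independent of $\Delta_i(t-\ell)$, and read the inner sum as the convolution $\mathcal{B}(\cdot;\ell,p_g)\circledast\pi^{\phi}_{\Delta_i}$. Your additional care makes explicit what the paper only asserts when it says that $\Delta_i$ is time-invariant in distribution in the stationary regime; specifically, you justify independence by causality and disjoint blocks of the i.i.d.\ processes, even for a feedback-driven source policy, and you use dominated convergence to pass from time $t$ to the stationary law.
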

    \begin{proof}
		The proof is provided in Appendix \ref{Appen_Proof_RecDistVAoIi}.
	\end{proof}

    Given $\pi^{\phi}_{\Delta_{1}}(\delta_1)$ under the adopted policy $\phi$ at the source, the \textsc{VAoI} distribution at the remaining network nodes can be derived recursively.

    \section{\textsc{VAoI}-aware Update Rates Under $(\phi_{\mathcal{T}},\bm{\theta}^{\textsc{va}})$}

    In this section, we derive the update rates at the network nodes for problem $\mathcal{P}_2$, under a threshold policy $\phi_{\mathcal{T}}$ with an arbitrary threshold $\Delta_{\mathcal{T}} \geq 1$ at the source (where $\phi^\ast$ is a special case corresponding to the optimal policy) and a \textsc{VAoI}-aware policy $\bm{\theta}^{\textsc{va}}$ at the intermediate nodes, which results in the best achievable \textsc{VAoI} at the destination.

    \subsection{Analysis of $\bar{\eta}^{\phi_{\mathcal{T}}}_1$}

    \begin{figure*}[t]
        \centering
        \includegraphics[width=0.76\linewidth]{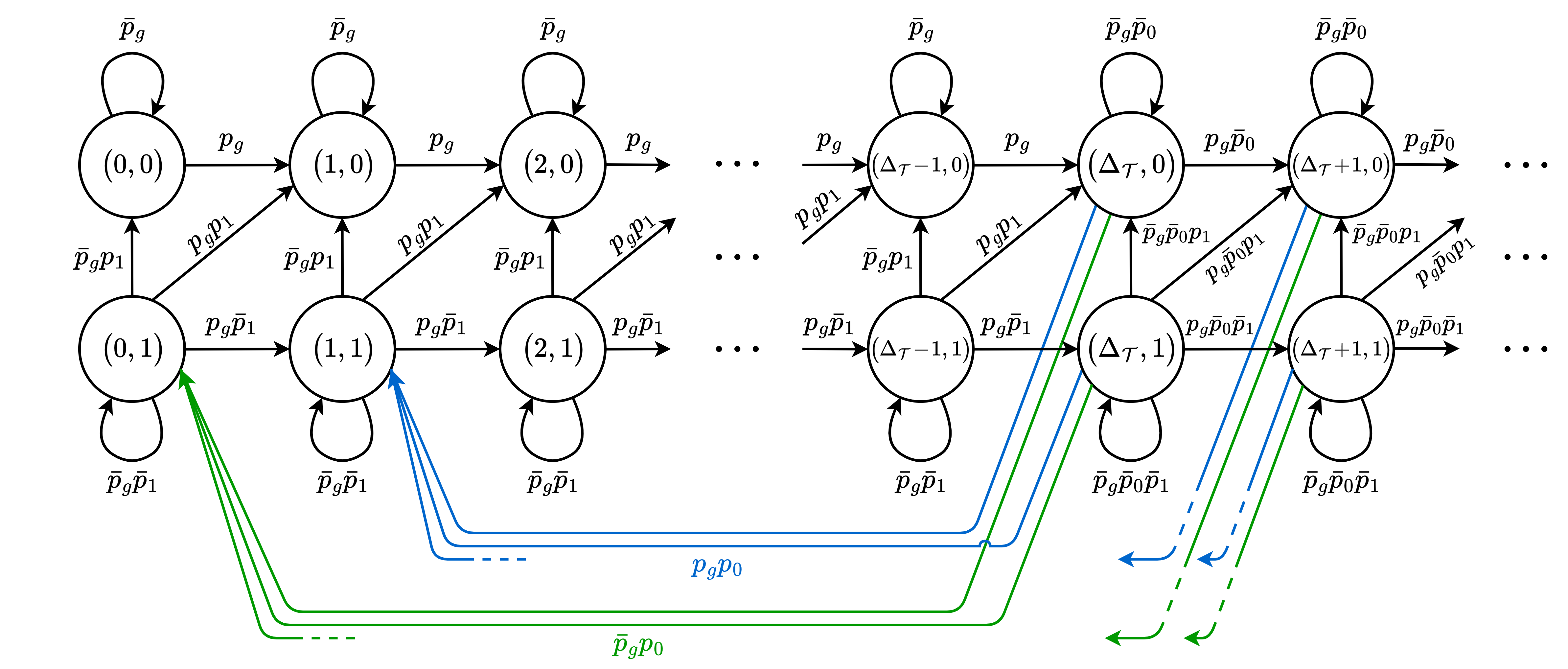}
        \caption{Markov Chain of $(\Delta^{\phi_{\mathcal{T}}}_1,\mathrm{a}^{\phi_{\mathcal{T}}}_1)$ under $(\phi_{\mathcal{T}},\theta^{\textsc{va}})$.}
        \label{fig_D1a1_DTMC}
    \end{figure*}

    \begin{theorem}
        \label{Theorem_AvgEta1_Thr}
        The update rate at node $1$ under $(\phi_{\mathcal{T}},\theta^{\textsc{va}}_1)$ is given by:
        \begin{align}
            \bar{\eta}^{\phi_{\mathcal{T}}}_1 = \frac{p_0}{p_1} \left(1-\frac{p_0 \Theta^{\Delta_\mathcal{T}}}{1-\bar{p}_0 \bar{p}_1}\right) \bar{\eta}^{\phi_{\mathcal{T}}}_0,
        \end{align}
        where $\Theta=\frac{p_g \bar{p}_1}{1-\bar{p}_g \bar{p}_1}$.
    \end{theorem}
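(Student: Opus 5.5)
The plan is to avoid computing the full joint stationary distribution of $(\Delta^{\phi_{\mathcal{T}}}_1,\mathrm{a}^{\phi_{\mathcal{T}}}_1)$ in Fig.~\ref{fig_D1a1_DTMC}. Instead I would obtain $\bar{\eta}^{\phi_{\mathcal{T}}}_1=\mathbb{P}(\mathrm{a}_1=1)$ from a stationary balance of the recursion for the feedback-aware action. By the Remark, this recursion coincides with $\theta^{\textsc{va}}_1$ when $\Delta_\mathcal{T}\geq 1$. Taking stationary probabilities in $\mathrm{a}_1(t+1)=\{\mathrm{a}_0(t)h_0(t)\}\vee\{\mathrm{a}_1(t)(1-h_1(t))\}$ and using inclusion--exclusion, together with independence of $h_0(t),h_1(t)$ from the current state, gives
\begin{align*}
\bar{\eta}_1 = p_0\bar{\eta}_0 + \bar{p}_1\bar{\eta}_1 - p_0\bar{p}_1\,\mathbb{P}(\mathrm{a}_0=1,\mathrm{a}_1=1),
\end{align*}
that is, $p_1\bar{\eta}_1 = p_0\big(\bar{\eta}_0-\bar{p}_1\mathbb{P}(\Delta_1\geq\Delta_\mathcal{T},\mathrm{a}_1=1)\big)$. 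The theorem therefore reduces to the single identity
\begin{align*}
\bar{p}_1\,\mathbb{P}(\Delta_1\geq\Delta_\mathcal{T},\mathrm{a}_1=1)=\frac{p_0\Theta^{\Delta_\mathcal{T}}}{1-\bar{p}_0\bar{p}_1}\,\bar{\eta}_0 .
\end{align*}

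To prove this identity I would write $u(\delta)=\pi(\delta,\mathrm{a}_1=1)$ and set up balance equations only for the ``busy'' states, using the dynamics of $\Delta_1$ described before Proposition~\ref{Prop_StateProbThr}. A state $(\delta,1)$ is entered in two ways: fresh from a successful source transmission (rate $p_0\bar{\eta}_0$, landing at $\Delta_1=w(t)$), or from a busy state in which link $1$ failed. Below the threshold the source is silent, so a busy state $(\delta,1)$ survives and keeps $\delta$ with probability $\bar{p}_g\bar{p}_1$, and moves to $(\delta+1,1)$ with probability $p_g\bar{p}_1$. This gives the following equations:
\begin{align*}
u(0)(1-\bar{p}_g\bar{p}_1)&=p_0\bar{p}_g\bar{\eta}_0,\\
u(1)(1-\bar{p}_g\bar{p}_1)&=p_0p_g\bar{\eta}_0+p_g\bar{p}_1u(0),\\
u(\delta)(1-\bar{p}_g\bar{p}_1)&=p_g\bar{p}_1u(\delta-1)\quad\text{for } 2\leq\delta\leq\Delta_\mathcal{T}-1.
\end{align*}
Solving, $u(1)=p_0p_g\bar{\eta}_0/(1-\bar{p}_g\bar{p}_1)^2$ and $u(\Delta_\mathcal{T}-1)=u(1)\Theta^{\Delta_\mathcal{T}-2}$.

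For the aggregated region $S=\sum_{\delta\geq\Delta_\mathcal{T}}u(\delta)$ I would lump the states. In this region the chain stays busy and stays at or above the threshold only if both links fail, with probability $\bar{p}_0\bar{p}_1$; a source success resets $\Delta_1$ to $w\leq 1<\Delta_\mathcal{T}$ or below. The only other inflow is from $(\Delta_\mathcal{T}-1,1)$ via $p_g\bar{p}_1$. Hence $S(1-\bar{p}_0\bar{p}_1)=p_g\bar{p}_1u(\Delta_\mathcal{T}-1)$. Multiplying by $\bar{p}_1$ and substituting $u(1)$, one uses $p_g^2\bar{p}_1^2/(1-\bar{p}_g\bar{p}_1)^2=\Theta^2$ to obtain exactly $\bar{p}_1S=p_0\Theta^{\Delta_\mathcal{T}}\bar{\eta}_0/(1-\bar{p}_0\bar{p}_1)$, which is the required identity.

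The main obstacle I expect is the boundary case $\Delta_\mathcal{T}=1$. There the threshold region starts at $\delta=1$, so it receives both the fresh arrivals with $w=1$ and the drift from $(0,1)$, and the source-success self-loops at $\delta=1$ must be handled carefully. I would redo the lumped balance directly:
\begin{align*}
S(1-\bar{p}_0\bar{p}_1)=p_0p_g\bar{\eta}_0+p_g\bar{p}_1u(0),
\end{align*}
which yields $p_0p_g\bar{\eta}_0/(1-\bar{p}_g\bar{p}_1)$ on the right-hand side. I would then check that this reproduces $\Theta^{1}$. The same care is needed in general to confirm that a simultaneous source success and link-$1$ failure is correctly counted as leaving $S$. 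This is true because the reset lands below the threshold, which is what makes the lumping legitimate.
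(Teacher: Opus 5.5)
Your proof is correct, and it takes a genuinely different route from the paper's.

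The paper writes the balance equations of the full two-dimensional chain $(\Delta_1,\mathrm{a}_1)$ of Fig.~\ref{fig_D1a1_DTMC}. It solves for every $\pi(\delta,1)$: a geometric profile with ratio $\Theta$ below the threshold, and a geometric tail with ratio $p_g\bar{p}_0\bar{p}_1/(1-\bar{p}_g\bar{p}_0\bar{p}_1)$ above it. It then sums the whole series to get $\bar{\eta}_1$.

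You instead take expectations in the feedback-aware recursion. This gives the conservation law $p_1\bar{\eta}_1=p_0\bar{\eta}_0-p_0\bar{p}_1\,\mathbb{P}(\mathrm{a}_0=1,\mathrm{a}_1=1)$: deliveries out of node $1$ equal arrivals minus arrivals that overwrite a still-undelivered update. It reduces the problem to the busy mass above the threshold, which a single lumped balance gives without solving the tail or summing any series. Your busy-state equations below the threshold are the same as the paper's. Your algebra checks out in both cases, $\Delta_\mathcal{T}\ge 2$ and your separate $\Delta_\mathcal{T}=1$ equation, each yielding exactly $\bar{p}_1 S=p_0\Theta^{\Delta_\mathcal{T}}\bar{\eta}_0/(1-\bar{p}_0\bar{p}_1)$.

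Your route buys two things:
\begin{itemize}
\item It genuinely covers $\Delta_\mathcal{T}=1$. The paper's displayed equations for $\pi(1,1)$ and $\pi(\Delta_\mathcal{T},1)$ are written for $\Delta_\mathcal{T}\ge 2$; for instance, the self-loop at $(1,1)$ lacks the factor $\bar{p}_0$ that applies when $\Delta_\mathcal{T}=1$.
\item The conservation law holds for any source policy. For $\phi_\mathcal{R}$ it gives $\mathbb{P}(\mathrm{a}_0=1,\mathrm{a}_1=1)=\psi\bar{\eta}_1$ and immediately recovers $\bar{\eta}^{\phi_\mathcal{R}}_1=\psi p_0/\big(1-(1-\psi p_0)\bar{p}_1\big)$.
\end{itemize}

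The paper's route, in exchange, yields the full profile $\pi(\delta,1)$, i.e.\ how node $1$'s activity is distributed across \textsc{VAoI} levels. That is the kind of joint-chain information on which the multi-hop recursion of Algorithm~\ref{Alg_IterativeTransitionsMultiHop_Thr} is built.
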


\begin{proof}
    The update rate $\bar{\eta}^{\phi_{\mathcal{T}}}_1 = \mathbb{E}[\mathrm{a}^{\phi_{\mathcal{T}}}_1] = \mathbb{P}(\mathrm{a}^{\phi_{\mathcal{T}}}_1=1)$ is obtained by deriving the stationary distribution of the joint variables $(\Delta^{\phi_\mathcal{T}}_1,\mathrm{a}^{\phi_\mathcal{T}}_1)$, denoted by $\pi^{\phi_{\mathcal{T}}}_{(\Delta_1,\mathrm{a}_1)}(\delta_1,a_1)$, which is obtained from a 2-D \textsc{DTMC} in Fig. \ref{fig_D1a1_DTMC}. The details of the proof are provided in Appendix \ref{Appen_Proof_AvgEta1_Thr}.
\end{proof}

\subsection{Analysis of forwarding rates $\{\bar{\eta}^{\phi_{\mathcal{T}}}_i\}_{i=2}^{N}$}

 We propose an iterative algorithm to derive the transition probabilities of the state vector $\mathbf{x}_n = \big(\Delta_1, \mathrm{a}_1, \mathrm{a}_2, \dots, \mathrm{a}_n\big),\ n \in \{1,2,3,\dots,N\}$. The action at each node is independent of the subsequent nodes; therefore, we can factorize the transition probability as:
 \begin{align}
    \label{eqn_IterativeUpdateRate}
    \mathbb{P}\left(\mathbf{x}_{n+1}^\prime \mid \mathbf{x}_{n+1}\right) & =\mathbb{P}\left(\mathbf{x}_{n}^\prime, a_{n+1}^\prime \mid \mathbf{x}_{n},a_{n+1}\right)  \\
     &= \mathbb{P}\left(\mathbf{x}_n^\prime \mid \mathbf{x}_n\right) \mathbb{P}\left(a_{n+1}^\prime \mid \mathbf{x}_n^\prime,\mathbf{x}_n,a_{n+1}\right) \notag \\
     & = \mathbb{P}\left(\mathbf{x}_n^\prime \mid \mathbf{x}_n\right) K_{\mathbf{x}_n \rightarrow \mathbf{x}_n^\prime}\left(a_{n+1}^\prime \mid a_{n+1}\right), \notag
 \end{align}
 where we have defined the kernel function $K_{\mathbf{x}_n \rightarrow \mathbf{x}_n^\prime}\left(a_{n+1}^\prime \mid a_{n+1}\right) = \mathbb{P}\left(a_{n+1}^\prime \mid \mathbf{x}_n^\prime,\mathbf{x}_n,a_{n+1}\right)$ as the transition probability of the action at node $n+1$ from $a_{n+1}$ to $a_{n+1}^\prime$, given that the previous nodes transition from $\mathbf{x}_n$ to $\mathbf{x}_n^\prime$. This kernel can be derived by introducing a variable $\upsigma_{n+1} \in \{0,1\}$ that captures the successful arrival of an update at node $n+1$. We define a parameter $\rho^{(n+1)}_{\mathbf{x}_n \rightarrow \mathbf{x}_n^\prime}$ as the probability that $\upsigma_{n+1}$ equals $1$ when the previous nodes transition from $\mathbf{x}_n$ to $\mathbf{x}_n^\prime$, i.e., $\rho^{(n+1)}_{\mathbf{x}_n \rightarrow \mathbf{x}_n^\prime} = \mathbb{P}(\upsigma_{n+1} = 1 \mid \mathbf{x}_{n}^\prime , \mathbf{x}_{n})$. This probability can be expressed recursively as follows:
 \begin{align}
    \label{eqn_AlphaVariable}
     \rho^{(n+1)}_{\mathbf{x}_n \rightarrow \mathbf{x}_n^\prime} = 
     \begin{cases}
         0 & a_n = 0, a_n^\prime \in \{0, 1\}, \\
         1 & a_n = 1, a_n^\prime = 0, \\
         p_{i-1} \rho^{(n)}_{\mathbf{x}_{n-1} \rightarrow \mathbf{x}_{n-1}^\prime} & a_n = 1, a_n^\prime = 1.
     \end{cases} 
 \end{align}
 with the following initialization, where $\mathbf{x}_0 = \Delta_1$:
 \begin{align}
    \label{eq_InitRho1}
     \rho^{(1)}_{\mathbf{x}_{0} \rightarrow \mathbf{x}_{0}^\prime} =
     \begin{cases}
         1 & \Delta_1 \geq \Delta_\mathcal{T}, \ \Delta^\prime_1 \leq 1, \\
         0 & \text{otherwise}.
     \end{cases}
 \end{align}

 By conditioning on whether there is an arrival to node $n+1$, i.e., $\upsigma_{n+1}$, the kernel function is derived as follows:
 \begin{align}
    \label{eqn_TransitionKernel_Thr}
     K&_{\mathbf{x}_n \rightarrow \mathbf{x}_n^\prime}\left(a_{n+1}^\prime \mid a_{n+1}\right) \\
     &= \left(1-\rho^{(n+1)}_{\mathbf{x}_n \rightarrow \mathbf{x}_n^\prime}\right) \mathbb{P}\left(a_{n+1}^\prime \mid \mathbf{x}_n^\prime,\mathbf{x}_n,a_{n+1},\upsigma_{n+1}=0\right) \notag \\
     &\quad + \rho^{(n+1)}_{\mathbf{x}_n \rightarrow \mathbf{x}_n^\prime} \mathbb{P}\left(a_{n+1}^\prime \mid \mathbf{x}_n^\prime,\mathbf{x}_n,a_{n+1},\upsigma_{n+1}=1\right) \notag \\
     &= 
     \begin{cases}
         1-\rho^{(n+1)}_{\mathbf{x}_n \rightarrow \mathbf{x}_n^\prime} & a_{n+1} = 0, a_{n+1}^\prime = 0, \\
         \rho^{(n+1)}_{\mathbf{x}_n \rightarrow \mathbf{x}_n^\prime} & a_{n+1} = 0, a_{n+1}^\prime = 1, \\
         p_i \left(1-\rho^{(n+1)}_{\mathbf{x}_n \rightarrow \mathbf{x}_n^\prime}\right) & a_{n+1} = 1, a_{n+1}^\prime = 0, \\
         1-p_i \left(1-\rho^{(n+1)}_{\mathbf{x}_n \rightarrow \mathbf{x}_n^\prime}\right) & a_{n+1} = 1, a_{n+1}^\prime = 1.
     \end{cases} \notag
 \end{align}

 Using this kernel function and starting from $\mathbf{x}_1=\left(\Delta_1,\mathrm{a}_1\right)$, whose transition probabilities are known (as depicted in Fig.~\ref{fig_D1a1_DTMC}), equation~\eqref{eqn_IterativeUpdateRate} can be computed iteratively to obtain the transition probabilities of the state vector $\mathbf{x}_n$, which contains all update actions, as summarized in Algorithm \ref{Alg_IterativeTransitionsMultiHop_Thr}.

Given the transition probabilities, by solving the global balance equations 
$\pi^{\phi_{\mathcal{T}}}_{\mathbf{x}_N}(\mathbf{x}^\prime_N) = \sum_{\mathbf{x}_N} \pi^{\phi_{\mathcal{T}}}_{\mathbf{x}_N}(\mathbf{x}_N) \mathbb{P}(\mathbf{x}^\prime_N \mid \mathbf{x}_N)$ 
subject to the normalization constraint 
$\sum_{\mathbf{x}_N} \pi^{\phi_{\mathcal{T}}}_{\mathbf{x}_N}(\mathbf{x}_N) = 1$, the joint stationary distribution 
$\pi^{\phi_{\mathcal{T}}}_{\mathbf{x}_N}(\mathbf{x}_N) = \pi^{\phi_{\mathcal{T}}}_{\mathbf{x}_N}(\delta_1, a_1, a_2, \dots, a_N)$ is obtained. 
From this, the marginal stationary distribution for any specific node $i$ is derived by summing over all other state variables, specifically 
$\pi^{\phi_{\mathcal{T}}}_{\mathrm{a}_i}(a_i) = \sum_{\delta_1} \sum_{a_{j \neq i}} \pi^{\phi_{\mathcal{T}}}_{\mathbf{x}_N}(\delta_1, a_1, \dots, a_N)$. 
Consequently, the update rate at node $i$ is calculated as 
$\bar{\eta}^{\phi_{\mathcal{T}}}_i = \sum_{a_i \in \{0,1\}} a_i \pi^{\phi_{\mathcal{T}}}_{\mathrm{a}_i}(a_i) = \pi^{\phi_{\mathcal{T}}}_{\mathrm{a}_i}(1)$, 
where $\pi^{\phi_{\mathcal{T}}}_{\mathrm{a}_i}(a_i)$ represents the marginal steady-state probability that node $i$ is in state $a_i \in \{0,1\}$. This procedure is presented in the second part of Algorithm \ref{Alg_IterativeTransitionsMultiHop_Thr} for completeness.

\begin{algorithm}[tb]
\caption{Transition Probability Derivation and Stationary Distribution for Multi-hop \textsc{DTMC} under Policies $(\phi_{\mathcal{T}},\bm{\theta}^{\textsc{va}})$}
\begin{algorithmic}[1]
\Require Transition probabilities of $(\Delta_1, \mathrm{a}_1)$ and system parameters $\{N, \Delta_\mathcal{T}, p_g,p_0,p_1,\dots,p_N\}$
\State \textbf{Initialize:} $\mathbf{x}_1 = (\Delta_1, \mathrm{a}_1)$
\State \textbf{Initialize:} $\rho^{(1)}_{\mathbf{x}_0 \rightarrow \mathbf{x}_0^\prime}$ using equation \eqref{eq_InitRho1}
\For{$n = 1$ \textbf{to} $N-1$}
    \State Compute $\rho^{(n+1)}_{\mathbf{x}_n \rightarrow \mathbf{x}_n^\prime}$ using equation \eqref{eqn_AlphaVariable}
    \State Compute kernel $K_{\mathbf{x}_n \rightarrow \mathbf{x}^\prime_n}(a_{n+1}^\prime \mid a_{n+1})$ using equation \eqref{eqn_TransitionKernel_Thr}
    \State Compute the transition probability $\mathbb{P}(\mathbf{x}_{n+1}^\prime \mid \mathbf{x}_{n+1})$ using equation \eqref{eqn_IterativeUpdateRate} where $\mathbf{x}_{n+1} = \left( \mathbf{x}_n , \mathrm{a}_{n+1} \right)$.
\EndFor
\State \Return $\mathbb{P}(\mathbf{x}_{n}^\prime \mid \mathbf{x}_{n})$ for $n=1, 2, \dots, N$.

\algrule
\raggedright \textbf{Stationary Distribution and Update Rates}
\medskip

\setcounter{ALG@line}{0}
\State Solve the global balance equations $\pi^{\phi_{\mathcal{T}}}_{\mathbf{x}_N}(\mathbf{x}^\prime_N) = \sum_{\mathbf{x}_N} \pi^{\phi_{\mathcal{T}}}_{\mathbf{x}_N}(\mathbf{x}_N) \mathbb{P}(\mathbf{x}^\prime_N \mid \mathbf{x}_N)$ subject to $\sum_{\mathbf{x}_N} \pi^{\phi_{\mathcal{T}}}_{\mathbf{x}_N}(\mathbf{x}_N) = 1$ to obtain the joint stationary distribution $\pi^{\phi_{\mathcal{T}}}_{\mathbf{x}_N}(\mathbf{x}_N)$.
\For{each node $i = 1$ \textbf{to} $N$}
    \State Compute the marginal stationary distribution: $\pi^{\phi_{\mathcal{T}}}_{\mathrm{a}_i}(a_i) = \sum_{\delta_1} \sum_{a_{j \neq i}} \pi^{\phi_{\mathcal{T}}}_{\mathbf{x}_N}(\delta_1, a_1, \dots, a_N)$
    \State Evaluate the node update rate: $\bar{\eta}^{\phi_{\mathcal{T}}}_i = \pi^{\phi_{\mathcal{T}}}_{\mathrm{a}_i}(1)$
\EndFor
\State \Return $\bar{\eta}^{\phi_{\mathcal{T}}}_i$ for $i=1, 2, \dots, N$.
\end{algorithmic}
\label{Alg_IterativeTransitionsMultiHop_Thr}
\end{algorithm}

\section{VAoI and Update Rates Under $(\phi_{\mathcal{R}},\bm{\theta}^{\textsc{fa}})$}
In this section, we analyze the \textsc{VAoI} and update rates under \emph{version-agnostic} policies: the randomized policy $\phi_{\mathcal{R}}$ at the source and the feedback-aware policy $\bm{\theta}^{\textsc{fa}}$ at the intermediate nodes. These results serve as a baseline for comparison with the \textsc{VAoI} and update rates obtained under the \emph{version-aware} policies $(\phi_{\mathcal{T}},\bm{\theta}^{\textsc{va}})$ derived in the previous section.

\subsection{Analysis of $\bar{\Delta}^{\phi_{\mathcal{R}}}_1$ and $\bar{\eta}^{\phi_{\mathcal{R}}}_0$}

The update rate at the source node under the randomized policy is obtained as follows:
\begin{align}
    \bar{\eta}^{\phi_{\mathcal{R}}}_0 = \mathbb{E}\left[ \mathrm{a}^{\phi_{\mathcal{R}}}_0(t)\right] = \mathbb{P}\left(\mathrm{a}^{\phi_{\mathcal{R}}}_0(t)=1\right) = \psi.
\end{align}

    \begin{proposition}
    \label{Prop_StateProbRandom}
    The stationary distribution and average \textsc{VAoI} at node $1$ under a randomized policy with transmission probability $\psi$ are given by:
    \begin{gather}
        \pi^{\phi_{\mathcal{R}}}_{\Delta_1}(\delta_1) =
        \begin{cases}
            \frac{\psi \bar{p}_g p_0}{\alpha_\mathcal{R}}, & \delta_1=0,\\
            \frac{\psi p_g p_0}{\alpha_\mathcal{R}^2} \left[ \frac{p_g (1-\psi p_0)}{\alpha_\mathcal{R}} \right]^{\delta_1-1}, & \delta_1 \geq 1,
        \end{cases} \\
        \label{eqn_AvgVAoI1_RS}
        \bar{\Delta}^{\phi_{\mathcal{R}}}_1 = \frac{p_g}{\psi p_0},
    \end{gather}
    where $\alpha_\mathcal{R}=1-\bar{p}_g(1-\psi p_0)$.
    \end{proposition}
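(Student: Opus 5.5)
Under the randomized policy $\phi_{\mathcal{R}}$ the source transmits independently of the state with probability $\psi$, and the transmission succeeds independently with probability $p_0$. Node $1$ is therefore refreshed in each slot independently with probability $q \triangleq \psi p_0$, and $\Delta^{\phi_{\mathcal{R}}}_1(t)$ is a DTMC on $\mathbb{N}_0$. The dynamics are the same as in the discussion preceding Proposition~\ref{Prop_StateProbThr}, with $p_0$ replaced by $q$ and the transmit indicator replaced by an independent coin. On a successful reception, $\Delta_1(t+1)=w(t)$; otherwise $\Delta_1(t+1)=\Delta_1(t)+w(t)$. The transitions are:
\begin{align*}
\mathbb{P}(0\mid\delta') &= q\bar p_g + (1-q)\bar p_g\mathbf{1}\{\delta'=0\}, \\
\mathbb{P}(1\mid\delta') &= q p_g + (1-q)\big[p_g\mathbf{1}\{\delta'=0\}+\bar p_g\mathbf{1}\{\delta'=1\}\big], \\
\mathbb{P}(\delta\mid\delta') &= (1-q)\big[p_g\mathbf{1}\{\delta'=\delta-1\}+\bar p_g\mathbf{1}\{\delta'=\delta\}\big], \qquad \delta\ge 2.
\end{align*}

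I would write the balance equations and solve them in order. For state $0$:
\[
\pi(0)=q\bar p_g+(1-q)\bar p_g\pi(0) \quad\Rightarrow\quad \pi(0)=\frac{q\bar p_g}{1-\bar p_g(1-q)}=\frac{\psi\bar p_g p_0}{\alpha_\mathcal{R}}.
\]
For state $1$:
\[
\pi(1)\big(1-(1-q)\bar p_g\big)=qp_g+(1-q)p_g\pi(0).
\]
Substituting $\pi(0)$, the right-hand side becomes $qp_g\big[1+(1-q)\bar p_g/\alpha_\mathcal{R}\big]=qp_g/\alpha_\mathcal{R}$, so $\pi(1)=\psi p_g p_0/\alpha_\mathcal{R}^2$. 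For $\delta\ge2$ the balance equation $\pi(\delta)\alpha_\mathcal{R}=(1-q)p_g\pi(\delta-1)$ gives the geometric ratio $p_g(1-\psi p_0)/\alpha_\mathcal{R}$, which yields the stated form. As a check, summing gives
\[
\pi(0)+\frac{\pi(1)}{1-p_g(1-q)/\alpha_\mathcal{R}}=\frac{q\bar p_g}{\alpha_\mathcal{R}}+\frac{qp_g}{\alpha_\mathcal{R}\,q}=1,
\]
using $\alpha_\mathcal{R}-p_g(1-q)=q$. Ergodicity holds whenever $q>0$, since state $0$ is reachable from every state and the chain is aperiodic.

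For the average, I would bypass the series algebra. Because node $1$ refreshes with probability $q$ independently of everything else, the elapsed time since the last successful reception is geometric. $\Delta_1(t)$ counts the version generations $w(s)$ in the slots from that reception slot $s$ through $t-1$. Conditioned on the reception pattern, these are i.i.d. Bernoulli$(p_g)$, independent of it. The number of such slots $L$ satisfies $\mathbb{P}(L=\ell)=(1-q)^{\ell-1}q$, so $\mathbb{E}[L]=1/q$. Wald's identity then gives $\bar\Delta^{\phi_\mathcal{R}}_1=p_g/q=p_g/(\psi p_0)$. Alternatively, this follows directly from $\sum_\delta \delta\,\pi(\delta)$, using $\sum_{\delta\ge1}\delta r^{\delta-1}=1/(1-r)^2$ with $1-r=q/\alpha_\mathcal{R}$:
\[
\bar\Delta^{\phi_\mathcal{R}}_1=\frac{qp_g}{\alpha_\mathcal{R}^2}\cdot\frac{\alpha_\mathcal{R}^2}{q^2}=\frac{p_g}{q}.
\]

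The only delicate step is the boundary equation for $\delta=1$. It receives contributions both from resets (a success followed by a new version) and from state $0$ without a reset, and simplifying it to $1/\alpha_\mathcal{R}^2$ requires the identity $\alpha_\mathcal{R}=1-\bar p_g(1-q)$. Everything else is routine and mirrors the case $\Delta_\mathcal{T}\in\{0,1\}$ of Proposition~\ref{Prop_StateProbThr}. That case is exactly this result with $\psi=1$, which serves as a consistency check.
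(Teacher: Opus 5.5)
Your proposal is correct and follows essentially the same route as the paper. It uses the same DTMC with per-slot refresh probability $\psi p_0$, solves the same balance equations for $\pi(0)$, $\pi(1)$ and the geometric tail with ratio $r=p_g(1-\psi p_0)/\alpha_\mathcal{R}$, and evaluates $\bar\Delta^{\phi_\mathcal{R}}_1=\pi(1)/(1-r)^2$ in the same way. Your renewal/Wald argument (a Bernoulli$(p_g)$ count over an independent geometric number of slots) is a valid extra shortcut for the mean that the paper does not use.
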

    
	\begin{proof}
		The proof is provided in in Appendix \ref{Appen_Proof_RSpolicy_SSProbs}.
	\end{proof}

    Given $\pi^{\phi_\mathcal{R}}_{\Delta_{1}}(\delta_1)$ and $\bar{\Delta}^{\phi_{\mathcal{R}}}_1$, the \textsc{VAoI} distribution and its average at the remaining nodes, namely $\{\pi^{\phi_\mathcal{R}}_{\Delta_{i}}(\delta_i)\}_{i=2}^{N+1}$ and $\{\bar{\Delta}^{\phi_{\mathcal{R}}}_i\}_{i=2}^{N+1}$, can be derived using \eqref{eqn_Dist_Di} and \eqref{eqn_AvgVAoIlastNode}, respectively.

\subsection{Analysis of $\bar{\eta}^{\phi_{\mathcal{R}}}_1$}

From the \textsc{DTMC} of $\mathrm{a}^{\phi_\mathcal{R}}_1$ depicted in Fig.~\ref{fig_a1_RS_DTMC}, the distribution and the average update rate at node $1$ are obtained as follows:

\begin{align}
    \bar{\eta}^{\phi_{\mathcal{R}}}_1 = \pi^{\phi_{\mathcal{R}}}_{\mathrm{a}_1}(1) = 1 - \pi^{\phi_{\mathcal{R}}}_{\mathrm{a}_1}(0) = \frac{\psi p_0}{1-(1-\psi p_0) \bar{p}_1}.
\end{align}

\begin{figure}[H]
    \centering
    \includegraphics[width=0.55\linewidth]{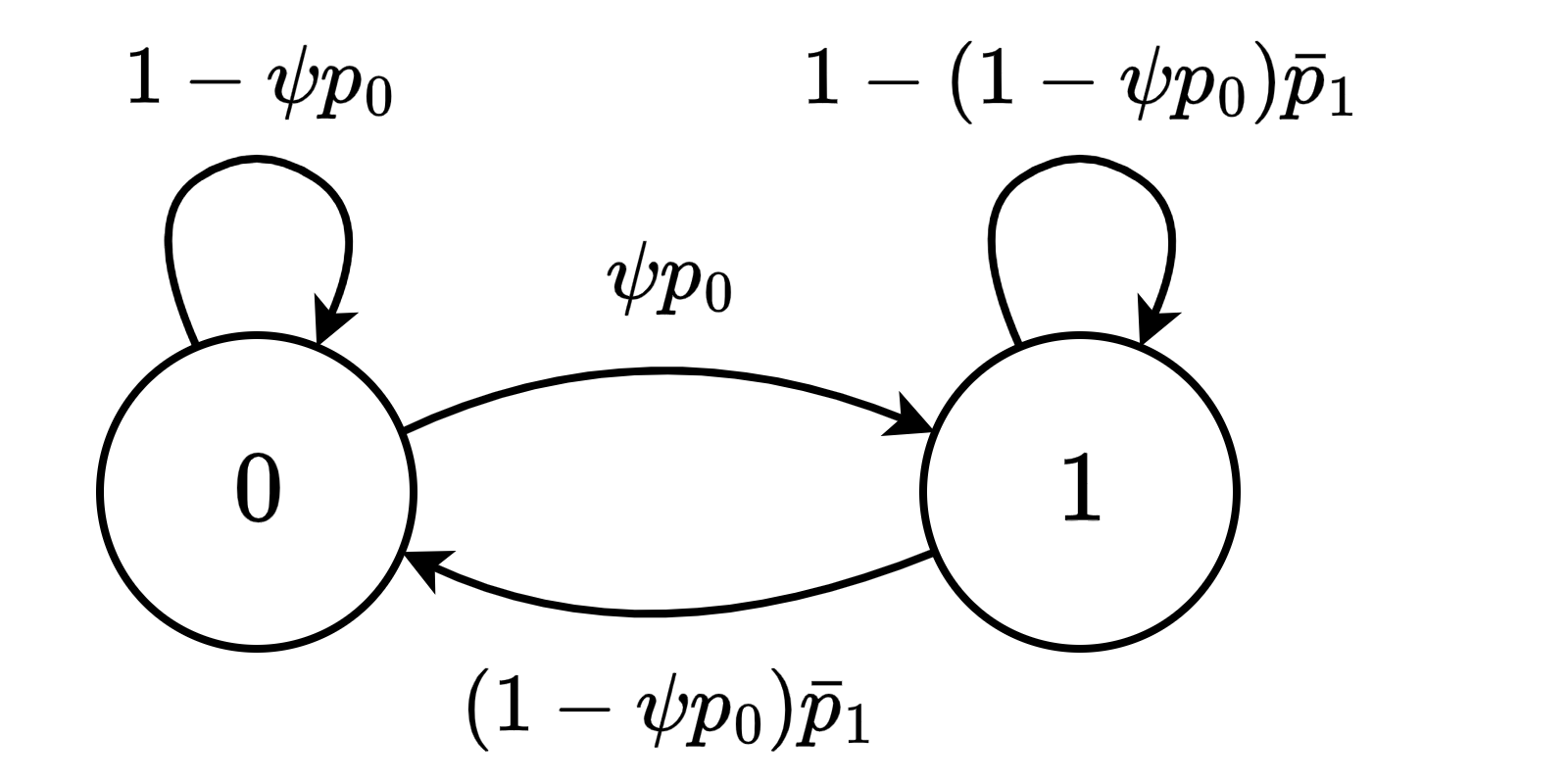}
    \caption{\textsc{DTMC} model of $\mathrm{a}^{\phi_\mathcal{R}}_1$.}
    \label{fig_a1_RS_DTMC}
\end{figure}

\subsection{Analysis of forwarding rates $\{\bar{\eta}^{\phi_{\mathcal{R}}}_i\}_{i=2}^{N}$}

We propose an iterative algorithm to derive the transition probabilities of the state vector $\mathbf{y}_n = \big(\upsigma_1, \mathrm{a}_1, \upsigma_2,\mathrm{a}_2, \dots,\upsigma_n, \mathrm{a}_n\big),\ n \in \{1,2,3,\dots,N\}$, where $\upsigma_{n} \in \{0,1\}$ indicates whether a successful update arrival occurs at node $n$. We can factorize the transition probability as follows:
 \begin{align}
    \label{eqn_IterativeUpdateRate_RS}
     \mathbb{P}(\mathbf{y}_{n+1}^\prime &\mid \mathbf{y}_{n+1})  =\mathbb{P}\left(\mathbf{y}_{n}^\prime, \sigma_{n+1}^\prime, a_{n+1}^\prime \mid \mathbf{y}_{n},\sigma_{n+1},a_{n+1}\right) \notag \\
     &= \mathbb{P}\left(\mathbf{y}_n^\prime \mid \mathbf{y}_n\right) \mathbb{P}\left(\sigma_{n+1}^\prime, a_{n+1}^\prime \mid \mathbf{y}_n^\prime,\mathbf{y}_n,\sigma_{n+1},a_{n+1}\right) \notag \\
     & = \mathbb{P}\left(\mathbf{y}_n^\prime \mid \mathbf{y}_n\right) \mathbb{P}\left(\sigma_{n+1}^\prime \mid \mathbf{y}_n^\prime,\mathbf{y}_n\right) \notag \\ &\times \underbrace{\mathbb{P}\left( a_{n+1}^\prime \mid \mathbf{y}_n^\prime,\mathbf{y}_n,\sigma_{n+1}^\prime,\sigma_{n+1},a_{n+1}\right)}_{K_{\mathbf{y}_n \rightarrow \mathbf{y}_n^\prime}^{\sigma_{n+1} \rightarrow \sigma_{n+1}^\prime} \left( a_{n+1}^\prime \mid a_{n+1}\right)},
 \end{align}
 where the kernels are given as follows:

 \begin{align}
 \label{eqn_TransitionSigma_RS}
     \mathbb{P}\big(\sigma_{n+1}^\prime &\mid \mathbf{y}_n^\prime,\mathbf{y}_n\big) \\
     &= 
     \begin{cases}
         1 & \sigma_{n+1}^\prime = 0, a_n=0, \\
         1 & \sigma_{n+1}^\prime = 1, a_n=1, a_n^\prime=0,\\
         1 & \sigma_{n+1}^\prime = 0, a_n=1, a_n^\prime=1, \sigma_n=0, \\
         1-p_n & \sigma_{n+1}^\prime = 0, a_n=1, a_n^\prime=1, \sigma_n=1, \\
         p_n & \sigma_{n+1}^\prime = 1, a_n=1, a_n^\prime=1, \sigma_n=1.
     \end{cases} \notag
 \end{align}

 \begin{align}
    \label{eqn_TransitionKernel_RS}
     K_{\mathbf{y}_n \rightarrow \mathbf{y}_n^\prime}^{\sigma_{n+1} \rightarrow \sigma_{n+1}^\prime} \!&\!\left( a_{n+1}^\prime \mid a_{n+1}\right) \\
     &= 
     \begin{cases}
         1 & \sigma_{n+1}^\prime = 1, a_{n+1}^\prime=1, \\
         1 & \sigma_{n+1}^\prime = 0, a_{n+1}^\prime=0, a_{n+1}=0, \\
         p_{n+1} & \sigma_{n+1}^\prime = 0, a_{n+1}^\prime=0, a_{n+1}=1, \\
         1-p_{n+1} & \sigma_{n+1}^\prime = 0, a_{n+1}^\prime=1, a_{n+1}=1.
     \end{cases} \notag
 \end{align}

\begin{table}[t]
\centering
\caption{Transition Probabilities for the 2-D \textsc{DTMC} $(\upsigma_1, \mathrm{a}_1)$}
\label{tab_transition_probabilities}
\begin{tabular}{@{}lll@{}}
\toprule
$\mathbf{y}_1=(\upsigma_1, \mathrm{a}_1)$ & $\mathbf{y}_1^\prime=(\upsigma_1', \mathrm{a}_1')$ & \textbf{Probability} $\mathbb{P}(\mathbf{y}_1^\prime | \mathbf{y}_1)$ \\ \midrule
$(0, 0)$ & $(0, 0)$ & $1 - \psi p_0$ \\
         & $(1, 1)$ & $\psi p_0$ \\ \midrule
\multirow{3}{*}{$(0, 1) \text{ or } (1, 1)$} & $(0, 0)$ & $(1 - \psi p_0) p_1$ \\
         & $(0, 1)$ & $(1 - \psi p_0) (1 - p_1)$ \\
         & $(1, 1)$ & $\psi p_0$ \\ \bottomrule
\end{tabular}
\end{table}

Using these kernel functions and starting from $\mathbf{y}_1=\left(\upsigma_1, \mathrm{a}_1\right)$, whose transition probabilities are given in Table \ref{tab_transition_probabilities}, equation~\eqref{eqn_IterativeUpdateRate_RS} can be evaluated iteratively to obtain the transition probabilities of the state vector $\mathbf{y}_n$, which contains all update actions. 
The joint stationary distribution $\pi^{\phi_{\mathcal{R}}}_{\mathbf{y}_N}(\mathbf{y}_N)$, the marginal distributions $\pi^{\phi_{\mathcal{R}}}_{\mathrm{a}_i}(a_i)$, and the update rates $\bar{\eta}^{\phi_{\mathcal{R}}}_i$ are then obtained for $i=1,2,\dots,N$, as presented in the second part of Algorithm~\ref{Alg_IterativeTransitionsMultiHop_RS}.

\begin{algorithm}[t]
\caption{Transition Probability Derivation and Stationary Distribution for Multi-hop \textsc{DTMC} under Policy $(\phi_{\mathcal{R}},\bm{\theta}^{\textsc{fa}})$}
\begin{algorithmic}[1]
\Require Transition probabilities of $\mathbf{y}_1 = (\sigma_1, \mathrm{a}_1)$ and system parameters $\{N, \psi, p_g,p_0,p_1,\dots,p_N\}$
\State \textbf{Initialize:} $\mathbf{y}_1 = (\sigma_1, \mathrm{a}_1)$
\State \textbf{Initialize:} $\mathbb{P}(\mathbf{y}_1^\prime \mid \mathbf{y}_1)$ using Table \ref{tab_transition_probabilities}
\For{$n = 1$ \textbf{to} $N-1$}
    \State Compute $\mathbb{P}\big(\sigma_{n+1}^\prime \mid \mathbf{y}_n^\prime,\mathbf{y}_n\big)$ using equation \eqref{eqn_TransitionSigma_RS}
    \State Compute kernel $K_{\mathbf{y}_n \rightarrow \mathbf{y}_n^\prime}^{\sigma_{n+1} \rightarrow \sigma_{n+1}^\prime} \left( a_{n+1}^\prime \mid a_{n+1}\right)$ using using equation \eqref{eqn_TransitionKernel_RS}
    \State Compute the transition probability $\mathbb{P}(\mathbf{y}_{n+1}^\prime \mid \mathbf{y}_{n+1})$ using equation \eqref{eqn_IterativeUpdateRate_RS} where $\mathbf{y}_{n+1} = \left( \mathbf{y}_n, \upsigma_{n+1}, \mathrm{a}_{n+1} \right)$.
\EndFor
\State \Return $\mathbb{P}(\mathbf{y}_{n}^\prime \mid \mathbf{y}_{n})$ for $n=1, 2, \dots, N$.

\algrule
\raggedright \textbf{Stationary Distribution and Update Rates}
\medskip

\setcounter{ALG@line}{0}
\State Solve $\pi^{\phi_{\mathcal{R}}}_{\mathbf{y}_N}(\mathbf{y}^\prime_N) = \sum_{\mathbf{y}_N} \pi^{\phi_{\mathcal{R}}}_{\mathbf{y}_N}(\mathbf{y}_N) \mathbb{P}(\mathbf{y}^\prime_N \mid \mathbf{y}_N)$ subject to $\sum_{\mathbf{y}_N} \pi^{\phi_{\mathcal{R}}}_{\mathbf{y}_N}(\mathbf{y}_N) = 1$ to obtain $\pi^{\phi_{\mathcal{R}}}_{\mathbf{y}_N}(\mathbf{y}_N)$.
\For{each node $i = 1$ \textbf{to} $N$}
    \State $\pi^{\phi_{\mathcal{R}}}_{\mathrm{a}_i}(a_i) \!=\! \sum_{\sigma_1, \dots, \sigma_N} \sum_{a_{j \neq i}} \pi^{\phi_{\mathcal{R}}}_{\mathbf{y}_N}(\sigma_1, a_1, \dots, \sigma_N, a_N)$
    \State Evaluate the node update rate: $\bar{\eta}^{\phi_{\mathcal{R}}}_i = \pi^{\phi_{\mathcal{R}}}_{\mathrm{a}_i}(1)$
\EndFor
\State \Return $\bar{\eta}^{\phi_{\mathcal{R}}}_i$ for $i=1, 2, \dots, N$.
\end{algorithmic}
\label{Alg_IterativeTransitionsMultiHop_RS}
\end{algorithm}

\section{Numerical Results}

We present the numerical results, first to validate the analytical results derived in the previous sections and, second, to analyze and discuss the behavior of the \textsc{VAoI} and the update rate at the network nodes under rate-constrained update policies at the source and at \textsc{VAoI}-aware or feedback-aware intermediate links. Simulations are conducted over $10^5$ time slots, and the results are averaged over $200$ Monte Carlo iterations to obtain steady-state values. The default parameters considered are $p_g=0.3$, $N=4$, and $\left( p_0,p_1,p_2,p_3,p_4\right)=\left( 0.7,0.8,0.55,0.95,0.8\right)$, unless otherwise stated.

\subsection{VAoI Optimization}

In Fig.~\ref{fig_AvgVAoI_Psi_Policies}, we depict the average \textsc{VAoI} at the network nodes, $\{\bar{\Delta}_n\}_{n=1}^{N}$, as a function of the update rate at the source, $\psi$, under various source update policies. The average \textsc{VAoI} at the first intermediate node, $\bar{\Delta}_1$, is shown directly, whereas for subsequent nodes, $\{\bar{\Delta}_n\}_{n=2}^{N}$, it equals $\bar{\Delta}_1$ plus an offset corresponding to the expected number of version generations during relaying, given by $p_g \sum_{i=1}^{n-1} \frac{1}{p_i}$ (Theorem \ref{Theorem_AvgVAoIlastNode}).

We include two additional baselines alongside the optimal threshold policy (\textsc{VAoI}-Optimal) and the randomized policy: AoI-Optimal and uniform policies. The AoI-Optimal policy is obtained by optimizing the \textsc{AoI} at node $1$; it is a threshold policy whose threshold is derived from \eqref{eqn_OptimalThreshold} by setting $p_g=1$. Under the uniform policy, updates at the source occur periodically every $D=\frac{1}{\psi}$ slots, where $D$ is a positive integer. For the \textsc{VAoI}-optimal (optimal mixed-threshold) and randomized policies, both analytical and simulation curves are shown. The simulation results match and validate the analytical results in \eqref{eqn_OptimalVAoI_Mix} and \eqref{eqn_AvgVAoI1_RS}\footnote{Simulation curves are omitted from the remaining figures to reduce redundancy and improve clarity, as they exactly match the analytical results.}.

Fig.~\ref{fig_AvgVAoI_Psi_Policies} shows that the mixed threshold policy (\textsc{VAoI}-Optimal) outperforms all baseline policies and achieves the lowest \textsc{VAoI} across network nodes. This gain is more pronounced under stricter update rate constraints at the source. When the constraint is relaxed, all policies converge toward always-update behavior and perform similarly, as seen for larger $\psi$ values (e.g., above $0.5$). Among baselines, the AoI-Optimal policy outperforms the uniform policy, which in turn outperforms the randomized policy. Notably, to maintain a target \textsc{VAoI} of $1.5$ at node $1$, the uniform policy reduces the required update rate by $26\%$ (from 0.285 to 0.210), the AoI-optimal policy by $43\%$ (from 0.285 to 0.162), and the \textsc{VAoI}-optimal policy by $54\%$ (to 0.131), compared to the randomized policy. These results highlight that \emph{\textsc{VAoI}-aware optimal policies can significantly reduce transmission rates without sacrificing the conveyed information, thereby improving energy efficiency in IoT networks.}

\textit{Optimal \textsc{VAoI} thresholds}: 
Fig. \ref{fig_Thresholds_ps_pg} illustrates the contours of the optimal threshold $\Delta_\mathcal{T}^\ast$ and the heatmap of $\bar{\Delta}^{\phi^\ast}_1$ over varying success probabilities $p_0$ and version generation probabilities $p_g$ for $\psi = 0.05$. Larger $p_g$ and smaller $p_0$ lead to higher $\bar{\Delta}^{\phi^\ast}_1$ and larger optimal thresholds. According to Theorem \ref{Theorem_OptimalThreshold}, under stringent rate constraints (very low $\psi$), $\Delta_\mathcal{T}^\ast$ can be approximated as $\lceil \frac{p_g}{\psi p_0} \rceil$. Furthermore, when $\psi \geq \frac{p_g}{\alpha_\mathcal{T}}$, it follows that $\Delta_\mathcal{T}^\ast = 1$. In this case, updates occur at all non-zero \textsc{VAoI} states, resulting in the best average \textsc{VAoI} achievable under an unconstrained network, $\bar{\Delta}^{\phi^\ast}_1 = \frac{p_g}{p_0}$.

\begin{figure}[!t]
    \centering
    \includegraphics[width=0.82\linewidth]{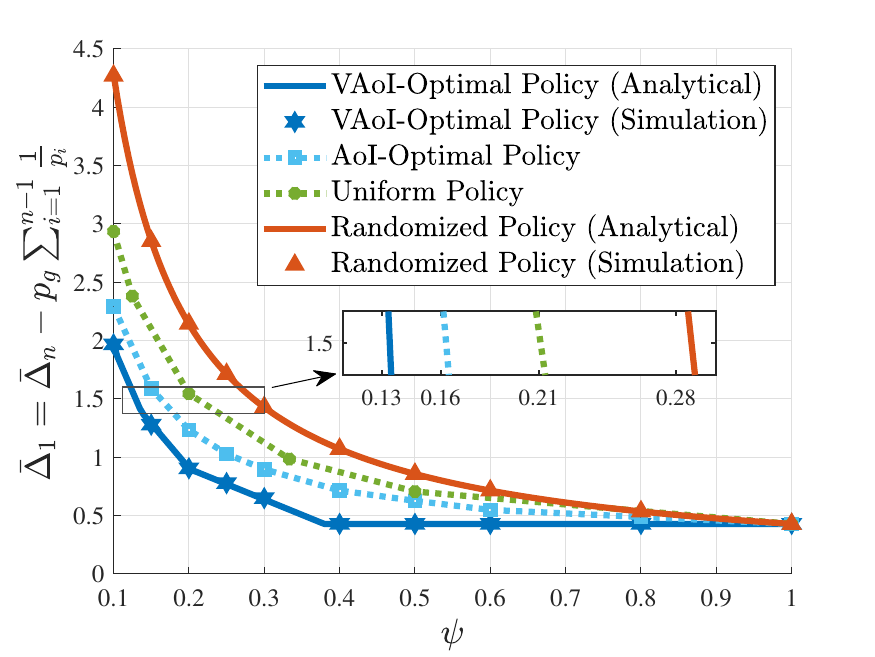}
    \caption{Average \textsc{VAoI} at network nodes vs. $\psi$ under various source policies.}
    \vspace{-8pt}
    \label{fig_AvgVAoI_Psi_Policies}
\end{figure}

    \begin{figure*}[!t]
		\centering
		\begin{minipage}[b]{0.325\linewidth}
            \centering
            \includegraphics[trim={0.5cm 0cm 0.6cm 0cm}, clip,width=\linewidth]{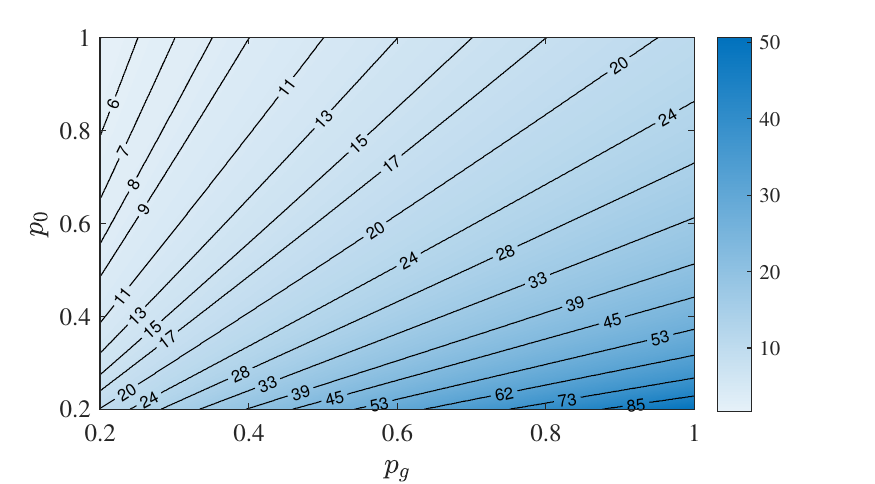}
            \caption{Contour plot of $\Delta^*_{\mathcal{T}}$ and heatmap of $\bar{\Delta}^{\phi^*}_1$ versus $(p_g, p_0)$.}
            \label{fig_Thresholds_ps_pg}
		\end{minipage} \hfill
		\begin{minipage}[b]{0.325\linewidth}
            \centering
            \includegraphics[width=\linewidth]{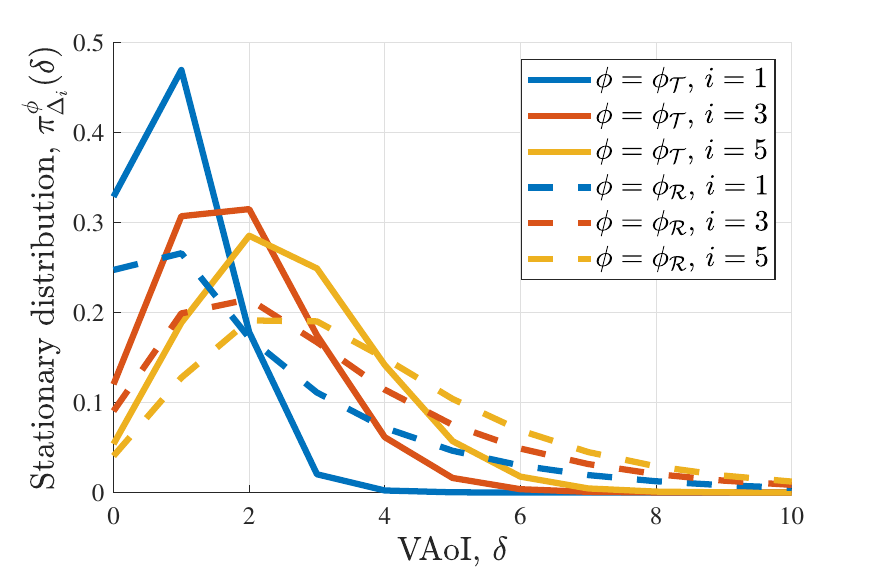}
            \caption{Stationary distribution of \textsc{VAoI} at network nodes for $\Delta_\mathcal{T}=2$ ($\psi=0.201$).}
            \label{fig_DistVAoI_Dt2_phiTphiR}
		\end{minipage} \hfill
		\begin{minipage}[b]{0.325\linewidth}
            \centering
            \includegraphics[width=\linewidth]{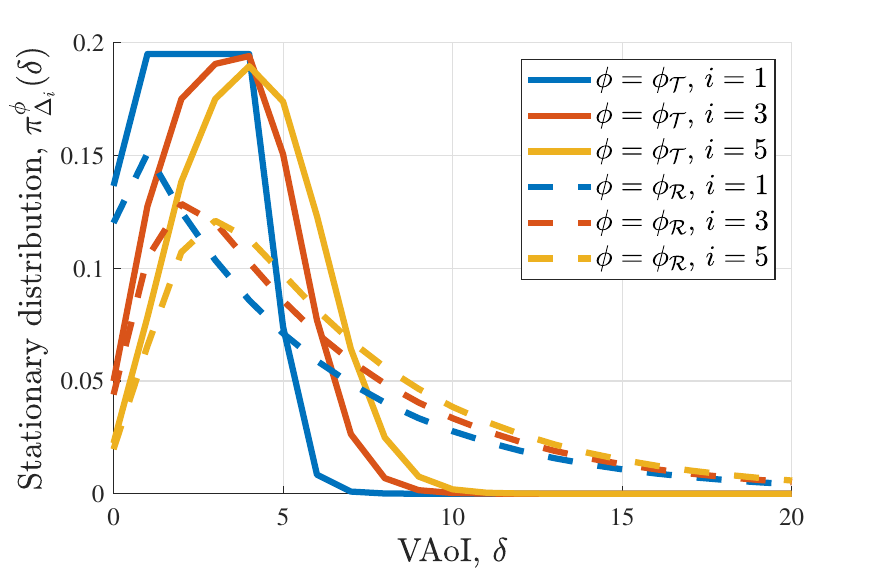}
            \caption{Stationary distribution of \textsc{VAoI} at network nodes for $\Delta_\mathcal{T}=5$ ($\psi=0.084$).}
            \label{fig_DistVAoI_Dt5_phiTphiR}
		\end{minipage}
        \vspace{-8pt}
	\end{figure*}

    \begin{figure*}[!t]
		\centering
		\begin{minipage}[b]{0.325\linewidth}
            \centering
    \includegraphics[width=\linewidth]{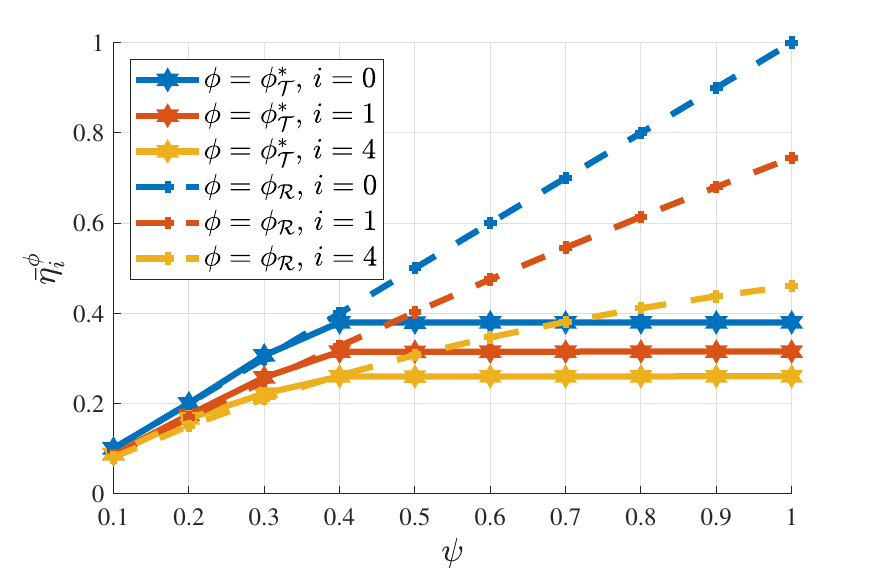}
    \caption{Update rates at network nodes vs. $\psi$ under threshold and randomized policies.}
    \label{fig_UpdateRate_Psi_phiTphiR}
		\end{minipage} \hfill
		\begin{minipage}[b]{0.325\linewidth}
            \centering
            \includegraphics[width=\linewidth]{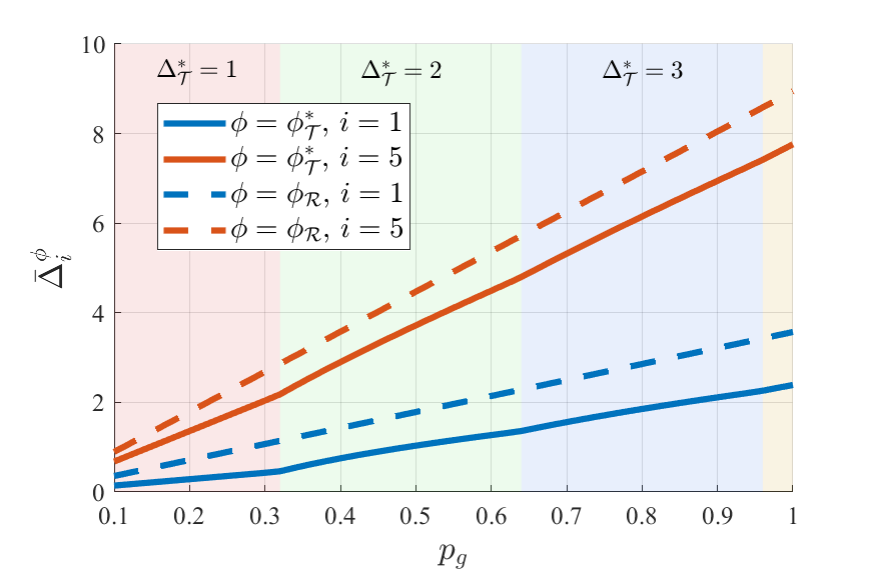}
            \caption{Average \textsc{VAoI} vs. $p_g$ under threshold and randomized policies.}
            \label{fig_AvgVAoI_Pg_phiTphiR}
		\end{minipage} \hfill
		\begin{minipage}[b]{0.325\linewidth}
            \centering
            \includegraphics[width=\linewidth]{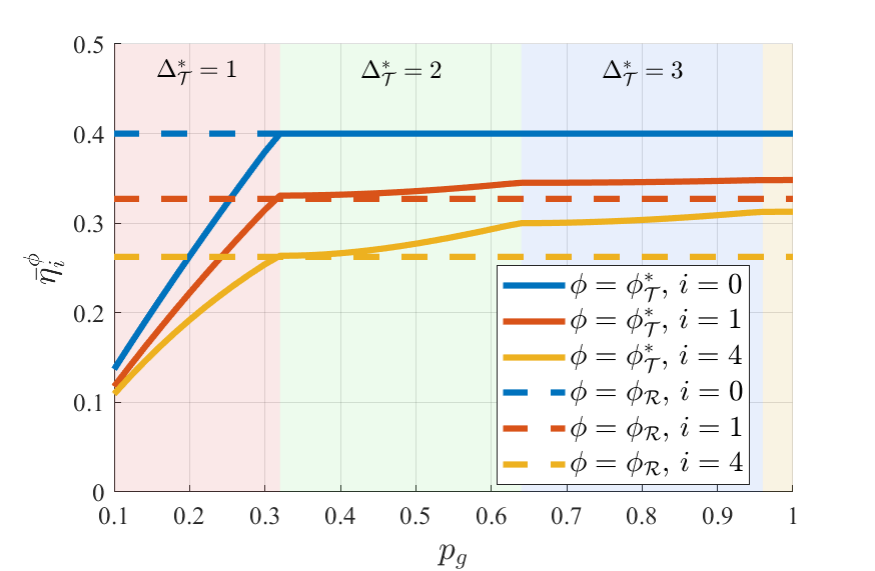}
            \caption{Update rates at network nodes vs. $p_g$ under threshold and randomized policies.}
            \label{fig_AvgRate_Pg_phiTphiR}
		\end{minipage}
        \vspace{-8pt}
	\end{figure*}

\subsection{Stationary Distribution of \textsc{VAoI}}

For further evaluation, we present the stationary distributions of \textsc{VAoI} at network nodes $1$, $3$, and $5$ under the threshold policy with $\Delta_\mathcal{T}=2$ and $\Delta_\mathcal{T}=5$, and the corresponding randomized policies with $\psi = 0.201$ and $\psi=0.084$, as shown in Figs. \ref{fig_DistVAoI_Dt2_phiTphiR} and \ref{fig_DistVAoI_Dt5_phiTphiR}, respectively. In the latter case with stricter rate constraints, the \textsc{VAoI} distribution at node $1$ becomes more dispersed, with higher \textsc{VAoI} values occurring with greater probability. The randomized policy yields a smooth stationary distribution with longer tails, whereas the threshold policy exhibits an almost \emph{uniform} distribution with considerably shorter tails. As stated in Proposition \ref{Prop_StateProbThr}, the \textsc{VAoI} distribution $\pi^{\phi_\mathcal{T}}_{\Delta_1}(\delta)$ under the threshold policy is uniform for $1 \!\leq\! \delta \!<\! \Delta_\mathcal{T}$, decreases by a factor of $\frac{p_g}{\alpha_\mathcal{T}}$ at $\delta = \Delta_\mathcal{T}$, and then decays exponentially at rate $ p_g\bar{p}_0/\alpha_\mathcal{T}$. 
The \textsc{VAoI} distributions at the subsequent nodes $3$ and $5$ are obtained by convolving the distribution of the preceding node with a binomial distribution (see \eqref{eqn_Dist_Di}), resulting in progressively smoother distributions with longer tails, as we proceed from node $1$ to node $5$.

These distributions explain why the optimal threshold policy outperforms the randomized policy: it effectively prevents the \textsc{VAoI} at node $1$ from exceeding the optimal threshold $\Delta_\mathcal{T}$ and keeps the \textsc{VAoI} as uniformly below this threshold as possible, whereas the randomized policy exhibits an exponential decay with a longer tail. Since the average \textsc{VAoI} is a weighted sum of steady-state probabilities, where larger values contribute more, reducing the probability of high \textsc{VAoI} significantly improves performance.

\subsection{Update Rate Optimization}

In Fig. \ref{fig_UpdateRate_Psi_phiTphiR}, the update rates at the source (node $0$) and at the first and last intermediate nodes ($1$ and $4$) are shown for both the optimal mixed-threshold and randomized policies. When $\psi$ is low, i.e., when the update-rate constraint is stringent, updates from the source node are rare. This is achieved by using higher thresholds under the threshold policy and a lower transmission probability $\psi$ under the randomized policy. In this regime, both policies yield similar update rates at the source. The update rates at the intermediate nodes are also similar but remain significantly lower than under the always-update policy with feedback-enabled links, since new updates do not always arrive and retransmissions can be stopped once the last received update has been delivered.

On the other hand, as the rate constraint is relaxed (i.e., as $\psi$ increases), the source update rate under the randomized policy increases linearly. By contrast, under the optimal threshold policy, it reaches a saturated level corresponding to the threshold $\Delta^*_\mathcal{T}=1$. Beyond this point, further transmissions (i.e., $\Delta^*_\mathcal{T}=0$) are unnecessary, since sending updates when the \textsc{VAoI} at node $1$ is zero provides no benefit; therefore, these redundant transmissions are avoided. As a result, this also yields a lower, fixed arrival rate at the subsequent intermediate nodes. We observe that, \emph{although providing feedback at the intermediate nodes under the \textsc{VAoI}-agnostic randomized policy reduces the update rates compared to always-update forwarding, the \textsc{VAoI}-aware source policy achieves a substantially greater reduction.}

Prioritizing fresh and informative data enables version-aware communication to improve network efficiency across diverse resource conditions compared to version-agnostic communication. \emph{Under tight resource constraints, version-aware communication reduces VAoI while utilizing the same amount of resources, while, when resources are abundant, it maintains VAoI while filtering redundant transmissions, thereby minimizing unnecessary resource consumption.}

\subsection{VAoI and Update Rate vs. $p_g$}
We investigate the impact of the version generation probability $p_g$ on the \textsc{VAoI} and the update rate of the network nodes under a fixed source update rate constraint, $\psi=0.4$. In Fig.~\ref{fig_AvgVAoI_Pg_phiTphiR}, the average \textsc{VAoI} of the first intermediate node ($1$) and the destination node ($5$) is shown as a function of $p_g$. We observe, first, that the optimal threshold policy outperforms the randomized policy, with the performance gap increasing as $p_g$ grows. Second, the average \textsc{VAoI} at the destination under both policies increases more steeply with respect to $p_g$ than at node $1$. This additional slope is $\sum_{i=1}^{N-1} \frac{1}{p_i}$, as indicated in \eqref{eqn_AvgVAoIlastNode}.

The corresponding update rates for these two nodes, as well as for the source, are shown in Fig.~\ref{fig_AvgRate_Pg_phiTphiR}. For fixed $\psi=0.4$, we observe that when $p_g$ is low (below $0.32$), the optimal threshold policy at the source can transmit all newly generated versions to node $1$. However, by skipping transmissions when no new version is available, i.e., by setting the threshold $\Delta^*_{\mathcal{T}}=1$, this policy reduces the source update rate compared to the randomized policy, while still achieving lower \textsc{VAoI} at node $1$. This reduction becomes more pronounced as $p_g$ decreases. Consequently, the lower arrival rate at node $1$ results in fewer transmissions at node $1$ and at subsequent nodes up to node $4$ compared to the randomized policy. Note that the update rates under the randomized policy are independent of $p_g$, as it is agnostic to \textsc{VAoI} dynamics.

On the other hand, when $p_g$ increases beyond $0.32$ and $0.64$, the source resources become insufficient to transmit all newly generated versions. In this regime, the optimal policy increases the threshold values to $2$ and $3$, respectively (see \eqref{eqn_OptimalThreshold}). Consequently, the source update rate, and thus the arrival rate at node $1$, becomes equal under both policies (with the mixing coefficient $\kappa$ introduced in Theorem \ref{Theorem_OptimalThreshold}). For the randomized policy, the update rates at intermediate nodes remain constant as $p_g$ increases because it does not account for \textsc{VAoI} evolution. However, the update rates at nodes $1$ and $4$ in Fig.~\ref{fig_AvgRate_Pg_phiTphiR} increase with $p_g$, even though the arrival rate at node $1$ remains unchanged.

This behavior is explained by the fact that, as $p_g$ (and thus $\Delta^*_{\mathcal{T}}$) increases, the optimal threshold policy introduces a \emph{silence} interval after each successful update delivery, during which no transmissions occur. This allows the \textsc{VAoI} at node $1$ to reach the threshold to initiate the next update. During this interval, node $1$ attempts one or a few transmissions (depending on $p_1$) to deliver the update to the next node, and the update then propagates similarly through subsequent nodes. In this regime, updates are less likely to be replaced by newer versions due to the silence period.
In contrast, under the randomized policy, updates are more frequently replaced by newer versions before successful delivery. As a result, some updates are discarded and no longer transmitted, which reduces the average number of transmissions per update. This leads to lower update rates at intermediate nodes than under the threshold policy, albeit with higher VAoI.

\section{Conclusion}
	\label{sec_Conclusion}
	We provided a comprehensive characterization of the \textsc{VAoI} in multi-hop communication networks with transmission constraints and acknowledgment-based feedback. We developed a bi-level optimization framework to jointly address optimal source-side update control and feedback-aware forwarding at intermediate nodes. We showed that the optimal source policy takes a threshold-based form and derived its closed-form expression. For both the optimal threshold policy and a randomized baseline, we obtained closed-form expressions for the stationary distribution, the average \textsc{VAoI}, and the update rates across all network nodes. Our results further quantify the role of feedback in reducing redundant transmissions while preserving the best achievable \textsc{VAoI}. Numerical evaluations corroborate the theoretical analysis and underscore the effectiveness of \textsc{VAoI}-aware and feedback-aware design for efficient multi-hop communication systems.

\bibliographystyle{IEEEtran}
\bibliography{Refs1}

@inproceedings{delfani2025timestamps,
  author={Delfani, Erfan and Pappas, Nikolaos},
  booktitle={Proc. IEEE INFOCOM}, 
  title={From Timestamps to Versions: Version {AoI} in Single- and Multi-Hop Networks}, 
  year={2026},
  volume={},
  number={},
  pages={},
  doi={10.1109/INFOCOM59046.2026.11571750}}

@article{kountouris2021semantics,
  title={Semantics-empowered communication for networked intelligent systems},
  author={Kountouris, Marios and Pappas, Nikolaos},
  journal={IEEE Commun. Mag.},
  volume={59},
  number={6},
  pages={},
  year={2021}
}

@article{yates2021age,
  title={Age of information: An introduction and survey},
  author={Yates, Roy D and Sun, Yin and Brown, D Richard and Kaul, Sanjit K and Modiano, Eytan and Ulukus, Sennur},
  journal={IEEE J. Sel. Areas Commun.},
  volume={39},
  number={5},
  pages={},
  year={2021}
}

@book{altman1999constrained,
  title={Constrained Markov Decision Processes},
  author={Altman, Eitan},
  volume={7},
  year={1999},
  publisher={CRC Press}
}

@article{beutler1985optimal,
  title={Optimal policies for controlled Markov chains with a constraint},
  author={Beutler, Frederick J and Ross, Keith W},
  journal={J. Math. Anal. Appl.},
  volume={112},
  number={1},
  pages={},
  year={1985}
}

@book{norris1998markov,
  title={Markov chains},
  author={Norris, James R},
  number={2},
  year={1998},
  publisher={Cambridge University Press}
}

@book{bertsekas2011dynamic,
  author = {Bertsekas, Dimitri P.},
  title = {Dynamic Programming and Optimal Control, Vol. II},
  year = {2007},
  publisher = {Athena Scientific},
  edition = {3rd}
}

@inproceedings{champati2019distribution,
  title={On the distribution of {AoI} for the {GI/GI/1/1} and {GI/GI/1/2} systems: Exact expressions and bounds},
  author={Champati, Jaya Prakash and Al-Zubaidy, Hussein and Gross, James},
  booktitle={Proc. IEEE INFOCOM},
  year={2019},
  pages={37-45}
}

@article{ji2024age,
  title={Age-optimal packet scheduling with resource constraint and feedback delay},
  author={Ji, Yonghao and Lu, Yuxiao and Xu, Xiaoli and Huang, Xinmei},
  journal={IEEE Trans. Commun.},
  volume={72},
  number={7},
  pages={},
  year={2024}
}

@article{ayan2020probability,
  title={Probability analysis of age of information in multi-hop networks},
  author={Ayan, Onur and G{\"u}rsu, H Murat and Papa, Arled and Kellerer, Wolfgang},
  journal={IEEE Networking Lett.},
  volume={2},
  number={2},
  pages={},
  year={2020}
}

@article{inoue2019general,
  title={A general formula for the stationary distribution of the age of information and its application to single-server queues},
  author={Inoue, Yoshiaki and Masuyama, Hiroyuki and Takine, Tetsuya and Tanaka, Toshiyuki},
  journal={IEEE Trans. Inf. Theory},
  volume={65},
  number={12},
  pages={},
  year={2019}
}

@article{wang2021age,
  title={Age of changed information: Content-aware status updating in the Internet of Things},
  author={Wang, Xijun and Lin, Wenrui and Xu, Chao and Sun, Xinghua and Chen, Xiang},
  journal={IEEE Trans. Commun.},
  volume={70},
  number={1},
  pages={},
  year={2021}
}

@article{jiang2021joint,
  title={Joint performance analysis of ages of information in a multi-source pushout server},
  author={Jiang, Yukang and Miyoshi, Naoto},
  journal={IEEE Trans. Inf. Theory},
  volume={68},
  number={2},
  pages={},
  year={2021}
}

@article{akar2025age,
  title={Age of information in a single-source generate-at-will dual-server status update system},
  author={Akar, Nail and Ulukus, Sennur},
  journal={IEEE Trans. Commun.},
  year={2025},
  volume={73},
  number={9},
  pages={}
}

@article{salimnejad2024age,
  title={Age of information versions: A semantic view of markov source monitoring},
  author={Salimnejad, Mehrdad and Kountouris, Marios and Ephremides, Anthony and Pappas, Nikolaos},
  journal={IEEE Trans. Commun.}, 
  year={2025},
  volume={73},
  number={12},
  pages={}
}

@article{costa2016age,
  title={On the age of information in status update systems with packet management},
  author={Costa, Maice and Codreanu, Marian and Ephremides, Anthony},
  journal={IEEE Trans. Inf. Theory},
  volume={62},
  number={4},
  pages={},
  year={2016}
}

@article{maatouk2020age,
  title={The age of incorrect information: A new performance metric for status updates},
  author={Maatouk, Ali and Kriouile, Saad and Assaad, Mohamad and Ephremides, Anthony},
  journal={IEEE/ACM Trans. Netw.},
  volume={28},
  number={5},
  pages={},
  year={2020}
}

@inproceedings{kaul2012real,
  title={Real-time status: How often should one update?},
  author={Kaul, Sanjit and Yates, Roy and Gruteser, Marco},
  booktitle={Proc. IEEE INFOCOM},
  pages={},
  year={2012}
}

@inproceedings{yates2021vage,
  title={The age of gossip in networks},
  author={Yates, Roy D},
  booktitle={Proc. IEEE ISIT},
  year={2021},
  pages={}
}

@article{chen2024minimizing,
  title={Minimizing age of incorrect information over a channel with random delay},
  author={Chen, Yutao and Ephremides, Anthony},
  journal={IEEE/ACM Trans. Netw.},
  volume={32},
  number={4},
  pages={},
  year={2024}
}

@article{karevvanavar2024version,
  title={Version age of information minimization over fading broadcast channels},
  author={Karevvanavar, Gangadhar and Pable, Hrishikesh and Patil, Om and Bhat, Rajshekhar V and Pappas, Nikolaos},
  journal={IEEE Trans. Wireless Commun.},
  year={2025},
  volume={24},
  number={2},
  pages={}
}

@article{abd2022closed,
  title={Closed-form characterization of the {MGF} of {AoI} in energy harvesting status update systems},
  author={Abd-Elmagid, Mohamed A and Dhillon, Harpreet S},
  journal={IEEE Trans. Inf. Theory},
  volume={68},
  number={6},
  pages={},
  year={2022}
}

@article{inoue2025characterizing,
  title={Characterizing the Age of Information with Multiple Coexisting Data Streams},
  author={Inoue, Yoshiaki and Mandjes, Michel},
  journal={IEEE Trans. Inf. Theory},
  year={2025},
  volume={71},
  number={6},
  pages={}
}

@article{moltafet2022moment,
  title={Moment generating function of age of information in multisource {M/G/1/1} queueing systems},
  author={Moltafet, Mohammad and Leinonen, Markus and Codreanu, Marian},
  journal={IEEE Trans. Commun.},
  volume={70},
  number={10},
  pages={},
  year={2022}
}

@article{fiems2023age,
  title={Age of information analysis with preemptive packet management},
  author={Fiems, Dieter},
  journal={IEEE Commun. Lett.},
  volume={27},
  number={4},
  pages={},
  year={2023}
}

@article{akar2021discrete,
  title={Discrete-time queueing model of age of information with multiple information sources},
  author={Akar, Nail and Dogan, Ozancan},
  journal={IEEE Internet Things J.},
  volume={8},
  number={19},
  pages={},
  year={2021}
}

@article{kosta2021age,
  title={The age of information in a discrete time queue: Stationary distribution and non-linear age mean analysis},
  author={Kosta, Antzela and Pappas, Nikolaos and Ephremides, Anthony and Angelakis, Vangelis},
  journal={IEEE J. Sel. Areas Commun.},
  volume={39},
  number={5},
  pages={},
  year={2021}
}

@inproceedings{zhang2021age,
  title={On age of information for discrete time status updating system with {Ber/G/1/1} queues},
  author={Zhang, Jixiang and Xu, Yinfei},
  booktitle={Proc. IEEE ITW},
  year={2021}
}

@article{yates2020age,
  title={The age of information in networks: Moments, distributions, and sampling},
  author={Yates, Roy D},
  journal={IEEE Trans. Inf. Theory},
  volume={66},
  number={9},
  pages={},
  year={2020}
}

@inproceedings{delfani2024semantics,
  title={Semantics-aware status updates with energy harvesting devices: Query version age of information},
  author={Delfani, Erfan and Pappas, Nikolaos},
  booktitle={Proc. WiOpt},
  year={2024},
  pages={}
}

@article{Chiariotti2021PAoI,
  author={Chiariotti, Federico and Vikhrova, Olga and Soret, Beatriz and Popovski, Petar},
  journal={IEEE Trans. Commun.}, 
  title={Peak Age of Information Distribution for Edge Computing With Wireless Links}, 
  year={2021},
  volume={69},
  number={5}
}

@article{Zhang2025AoIVehicles,
  author={Zhang, Tianci and Chen, Zhengchuan and Tian, Zhong and Wang, Min and Zhen, Li and Wu, Dapeng Oliver and Li, Yonghui and Quek, Tony Q. S.},
  journal={IEEE Trans. Commun.}, 
  title={Age of Information in Internet of Vehicles: A Discrete-Time Multisource Queueing Model}, 
  year={2025},
  volume={73},
  number={5}
}

@article{Tripathi2023Mhop,
  author={Tripathi, Vishrant and Talak, Rajat and Modiano, Eytan},
  journal={IEEE/ACM Trans. Netw.}, 
  title={Information Freshness in Multihop Wireless Networks}, 
  year={2023},
  volume={31},
  number={2}
}

@inproceedings{Vikhrova2020Mhop,
  author={Vikhrova, Olga and Chiariotti, Federico and Soret, Beatriz and Araniti, Giuseppe and Molinaro, Antonella and Popovski, Petar},
  booktitle={Proc. IEEE GLOBECOM}, 
  title={Age of Information in Multi-hop Networks with Priorities}, 
  year={2020}
}

@inproceedings{Talak2018Mhop,
  author={Talak, Rajat and Karaman, Sertac and Modiano, Eytan},
  booktitle={Proc. Allerton Conf.}, 
  title={Minimizing age-of-information in multi-hop wireless networks}, 
  year={2017},
  pages={}
}

@inproceedings{Sinha2024Tandem,
  author={Sinha, Ashirwad and Singhvi, Shubhransh and Mankar, Praful D. and Dhillon, Harpreet S.},
  booktitle={Proc. IEEE ISIT}, 
  title={Peak Age of Information under Tandem of Queues}, 
  year={2024},
  pages={}
}

@article{Bedewy2019Mhop,
  author={Bedewy, Ahmed M. and Sun, Yin and Shroff, Ness B.},
  journal={IEEE/ACM Trans. Netw.}, 
  title={The Age of Information in Multihop Networks}, 
  year={2019},
  volume={27},
  number={3}
}

@article{Chiariotti2022Mhop,
  author={Chiariotti, Federico and Vikhrova, Olga and Soret, Beatriz and Popovski, Petar},
  journal={IEEE Trans. Commun.}, 
  title={Age of Information in Multihop Connections With Tributary Traffic and No Preemption}, 
  year={2022},
  volume={70},
  number={10}
}

@inproceedings{Kaswan2023Mhop,
  author={Kaswan, Priyanka and Ulukus, Sennur},
  booktitle={Proc. IEEE ISIT}, 
  title={Age of Information With Non-Poisson Updates in Cache-Updating Networks}, 
  year={2023},
  pages={}
}

@inproceedings{asvadi2024age,
  title={Age of Information in Multipath Multihop Networks},
  author={Asvadi, Sepehr and Ashtiani, Farid},
  booktitle={Proc. IEEE WCNC},
  year={2024}
}

@article{buyukates2019age,
  title={Age of information in multihop multicast networks},
  author={Buyukates, Baturalp and Soysal, Alkan and Ulukus, Sennur},
  journal={J. Commun. Netw.},
  volume={21},
  number={3},
  pages={},
  year={2019}
}

@article{Delfani2025LEO,
  author={Delfani, Erfan and Pappas, Nikolaos},
  journal={IEEE Commun. Lett.}, 
  title={Semantics-Aware Updates from Remote Energy Harvesting Devices to Interconnected {LEO} Satellites}, 
  year={2025},
  volume={29},
  number={8},
  pages={}
}

@article{Mehrdad2025CL,
  author={Salimnejad, Mehrdad and Pappas, Nikolaos and Kountouris, Marios},
  journal={IEEE Commun. Lett.}, 
  title={So Timely, Yet So Stale: The Impact of Clock Drift in Real-Time Systems}, 
  year={2025},
  volume={29},
  number={10},
  pages={}
}

@article{luo2025survey,
  title={From Information Freshness to Semantics of Information and Goal-oriented Communications},
  author={Luo, Jiping and Delfani, Erfan and Salimnejad, Mehrdad and Pappas, Nikolaos},
  journal={arXiv preprint arXiv:2512.12758},
  year={2025}
}

@article{munari2025s,
  title={What’s My Age of Information Again? The Role of Feedback in {AoI} Optimization Under Limited Transmission Opportunities},
  author={Munari, Andrea and Badia, Leonardo},
  journal={IEEE Trans. Commun.},
  year={2025}
}

@article{lou2021boosting,
  title={Boosting or hindering: {AoI} and throughput interrelation in routing-aware multi-hop wireless networks},
  author={Lou, Jiadong and Yuan, Xu and Kompella, Sastry and Tzeng, Nian-Feng},
  journal={IEEE/ACM Trans. Netw.},
  volume={29},
  number={3},
  pages={},
  year={2021}
}

@inproceedings{feng2022timely,
  title={Timely status update: Should ARQ be used in two-hop networks?},
  author={Feng, Jian and Pan, Haoyuan and Chan, Tse-Tin and Liang, Jiaxin},
  booktitle={Proc. IEEE ICC},
  pages={},
  year={2022}
}

@article{arafa2019timely,
  title={Timely updates in energy harvesting two-hop networks: Offline and online policies},
  author={Arafa, Ahmed and Ulukus, Sennur},
  journal={IEEE Trans. Wireless Commun.},
  volume={18},
  number={8},
  pages={},
  year={2019}
}

@inproceedings{li2020age,
  title={Age-oriented opportunistic relaying in cooperative status update systems with stochastic arrivals},
  author={Li, Bohai and Chen, He and Zhou, Yong ... and Li, Yonghui},
  booktitle={Proc. IEEE GLOBECOM},
  pages={},
  year={2020}
}

@article{liu2021minimizing,
  title={Minimizing {AoI} with throughput requirements in multi-path network communication},
  author={Liu, Qingyu and Zeng, Haibo and Chen, Minghua},
  journal={IEEE/ACM Trans. Netw.},
  volume={30},
  number={3},
  pages={},
  year={2021}
}

@article{ke2025information,
  title={Information freshness in multi-hop satellite IoT systems},
  author={Ke, Ying and Ni, Zihan and Zhang, Di and Miao, Xiaqing and Leow, Chee Yen and Wang, Shuai and Pan, Gaofeng and An, Jianping},
  journal={IEEE Trans. Mobile Comput.},
  year={2025}
}

@article{gu2021optimizing,
  title={Optimizing information freshness in two-hop status update systems under a resource constraint},
  author={Gu, Yifan and Wang, Qian and Chen, He and Li, Yonghui and Vucetic, Branka},
  journal={IEEE J. Sel. Areas Commun.},
  volume={39},
  number={5},
  pages={},
  year={2021}
}

\appendices

\section{Proof of Proposition \ref{Prop_VAoInodei}}
	\label{Appen_Proof_VAoInodei}
	
	\begin{proof}
		Node $i$ transmits updates to node $i+1$ in every time slot. Upon successful reception, node $i+1$ retains only the most recent version, discarding all earlier ones. Thus, the \textsc{VAoI} at node $i+1$ depends on the most recent successful transmissions from node $i$. If the latest transmission at time $t$ succeeds (w.p. $p_i$), node $i+1$ obtains the version held by node $i$ in the previous slot, i.e., $V_{i+1}(t) = V_i(t-1)$. The \textsc{VAoI} $\Delta_{i+1}(t) = V_0(t) - V_{i+1}(t)$ is then:
		\begin{align}
			\label{eqn_ProofVAoI_ip1_first}
			\Delta_{i+1}(t) = \underbrace{V_0(t) - V_0(t\!-\!1)}_{\xi_{1}} + \underbrace{V_0(t\!-\!1) - V_{i}(t-1)}_{\Delta_{i}(t-1)},
		\end{align}
		where $\xi_1 = V_0(t) - V_0(t-1)$ represents the number of new versions generated by the source in the most recent slot (either $0$ or $1$).
		If the latest transmission fails (w.p. $1 - p_i$), but the previous one at $t-1$ succeeds (w.p. $p_i$), then $V_{i+1}(t) = V_i(t-2)$, and the \textsc{VAoI} becomes:
		\begin{align}
			\label{eqn_ProofVAoI_ip1_second}
			\Delta_{i+1}(t) 
			&= \underbrace{V_0(t) - V_0(t\!-\!2)}_{\xi_{2}} + \underbrace{V_0(t\!-\!2) - V_{i}(t\!-\!2)}_{\Delta_{i}(t-2)}, 
		\end{align}
		where $\xi_2$ denotes the number of versions generated over the past two slots. Since the source generates a new version in each slot according to a Bernoulli process with parameter $p_g$, $\xi_k \sim \text{Bin}(k, p_g)$ over $k$ slots. Generally, the \textsc{VAoI} at node $i+1$ at time $t$ equals the \textsc{VAoI} at node $i$ from $m_i$ slots earlier plus the number of generated versions in those $m_i$ slots, where $m_i$ follows a Geometric distribution representing the number of transmissions required for successful delivery over link $i$.
	\end{proof}

	\section{Proof of Lemma \ref{Lemma_VAoIDestNode}}
	\label{Appen_Proof_VAoIDestNode}
    
	\begin{proof}
		According to Proposition \ref{Prop_VAoInodei}, the \textsc{VAoI} at all nodes can be expressed recursively as follows:
		\begin{align*}
        \scalebox{0.8}{$
			\begin{cases}
                \!\Delta_2(t)\!=\!\Delta_1(t\!-\!m_1)\!+\!\xi_{m_{1}}, \\
                \vdots \\
				\!\Delta_N(t)\!=\!\Delta_{N\!-\!1}(t\!-\!m_{N\!-\!1})\!+\!\xi_{m_{N\!-\!1}}, \\
				\!\Delta_{N\!+\!1}(t)\!=\!\Delta_{N}(t\!-\!m_{N})\!+\!\xi_{m_{N}},   
			\end{cases} 
            \hspace{-10pt} \! \Rightarrow \! \Delta_{N\!+\!1}(t)\!=\!\Delta_{1} (t\!-\!\!\underbrace{\sum_{i=1}^{N} m_{i}}_{=\tau_{N}})\!+\!\!\underbrace{\sum_{i=1}^{N} \xi_{m_i}}_{=\beta_{N}}\!.
            $}
		\end{align*}
		
		The expected value of $\tau_N$ and $\beta_N$ is derived:
		\begin{align*}
			\mathbb{E} \left[\tau_N \right] &\!=\!\! \sum_{i=1}^{N} \mathbb{E} \left[ m_{i} \right] \!=\!\! \sum_{i=1}^{N} \frac{1}{p_i}, \\
			\mathbb{E} \left[\beta_N \right] &\!=\! \!\sum_{i=1}^{N} \mathbb{E}  \left[ \xi_{m_i} \right] \!\overset{(a)}{=}\!  \sum_{i=1}^{N} \mathbb{E}_{m_i} \big[ \mathbb{E} \left[ \xi_{m_i} | m_i \right] \big] \!=\! p_g \! \sum_{i=1}^{N} \mathbb{E} \left[  m_i \right],
		\end{align*}
		where $(a)$ follows from the tower rule: $\mathbb{E} \left[ X \right] \!=\! \mathbb{E}_{Y} \!\big[ \mathbb{E}_{X|Y} \left[ X|Y \right] \big]$. Note that $\xi_{m_i} \!\mid\! m_i$ follows a Binomial distribution, as given in \eqref{eq_BinomialPMF}, with mean $m_i p_g$.
	\end{proof}

	\section{Proof of Theorem \ref{Theorem_AvgVAoIlastNode}}
	\label{Appen_Proof_AvgVAoIlastNode}
	
	\begin{proof}
		According to Lemma \ref{Lemma_VAoIDestNode},
		\begin{align}
			\bar{\Delta}_{N+1}(t) & \!=\! \mathbb{E} \left[ \Delta_{N+1}(t) \right] \!=\! \mathbb{E} \left[ \Delta_{1} (t\!-\!\tau_{N})\!+\!\beta_{N} \right]\\
			&\!\overset{(a)}{=}\! {\sum_{\tau = N}^{\infty} \!\mathbb{P}(\tau_N\!=\!\tau) \mathbb{E} \left[ \Delta_{1}(t\!-\!\tau) \right]} \!+\! p_g \!\sum_{i=1}^{N} \frac{1}{p_i},  \notag
		\end{align}
        
		\noindent where $(a)$ follows from the tower rule: 
		$\mathbb{E} \!\left[ \Delta_{1} \!(t\!-\!\tau_{N})\right] \!=\! \mathbb{E}_{\tau_N} \!\big[ \mathbb{E} \left[ \Delta_{1}\!(t\!-\!\tau_N) | \tau_N \right] \big]
			\!\!=\!\!\!\sum_{\tau = N}^{\infty} \!\mathbb{P}(\tau_N\!\!=\!\!\tau) \mathbb{E} \!\left[ \Delta_{1}\!(t\!-\!\tau) \right]\!.$
		The steady-state value $\bar{\Delta}_{N\!+\!1} \!=\! \lim_{t \rightarrow \infty} \bar{\Delta}_{N+1}(t)$ is then:
        
		\begin{align*}
			\bar{\Delta}_{N\!+\!1} \!\!\overset{(b)}{=}\!\! \lim_{t \rightarrow \infty} \!\!\mathbb{E} \!\left[ \Delta_{1}\!(t) \right] \!\!\sum_{\tau = N}^{\infty} \!\!\mathbb{P}(\tau_N\!=\!\tau)  \!+\! p_g \!\!\sum_{i=1}^{N} \!\!\frac{1}{p_i}
			 \!=\! \bar{\Delta}_{1} \!\!+\! p_g \!\!\sum_{i=1}^{N} \!\!\frac{1}{p_i}\!.
		\end{align*}
		
		Equality $(b)$ follows directly from the relation $\lim_{t \rightarrow \infty} \mathbb{E}[\Delta_1(t-\tau)] = \lim_{t \rightarrow \infty} \mathbb{E}[\Delta_1(t)]$, which holds for an ergodic and integrable \textsc{DTMC} $\Delta_1(t)$, where $\mathbb{E}\big[|\Delta_1(t)|\big] < \infty$~\cite[Sec. 1.10]{norris1998markov}. 
	\end{proof}

    \section{Proof of Lemma \ref{Lemma_OptimalCMDP}}
	\label{Sec_CMDP}

	Consider an on--off update policy $\phi$ under which, at each time slot $t$, node $0$ decides whether to transmit ($a^\phi(t) = 1$) or remain idle ($a^\phi(t) = 0$), i.e., $\phi = \big(a^\phi(0), a^\phi(1), \dots \big)$. The objective is to minimize the time-average \textsc{VAoI} at node $1$ subject to the update rate constraint at node $0$. This problem is formulated as a Constrained Markov Decision Process (\textsc{CMDP}):
    
	\begin{align}
		\label{eq_CMDP}
		\min_{\phi \in \Phi} \ \lim_{T\rightarrow\infty} {\frac{1}{T} \mathbb{E} \left[ \sum_{t=0}^{T-1} \Delta^{\phi}_1(t) \Big| s(0) \right]}\!, \ \ \text{subject to}\ \bar{\eta}_0^\phi \le \psi,
	\end{align}
    
    \noindent where $\Phi$ denotes all feasible policies.
    The \textsc{CMDP} is characterized by state $s(t) \in S$, action $a(t) \in A = \{0,1\}$, transition probability $\mathbb{P}(s(t+1) \mid s(t), a(t))$, and transition cost $C(s(t), a(t), s(t+1))$, where, for brevity, the superscript $\phi$ is omitted. The state, $s(t) = \Delta_1(t)$, denotes the \textsc{VAoI} at node $1$; imposing an upper bound $\Delta_{\text{max}}$ yields a finite state space $S = \{0, 1, \dots, \Delta_{\text{max}}\}$. The transition probabilities are:
	\begin{align}
		\label{eqn_CMDPTransProbs}
		\mathbb{P}\!\left(\Delta_1^\prime | \Delta_1,a\right)
		\!=\! 
        \scalebox{0.9}{$
		\begin{cases}
			p_g & a\!=\!0,\  \Delta_1^\prime\!=\!\Delta_1\!+\!1,\ \Delta_1\!\!<\!\!\Delta_{\text{max}},\\
			\bar{p}_g & a\!=\!0,\  \Delta_1^\prime\!=\!\Delta_1,\ \Delta_1\!\!<\!\!\Delta_{\text{max}},\\
			1 & a\!=\!0,\  \Delta_1^\prime\!=\!\Delta_1\!=\!\Delta_{\text{max}},\\
			p_g \bar{p}_0 & a\!=\!1,\  \Delta_1^\prime\!=\!\Delta_1\!+\!1,\ \Delta_1\!\!<\!\!\Delta_{\text{max}},\\
			\bar{p}_g \bar{p}_0  & a\!=\!1,\  \Delta_1^\prime\!=\!\Delta_1,\ \Delta_1\!\!<\!\!\Delta_{\text{max}},\\
			\bar{p}_0 & a\!=\!1,\ \Delta_1^\prime\!=\!\Delta_1\!=\!\Delta_{\text{max}},\\
			p_g p_0 & a\!=\!1,\  \Delta_1^\prime\!=\!1,\\
			\bar{p}_g p_0 & a\!=\!1,\  \Delta_1^\prime\!=\!0.
		\end{cases} 
        $}
	\end{align}
    
    The transition cost at state $s(t)$ under action $a(t)$ is defined as the resulting \textsc{VAoI}, i.e., $C\!\left(s(t), a(t), s(t\!+\!1)\right) \!=\! \Delta_1(t\!+\!1)$.
	The primal \textsc{CMDP} problem \eqref{eq_CMDP} can be reformulated as a Lagrangian dual problem by introducing a multiplier $\lambda \geq 0$:
	\begin{align}
		\label{eq_DualMDP}
		\sup_{\lambda \geq 0} {\min_{\phi \in \Phi} \mathcal{L}(\lambda,\phi)},
	\end{align}
	where $\mathcal{L}(\lambda,\phi)$ denotes the Lagrangian function:
	\begin{align}
		\label{eq_LagFunction}
		\mathcal{L}(\lambda,\phi) \!=\! \lim_{T\rightarrow\infty} {\frac{1}{T} \mathbb{E} \!\left[ \sum_{t=0}^{T-1} \left \{ \Delta_1(t) \!+\! \lambda a(t) \right \} \Big| s(0) \right] \!-\! \lambda \psi}.
	\end{align}
	
	Let $g(\lambda) = \mathcal{L}(\lambda, \phi_\lambda^*)$ denote the dual function, and let $\phi_\lambda^*$ be the policy minimizing $\mathcal{L}(\lambda, \phi)$ for fixed $\lambda$:
	\begin{align}
		\label{eqn_DualMDPpolicy}
		\phi^\ast_\lambda = \argmin_{\phi \in \Phi} \lim_{T\rightarrow\infty} { \frac{1}{T} \mathbb{E} \left[ \sum_{t=0}^{T-1} \left \{ \Delta_1(t) + \lambda a(t) \right \} \Big| s(0) \right] }.
	\end{align}
    
	This corresponds to solving an unconstrained \textsc{MDP} with a modified transition cost:
	\begin{align}
		\label{eq_LagrangianCostFunction}
		C_\lambda \! \left(s(t),a(t),s(t+1)\right) = \Delta_1(t+1) + \lambda a(t).
	\end{align}
	
	For a finite state space $S$, the growth condition in \cite[Eq.~11.21]{altman1999constrained} holds. Since the transition cost $C(s(t), a(t), s(t+1)) \geq 0$ is bounded below, the conditions of \cite[Corollary~12.2]{altman1999constrained} are satisfied, ensuring the optimal solutions of the dual and primal problems coincide. Thus, the optimal solution to the primal \textsc{CMDP} \eqref{eq_CMDP} is found by solving $\sup_{\lambda \geq 0} g(\lambda)$, where $\phi_{\lambda}^\ast$ comes from \eqref{eqn_DualMDPpolicy}. Specifically, the optimal policy is obtained by first solving the unconstrained \textsc{MDP} \eqref{eqn_DualMDPpolicy} for fixed $\lambda$ to get $\phi_{\lambda}^\ast$, and then optimizing $\lambda$ as in \eqref{eq_DualMDP}. We proceed to prove that $\phi_{\lambda}^\ast$ is a threshold policy.
	
	\begin{proposition}
		The optimal policy of the \textsc{MDP} problem \eqref{eqn_DualMDPpolicy} is a threshold policy.
	\end{proposition}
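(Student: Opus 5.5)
We show that the optimal policy of \eqref{eqn_DualMDPpolicy} is of threshold type by analyzing the relative value function of the average-cost MDP with cost $C_\lambda$ and transition probabilities \eqref{eqn_CMDPTransProbs}. Since $S$ is finite and, for every stationary policy, state $0$ (or $1$) is reachable from every state, the chain is unichain. Hence the average-cost optimality equation holds:
\begin{align*}
g_\lambda + h(\delta) = \min_{a\in\{0,1\}} \Big\{ \lambda a + \sum_{\delta'} \mathbb{P}(\delta' \mid \delta, a)\big[\delta' + h(\delta')\big] \Big\}.
\end{align*}
Define $u(\delta) = \delta + h(\delta)$. Then
\begin{align*}
Q(\delta,0) &= \bar{p}_g u(\delta) + p_g u(\delta+1),\\
Q(\delta,1) &= \lambda + p_0\big[\bar{p}_g u(0) + p_g u(1)\big] + \bar{p}_0\big[\bar{p}_g u(\delta) + p_g u(\delta+1)\big],
\end{align*}
with the obvious modification at $\Delta_{\max}$. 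Subtracting gives
\begin{align*}
Q(\delta,1) - Q(\delta,0) = \lambda - p_0\Big(\bar{p}_g u(\delta) + p_g u(\delta+1) - \bar{p}_g u(0) - p_g u(1)\Big).
\end{align*}
The reset term is independent of $\delta$. Therefore, if $u$, equivalently $h$, is nondecreasing in $\delta$, the difference is nonincreasing in $\delta$. Transmitting is then optimal exactly for $\delta \ge \Delta_{\mathcal{T}}$, where $\Delta_{\mathcal{T}}$ is the smallest state at which the difference is nonpositive. This is precisely a threshold policy.

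The main step is to prove that $h$ is nondecreasing. We pass to the discounted problem with factor $\gamma\in(0,1)$ and apply value iteration $V_{k+1} = T V_k$ starting from $V_0 \equiv 0$. We prove by induction that if $V_k$ is nondecreasing, then so is $V_{k+1}$. Write $W_k(\delta) = \delta + \gamma V_k(\delta)$, which is nondecreasing. Then
\begin{itemize}
\item $Q_k(\delta,0) = \bar{p}_g W_k(\delta) + p_g W_k(\delta+1)$ is nondecreasing as a convex combination of nondecreasing terms;
\item $Q_k(\delta,1)$ is a constant plus $\bar{p}_0$ times the same expression, hence also nondecreasing;
\item the pointwise minimum of nondecreasing functions is nondecreasing.
\end{itemize}
At the boundary $\Delta_{\max}$, the self-loop replaces $\delta+1$ by $\Delta_{\max}$. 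This needs a short check that monotonicity survives there: both the comparison $W_k(\Delta_{\max}-1)\le W_k(\Delta_{\max})$ and the boundary transition are consistent. Passing $k\to\infty$ shows that $V_\gamma$ is nondecreasing.

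Next we transfer this to the average-cost setting. Since $S$ is finite, the vanishing-discount argument of \cite{bertsekas2011dynamic} applies: $h(\delta) = \lim_{\gamma\to 1} \big(V_\gamma(\delta) - V_\gamma(0)\big)$ along a subsequence, so $h$ inherits monotonicity. Consequently $\phi_\lambda^*$ is a threshold policy for every $\lambda$.

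Combining this with the Lagrangian argument, and noting that at the optimal multiplier $\lambda^*$ the optimal policy may randomize between two consecutive thresholds \cite{beutler1985optimal}, yields Lemma \ref{Lemma_OptimalCMDP}. We expect the main obstacle to be rigor at the truncation boundary $\Delta_{\max}$ and in the discounted-to-average limit, rather than the monotonicity induction itself.
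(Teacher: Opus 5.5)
Your proposal is correct and follows essentially the same route as the paper. Like the paper, you show by value-iteration induction that the value function is nondecreasing in $\delta$, deduce that $Q_\lambda(\delta,1)-Q_\lambda(\delta,0)$ is nonincreasing because the reset term does not depend on $\delta$, and read off the threshold. Your use of discounted iteration followed by a vanishing-discount limit is a cleaner way to reach the relative value function than the paper's claim that undiscounted value iteration converges, since those iterates do not converge without normalization.

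One small fix: your stated reason for the unichain property is false. Under the never-transmit policy, states $0$ and $1$ are unreachable from every $\delta\ge 2$. What you actually need is a constant optimal gain, and that follows from weak accessibility exactly as in the paper.
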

	
	\begin{proof}
		We begin by establishing that the \textsc{MDP} is weakly accessible, thereby ensuring the existence of an optimal policy. An \textsc{MDP} is weakly accessible if its state space can be partitioned into a transient set $S_t$ and a communicating set $S_c$, where all states in $S_c$ are mutually reachable under some stationary policy. For any stationary stochastic policy $\phi$ assigning positive probability to each action $a \in \{0,1\}$, any state $\Delta_1'$ is reachable from $\Delta_1$. Specifically, if $\Delta_1' < \Delta_1$, take $a=1$ once, then $a=0$ for $\Delta_1'$ steps; if $\Delta_1' \geq \Delta_1$, take $a=0$ for $\Delta_1' - \Delta_1$ steps. Hence, the \textsc{MDP} is weakly accessible; thereby by Proposition 4.2.3 in \cite{bertsekas2011dynamic}, the optimal average cost $J_\lambda^*$ is independent of the initial state $s(0)$. Proposition 4.2.6 guarantees the existence of an optimal policy $\phi_\lambda^*$, and Proposition 4.2.1 ensures $J_\lambda^*$, the value function $\mathcal{V}(s)$, and $\phi_\lambda^*$ satisfy the Bellman equations:
		\begin{align}
			\label{eqn_Bellman}
			J_\lambda^*\!\!+\!\mathcal{V}(s)\!=\!\!\!\min_{a\in\left\{0,1\right\}}\!{\!Q_\lambda(s,a)}, \ \  
			\phi^\ast(s) \!\in\!  \argmin_{a\in\left\{0,1\right\}}{Q_\lambda(s,a)},
		\end{align}
		where $Q_\lambda(s,a) = C_\lambda(s,a)+\!\sum_{s^\prime\in S}{\mathbb{P}\!\left(s^\prime \big | s,a\right)\!\mathcal{V}(s^\prime)}.$
		Here, $C_\lambda(s,a)$ represents the average cost per slot, defined by the transition costs as: $C_\lambda(s,a) = \sum_{s^\prime\in S} {\mathbb{P}\!\left(s^\prime \big | s,a\right) C_\lambda\left(s,a,s^\prime\right)},$ with $C_\lambda(s,a,s^\prime) = \Delta_1^\prime + \lambda a$.
		The Bellman equation for state $s = \Delta_1$ can be written as $a^\ast(\Delta_1) = 1$ if $Q_\lambda(\Delta_1, 1) < Q_\lambda(\Delta_1, 0)$, and $a^*(\Delta_1) = 0$ otherwise. Thus, the optimal action $a^*(\Delta_1)$ depends on the sign of the difference $\partial\mathcal{V}(\Delta_1) = Q_\lambda(\Delta_1, 1) - Q_\lambda(\Delta_1, 0)$. 
		We next show that $\partial \mathcal{V}(\Delta_1)$ is a decreasing function of $\Delta_1$. For $\Delta^{\!-}_1 \leq \Delta^{\!+}_1$, we prove that $\partial \mathcal{V}(\Delta^{\!+}_1) \leq \partial \mathcal{V}(\Delta^{\!-}_1)$, i.e., 
		\begin{align}
			\label{eq_DVineq}
			\partial \mathcal{V}(\Delta^{\!+}_1) - \partial \mathcal{V}(\Delta^{\!-}_1) \leq 0.
		\end{align}
		
		This implies a threshold policy: if $\partial \mathcal{V}(\Delta_\mathcal{T}) < 0$ for some $\Delta_\mathcal{T}$, then $\partial \mathcal{V}(\Delta_1) < 0$ for all $\Delta_1 \geq \Delta_\mathcal{T}$, so the optimal action remains $1$ for all such states.		
		Using \eqref{eqn_CMDPTransProbs}, we obtain:
		\begin{align*}
			\partial \mathcal{V}(\Delta^{\!+}_1) &\!-\! \partial \mathcal{V}(\Delta^{\!-}_1) 
			\!=\! {-p_0 \big[ \Delta^{\!+}_1 \!-\! \Delta^{\!-}_1 \big]} \!-\! \bar{p}_g p_0 \big[ \mathcal{V}(\Delta^{\!+}_1) \!-\! \mathcal{V}(\Delta^{\!-}_1)\big] \\
			& \!-\!\bar{p}_g p_0 \big[ \mathcal{V}(\Delta^{\!+}_1 \!+\! 1) \!-\! \mathcal{V}(\Delta^{\!-}_1 \!+\! 1)\big].
		\end{align*}
		
		The first term is non-positive. Thus, to prove inequality \eqref{eq_DVineq}, it suffices to prove that $\mathcal{V}(\Delta_1)$ is increasing in $\Delta_1$, i.e., for $\Delta^{\!-}_1 \leq \Delta^{\!+}_1$, 
		$\mathcal{V}(\Delta^{\!-}_1) \leq \mathcal{V}(\Delta^{\!+}_1).$
		We use the Value Iteration Algorithm (\textsc{VIA}) and induction. \textsc{VIA} converges to $V(\Delta_1)$ regardless of the initial $\mathcal{V}_0(\Delta_1)$, i.e., $\lim_{k\to\infty} \mathcal{V}_k(\Delta_1) = \mathcal{V}(\Delta_1)$ for all $\Delta_1 \in S$.
		The \textsc{VIA} iteration is:
		\begin{align*}
			\mathcal{V}_{k+1}(\Delta_1)\!=\!\!\!\min_{a\in\left\{0,1\!\right\}} \!\!\bigg\{\!\underbrace{\!\sum_{\Delta_1^\prime\in S} \!\!\mathbb{P}\!\left(\Delta_1^\prime \big|\Delta_1,a\right) \!\Big( \Delta_1^\prime \!+\! \lambda a \!+\! \mathcal{V}_k(\Delta_1^\prime) \Big)}_{Q_{\lambda,k}(\Delta_1,a)}\!\!\bigg\}.
		\end{align*}
		
		We prove by induction that $\mathcal{V}_k(\Delta^{\!-}_1) \!\leq\! \mathcal{V}_k(\Delta^{\!+}_1)$ for all $k \!\geq\! 0$. For $k\!=\!0$, $\mathcal{V}_0(\Delta_1)\!=\!0$, so the claim holds. Assume $\mathcal{V}_k(\Delta^{\!-}_1) \!\leq\! \mathcal{V}_k(\Delta^{\!+}_1)$ and show $\mathcal{V}_{k+1}(\Delta^{\!-}_1) \!\leq\! \mathcal{V}_{k+1}(\Delta^{\!+}_1)$. Since $\mathcal{V}_{k+1}(\Delta_1) = \min \{\mathcal{V}_{k+1}^0(\Delta_1), \mathcal{V}_{k+1}^1(\Delta_1)\}$ with $\mathcal{V}_{k+1}^0 \!=\! Q_{\lambda,k}(\Delta_1,0)$ and $\mathcal{V}_{k+1}^1 \!=\! Q_{\lambda,k}(\Delta_1,1)$, it suffices to prove $\mathcal{V}_{k+1}^0(\Delta^{\!-}_1) \!\leq\! \mathcal{V}_{k+1}^0(\Delta^{\!+}_1)$ and $\mathcal{V}_{k+1}^1(\Delta^{\!-}_1) \!\leq\! \mathcal{V}_{k+1}^1(\Delta^{\!+}_1)$, since then $\min\{\mathcal{V}_{k+1}^0(\Delta^{\!-}_1),\mathcal{V}_{k+1}^1(\Delta^{\!-}_1)\} \!\leq\! \min\{\mathcal{V}_{k+1}^0(\Delta^{\!+}_1),\mathcal{V}_{k+1}^1(\Delta^{\!+}_1)\}$. By the induction hypothesis, all bracketed terms below are non-positive:
		\begin{align*}
			\mathcal{V}&^0_{\!k+1}(\Delta^{\!-}_1) \!-\! \mathcal{V}^0_{\!k+1}(\Delta^{\!+}_1) \!=\! \big[\Delta^{\!-}_1 \!-\! \Delta^{\!+}_1 \big] \\ 
			&\!+\!  \bar{p}_g \big[ \mathcal{V}_{\!k}(\Delta^{\!-}_1) \!-\! \mathcal{V}_{\!k}(\Delta^{\!+}_1) \big] \!+\! p_g \big[\mathcal{V}_{\!k}(\Delta^{\!-}_1 \!\!+\! 1) \!-\! \mathcal{V}_{\!k}(\Delta^{\!+}_1 \!\!+\! 1) \big].  \\
			\mathcal{V}&^1_{\!k+1}(\Delta^{\!-}_1) \!-\! \mathcal{V}^1_{\!k+1}(\Delta^{\!+}_1) \!=\! \bar{p}_0\big[\Delta^{\!-}_1 \!\!-\! \Delta^{\!+}_1 \big] \\ 
			&\!+\!  \bar{p}_g \bar{p}_0 \big[ \mathcal{V}_{\!k}(\Delta^{\!-}_1) \!-\! \mathcal{V}_{\!k}(\Delta^{\!+}_1) \big] \!+\! p_g \bar{p}_0 \big[\mathcal{V}_{\!k}(\Delta^{\!-}_1 \!\!+\! 1) \!-\! \mathcal{V}_{\!k}(\Delta^{\!+}_1 \!\!+\! 1) \big]. 
		\end{align*}
        Thus $\mathcal{V}_{k+1}(\Delta^{\!-}_1) \!\leq\! \mathcal{V}_{k+1}(\Delta^{\!+}_1)$, completing the proof.
	\end{proof}

    \section{Proof of Proposition \ref{Prop_StateProbThr}}
	\label{Appen_Proof_ThrPolicy_SSProbs}
	
	\begin{proof}
		For $\Delta_\mathcal{T} = 0$, when the \textsc{VAoI} at node $1$ is zero, transmission decisions do not affect state transitions, since the \textsc{VAoI} depends solely on whether a new version is generated. Hence, the system dynamics and steady-state distribution are identical for $\Delta_\mathcal{T} = 0$ and $\Delta_\mathcal{T} = 1$. The case $\Delta_\mathcal{T} = 0$ represents an \emph{always-update} policy, which is equivalent to a randomized policy with $\psi = 1$ (see Proposition \ref{Prop_StateProbRandom} and Appendix \ref{Appen_Proof_RSpolicy_SSProbs}).
		For $\Delta_\mathcal{T} \geq 2$, if $\Delta_1 < \Delta_\mathcal{T}$, no transmission occurs; the \textsc{VAoI} increments by $1$ if a new version is generated, otherwise it remains the same. If $\Delta_1 \geq \Delta_\mathcal{T}$, transmission occurs: on success, \textsc{VAoI} resets to $1$ or $0$ depending on whether a new version is generated; on failure, it increases by $1$ if a new version is generated, or remains unchanged otherwise. The resulting Markov chain is shown in Fig.~\ref{fig_MC_ThrPolicyDTo2}, with balance equations provided for $\Delta_\mathcal{T} \geq 2$:
		\[
			\begin{array}{ll}
				\pi^{\phi_{\mathcal{T}}}_{\Delta_1}(0) \!=\! \bar{p}_g\pi^{\phi_{\mathcal{T}}}_{\Delta_1}(0) \!+\! \bar{p}_g p_0 {\sum_{i=\Delta_\mathcal{T}}^{\infty} \! \pi^{\phi_{\mathcal{T}}}_{\Delta_1}(i)}, \\
				\pi^{\phi_{\mathcal{T}}}_{\Delta_1}(1) \!=\! p_g \pi^{\phi_{\mathcal{T}}}_{\Delta_1}(0) \!+\! \bar{p}_0 \pi^{\phi_{\mathcal{T}}}_{\Delta_1}(1) \!+\! p_g p_0 {\sum_{i=\Delta_\mathcal{T}}^{\infty} \! \pi^{\phi_{\mathcal{T}}}_{\Delta_1}(i)}, \\
				\pi^{\phi_{\mathcal{T}}}_{\Delta_1}(\delta_1) \!=\! p_g \pi^{\phi_{\mathcal{T}}}_{\Delta_1}(\delta_1\!-\!1) \!+\! \bar{p}_g \pi^{\phi_{\mathcal{T}}}_{\Delta_1}(\delta_1\!-\!1), \hfill 2 \!\leq\! \delta_1 \!<\! \Delta_\mathcal{T}, \\
				\pi^{\phi_{\mathcal{T}}}_{\Delta_1}(\Delta_\mathcal{T}) \!=\! p_g \pi^{\phi_{\mathcal{T}}}_{\Delta_1}(\Delta_\mathcal{T}-1) \!+\! \bar{p}_g \bar{p}_0 \pi^{\phi_{\mathcal{T}}}_{\Delta_1}(\Delta_\mathcal{T}), \\
				\pi^{\phi_{\mathcal{T}}}_{\Delta_1}(\delta_1) \!=\! p_g \bar{p}_0 \pi^{\phi_{\mathcal{T}}}_{\Delta_1}(\delta_1\!-\!1) \!+\! \bar{p}_g \bar{p}_0 \pi^{\phi_{\mathcal{T}}}_{\Delta_1}(\delta_1\!-\!1), \quad \hfill \delta_1 \!>\! \Delta_\mathcal{T}.
			\end{array}
		\]
        
		The third line is omitted when $\Delta_\mathcal{T} \!=\! 2$. By applying $\sum_{\delta_1=0}^{\infty} \pi^{\phi_{\mathcal{T}}}_{\Delta_1}(\delta_1) \!=\! 1$  and simplifying, Proposition~\ref{Prop_StateProbThr} follows.
	\end{proof}
    
        \begin{figure}[tb]
			\centering
			\includegraphics[trim={0.5cm 0cm 0.5cm 0cm}, clip,scale=0.44]{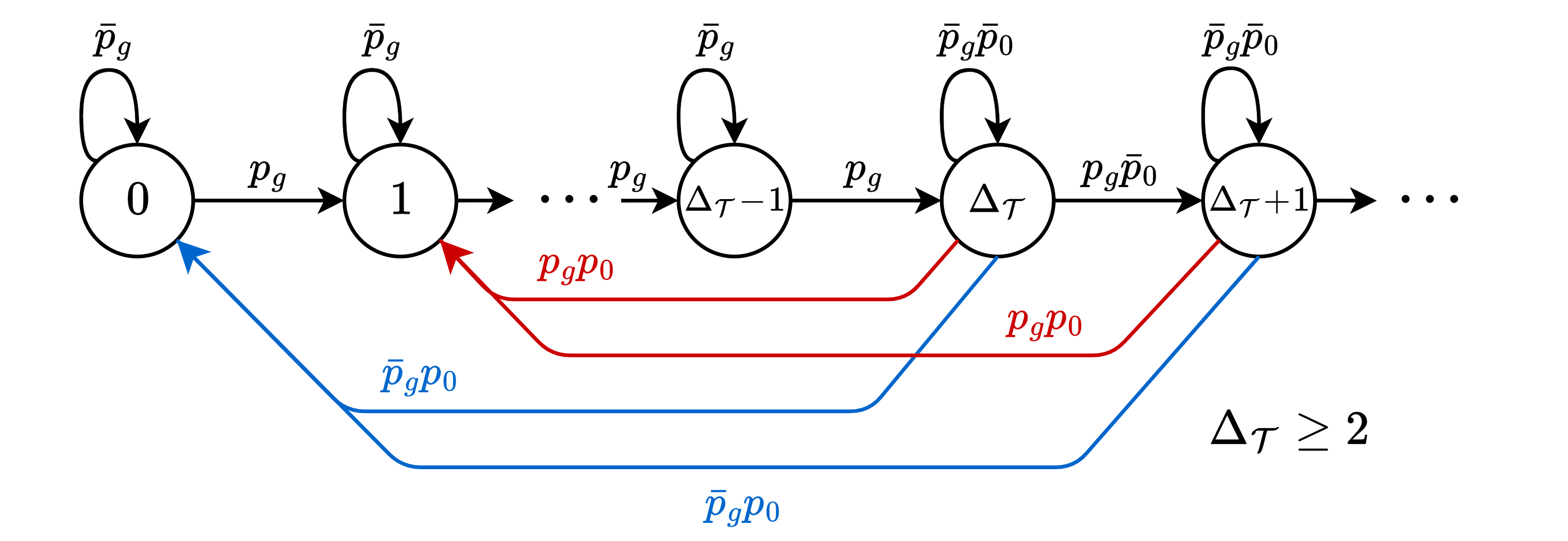} 
			\caption{\textsc{DTMC} model of $\Delta^{\phi_\mathcal{T}}_1$ for $\Delta_\mathcal{T} \!\geq\! 2$.}
			\label{fig_MC_ThrPolicyDTo2}
		\end{figure}

        \section{Proof of Lemma \ref{Lemma_AvgD1Eta0}}
        \label{Appen_Proof_AvdD1Eta1Thr}
        \begin{proof}
		For $\Delta_\mathcal{T} = 0$ and $\Delta_\mathcal{T} = 1$, the average \textsc{VAoI} is identical, as shown in the proof of Proposition \ref{Prop_StateProbThr}, and equals the average \textsc{VAoI} in Proposition \ref{Prop_StateProbRandom} with $\psi = 1$, i.e., $\bar{\Delta}^{\phi_\mathcal{T}(\Delta_\mathcal{T}=0)}_{1} \!=\! \bar{\Delta}^{\phi_\mathcal{T}(\Delta_\mathcal{T}=0)}_{1} \!=\! \frac{p_g}{p_0}$. For $\Delta_\mathcal{T} \geq 2$, using the steady-state probabilities from Proposition \ref{Prop_StateProbThr}, the expected \textsc{VAoI}, $\bar{\Delta}^{\phi_\mathcal{T}}_{1} \!=\! \sum_{\delta_1=0}^\infty \delta_1 \pi^{\phi_{\mathcal{T}}}_{\Delta_1}(\delta_1)$, is given by:
		\begin{align*}
        \scalebox{0.96}{$
			\bar{\Delta}^{\phi_\mathcal{T}}_{1} 
			\!\!\!=\! \pi^{\phi_{\mathcal{T}}}_{\Delta_1} \!\!\left(\Delta_\mathcal{T}\!-\!1\right) \!\Bigg\{\!\! \sum_{\delta_1=1}^{\Delta_\mathcal{T}\!-\!1} \!\! \delta_1
			\!+\! \frac{p_g}{\alpha_\mathcal{T}} \Delta_\mathcal{T} \!+\! 
			\! \frac{p_g}{\alpha_\mathcal{T}} \!\!\left[ \! \frac{r}{(1\!-\!r)^2} \!+\! \Delta_\mathcal{T} \frac{r}{1\!-\!r} \!\right]  \!\! \Bigg\}, $} 
		\end{align*}
		where $r = \frac{p_g \bar{p}_0}{\alpha_\mathcal{T}}$, and after some algebraic manipulation, the final expression for the average \textsc{VAoI} \eqref{eqn_AvgD1Threshold} is obtained.

        For a threshold policy with parameter $\Delta_\mathcal{T}$, $\bar{\eta}^{\phi_{\mathcal{T}}}_0$ represents the probability of the states that trigger transmission, i.e., $\bar{\eta}^{\phi_{\mathcal{T}}}_0 \!=\! \mathbb{P}(\Delta^{\phi_{\mathcal{T}}}_1 \!\geq\! \Delta_\mathcal{T})$, given by:
		\begin{align}
            \label{eqn_UpdateRateNode0_Thr}
			\bar{\eta}^{\phi_{\mathcal{T}}}_0 \!=\!\!\sum_{\delta_1=\Delta_\mathcal{T}}^{\infty} \!\! \pi^{\phi_{\mathcal{T}}}_{\Delta_1}(\delta_1) 
			\!\overset{(a)}{=}\!
			\begin{cases}
				1, & \Delta_\mathcal{T} \!=\! 0, \\
				\frac{p_g}{(\Delta_\mathcal{T}\!-\!1)p_0+\alpha_\mathcal{T}}, & \Delta_\mathcal{T} \!\geq\! 1,
			\end{cases}
		\end{align}
		where step $(a)$ follows directly from the balance equations and the steady-state probabilities presented in Proposition~\ref{Prop_StateProbThr}, using a geometric series analysis.
	\end{proof}

    \section{Proof of Theorem \ref{Theorem_OptimalThreshold}}
	\label{Appen_Proof_OptimalThr}
	
	\begin{proof}
		Threshold policies that satisfy the update rate constraint are considered feasible. For a threshold policy with parameter $\Delta_\mathcal{T}$, the update rate at the source node $f(\Delta_\mathcal{T}) \!=\! \bar{\eta}^{\phi_{\mathcal{T}}}_0$, given by \eqref{eqn_UpdateRateNode0_Thr}:
		\begin{align}
            \label{eqn_FeasibleThresholds}
			f(\Delta_\mathcal{T}) \!=\! \mathbb{P}(\Delta^{\phi_{\mathcal{T}}}_1 \!\geq\! \Delta_\mathcal{T}) \!=\!
				\frac{p_g}{(\Delta_\mathcal{T}\!-\!1)p_0 \!+\! \alpha_\mathcal{T}}, \ \ \hfill \Delta_\mathcal{T} \!\geq\! 1.
		\end{align}
		
        Therefore, a threshold policy is feasible if it satisfies $f(\Delta_\mathcal{T}) \leq \psi$, which can be simplified using \eqref{eqn_FeasibleThresholds}:
		\begin{align}
			\label{eqn_ThresholdLowerBound}
			\Delta_\mathcal{T} \geq \frac{p_g}{p_0}\left( \frac{1}{\psi} - 1 +p_0\right)\!.
		\end{align}
		
		This implies that $\Delta_\mathcal{T}$ must exceed a certain lower bound. Meanwhile, Lemma~\ref{Lemma_AvgD1Eta0} shows that the average \textsc{VAoI} under the threshold policy \eqref{eqn_AvgD1Threshold} is an increasing function of $\Delta_\mathcal{T}$, since it can be rewritten as:
		$\bar{\Delta}^{\phi_{\mathcal{T}}}_1 \!=\! \frac{\Delta_\mathcal{T}}{2} \left( 1\!-\!\frac{\alpha_\mathcal{T}}{(\Delta_\mathcal{T}\!-\!1)p_0 \!+\! \alpha_\mathcal{T}} \right) \!+\! \frac{p_g}{p_0}.$
		The constant term $\frac{p_g}{p_0}$ remains fixed; the first term increases with $\Delta_\mathcal{T}$ due to the linear growth of $\frac{\Delta_\mathcal{T}}{2}$ and the rising value of $\left( 1 \!-\! \frac{\alpha_\mathcal{T}}{(\Delta_\mathcal{T} - 1)p_0 + \alpha_\mathcal{T}} \right)$, since $\frac{\alpha_\mathcal{T}}{(\Delta_\mathcal{T} - 1)p_0 + \alpha_\mathcal{T}}$ decreases as $\Delta_\mathcal{T}$ grows. Therefore, minimizing the average \textsc{VAoI} under the update rate constraint entails selecting the smallest integer $\Delta_\mathcal{T}$ that satisfies \eqref{eqn_ThresholdLowerBound}. However, exact equality may not always be attainable, as $\Delta_\mathcal{T}$ must be integer-valued.
		To achieve $f(\Delta_\mathcal{T}) \!=\! \psi$, a randomized mixture policy can be employed~\cite{beutler1985optimal}\cite[Sec.~6.3]{altman1999constrained}, combining two thresholds $\Delta_\mathcal{T}^\ast$ and $\Delta_\mathcal{T}^\ast \!-\! 1$, where $f(\Delta_\mathcal{T}^\ast) \!\leq\! \psi$ and $f(\Delta_\mathcal{T}^\ast \!-\! 1) \!>\! \psi$, resulting in \eqref{eqn_OptimalThreshold}. The threshold $\Delta_\mathcal{T}^\ast$ is applied w.p. $\kappa$ and $\Delta_\mathcal{T}^\ast \!-\! 1$ with $1 \!-\! \kappa$, where $\kappa f(\Delta_\mathcal{T}^\ast) + (1-\kappa) f(\Delta_\mathcal{T}^\ast-1) = \psi$, leading to \eqref{Optimal_gamma}. This \emph{mixed threshold policy} constitutes the optimal solution to the \textsc{CMDP} under the average rate constraint \eqref{eq_constraint} \cite{beutler1985optimal}.
	\end{proof}

    \section{Proof of Theorem \ref{Theorem_RecDistVAoIi}}
    \label{Appen_Proof_RecDistVAoIi}

    \begin{proof}
    By Proposition~\ref{Prop_VAoInodei}, the recursive update of the \textsc{VAoI} at node $i+1$ is given by $\Delta^{\phi}_{i+1}(t) = \Delta^{\phi}_{i}(t-m_i) + \xi_{m_i}$. In the stationary regime ($t \to \infty$), $\Delta^{\phi}_i$ becomes time-invariant in distribution. Let $\pi^{\phi}_{\Delta_{i}}(\delta)$ and $\pi^{\phi}_{\Delta_{i+1}}(\delta)$ denote the stationary distributions of the \textsc{VAoI} at nodes $i$ and $i+1$, respectively. Conditioning on the geometric delay $m_i = \ell$ and applying the law of total probability, we obtain:
\begin{align}
    \pi^{\phi}_{\Delta_{i+1}}\!(\delta) \!=\!\! \sum_{\ell=1}^{\infty} \mathbb{P}(m_i \!=\! \ell)
     \mathbb{P}\big(\Delta^{\phi}_{i}(t\!-\!\ell) \!+\! \xi_{m_i} \!\!=\! \delta \!\mid\! m_i \!=\! \ell\big).
    \label{proof_step1}
\end{align}

Given $m_i = \ell$, the random variable $\xi_{m_i} \mid m_i=\ell$ follows a Binomial distribution and is independent of $\Delta_{i}(t-\ell)$. Hence, since the PMF of the sum of independent random variables is given by convolution, the conditional probability in \eqref{proof_step1} can be written using the convolution operator $\circledast$ as:
\begin{align*}
    \mathbb{P}\!\left(\Delta^{\phi}_{i}(t\!-\!\ell) \!+\! \xi_{m_i} \!=\! \delta \mid m_i \!=\! \ell\right)
    &= \mathcal{B}(\delta; \ell, p_g) \circledast \pi^{\phi}_{\Delta_{i}}(\delta).
\end{align*}

Substituting this expression into \eqref{proof_step1} yields the first equality of the theorem. Expanding the discrete convolution over the support of the Binomial distribution, $r \in \{0, 1, \dots, \ell\}$, gives:
\begin{align}
    \mathcal{B}(\delta; \ell, p_g) \circledast \pi^{\phi}_{\Delta_{i}}\!(\delta) \!=\!\!
    \sum_{r=0}^{\ell}
    \mathbb{P}(\xi_{m_i}\!=\!r \!\mid\! m_i\!=\!\ell)
    \pi^{\phi}_{\Delta_{i}}\!(\delta\!-\!r).
\end{align}

Substituting this back into \eqref{proof_step1} completes the proof.
    \end{proof}

    \section{Proof of Theorem \ref{Theorem_AvgEta1_Thr}}
    \label{Appen_Proof_AvgEta1_Thr}

    \begin{proof}

    The transition from $\Delta^{\phi_\mathcal{T}}_1(t)=\delta_1$ to $\Delta^{\phi_\mathcal{T}}_1(t+1)=\delta_1^\prime$, i.e., $\mathbb{P}(\delta_1^\prime \mid \delta_1)$, is independent of $\mathrm{a}^{\phi_\mathcal{T}}_1(t)$. Therefore, for the transition from $(\delta_1,a_1)$ to $(\delta_1^\prime,a_1^\prime)$, we have the following:
    \begin{align}
        \mathbb{P}(\Delta_1^\prime,a_1^\prime \mid \Delta_1,a_1) = \mathbb{P}(\Delta_1^\prime \mid \Delta_1)\mathbb{P}(a_1^\prime \mid \Delta_1^\prime,\Delta_1,a_1).
    \end{align}

Here, the superscript $\phi_\mathcal{T}$ is omitted for notational simplicity. From the \textsc{DTMC} of $\mathbf{v}=(\Delta_1,\mathrm{a}_1)$ depicted in Fig. \ref{fig_D1a1_DTMC}:

\begin{align*}
    \pi^{\phi_{\mathcal{T}}}_\mathbf{v}\!(0,1) & \!=\! \bar{p}_g \bar{p}_1 \pi^{\phi_{\mathcal{T}}}_\mathbf{v}\!(0,1) \!+\!\!\!\! \sum_{\delta \geq \Delta_\mathcal{T}} \!\!\! \bar{p}_g p_0 \! \left[\pi^{\phi_{\mathcal{T}}}_\mathbf{v}\!(\delta,0) \!+\! \pi^{\phi_{\mathcal{T}}}_\mathbf{v}\!(\delta,1) \right] \\
    &\!=\!\bar{p}_g \bar{p}_1 \pi^{\phi_{\mathcal{T}}}_\mathbf{v}(0,1) \!+\! \bar{p}_g p_0 \bar{\eta}^{\phi_{\mathcal{T}}}_0, \\
    \pi^{\phi_{\mathcal{T}}}_\mathbf{v}\!(1,1) &\!=\! p_g \bar{p}_1 \pi^{\phi_{\mathcal{T}}}_\mathbf{v}(0,1) \!+\! \bar{p}_g \bar{p}_1 \pi^{\phi_{\mathcal{T}}}_\mathbf{v}(1,1) + p_g p_0 \bar{\eta}^{\phi_{\mathcal{T}}}_0, \\
    \pi^{\phi_{\mathcal{T}}}_\mathbf{v}\!(\delta,1) &\!=\! p_g \bar{p}_1 \pi^{\phi_{\mathcal{T}}}_\mathbf{v}(\delta\!-\!1,1) \!+\! \bar{p}_g \bar{p}_1 \pi^{\phi_{\mathcal{T}}}_\mathbf{v}(\delta,1), \ \hfill 2 \!\leq\! \delta \!<\! \Delta_\mathcal{T}, \\
    \pi^{\phi_{\mathcal{T}}}_\mathbf{v}\!(\Delta_\mathcal{T},1) &\!=\! p_g \bar{p}_1 \pi^{\phi_{\mathcal{T}}}_\mathbf{v}\!(\Delta_\mathcal{T}-1,1) \!+\! \bar{p}_g \bar{p}_0 \bar{p}_1 \pi^{\phi_{\mathcal{T}}}_\mathbf{v}\!(\Delta_\mathcal{T},1), \\
    \pi^{\phi_{\mathcal{T}}}_\mathbf{v}\!(\delta,1) &\!=\! p_g \bar{p}_0 \bar{p}_1 \pi^{\phi_{\mathcal{T}}}_\mathbf{v}\!(\delta\!-\!1,1) \!+\! \bar{p}_g \bar{p}_0 \bar{p}_1 \pi^{\phi_{\mathcal{T}}}_\mathbf{v}\!(\delta,1), \ \hfill \delta \!>\! \Delta_\mathcal{T}.
\end{align*}
\begin{align*}
    \pi^{\phi_{\mathcal{T}}}_\mathbf{v}\!(0,0) & \!=\!\bar{p}_g \pi^{\phi_{\mathcal{T}}}_\mathbf{v}(0,0) \!+\! \bar{p}_g p_1 \pi^{\phi_{\mathcal{T}}}_\mathbf{v}(0,1), \\
    \pi^{\phi_{\mathcal{T}}}_\mathbf{v}\!(\delta,0) &\!=\! \bar{p}_g \pi^{\phi_{\mathcal{T}}}_\mathbf{v}(\delta,0) \!+\! p_g \pi^{\phi_{\mathcal{T}}}_\mathbf{v}(\delta\!-\!1,0)  \\
    &\!+\! p_g p_1 \pi^{\phi_{\mathcal{T}}}_\mathbf{v}(\delta\!-\!1,1) \!+\! \bar{p}_g p_1 \pi^{\phi_{\mathcal{T}}}_\mathbf{v}\!(\delta,1) , \ \hfill 1 \!\leq\! \delta \!<\! \Delta_\mathcal{T}, \\
    \pi^{\phi_{\mathcal{T}}}_\mathbf{v}\!(\Delta_\mathcal{T},0) &\!=\! p_g \pi^{\phi_{\mathcal{T}}}_\mathbf{v}(\Delta_\mathcal{T}-1,0) \!+\! \bar{p}_g \bar{p}_0 \pi^{\phi_{\mathcal{T}}}_\mathbf{v}(\Delta_\mathcal{T},0) \\
    &+\! p_g p_1 \pi^{\phi_{\mathcal{T}}}_\mathbf{v}(\Delta_\mathcal{T}-1,1) \!+\! \bar{p}_g \bar{p}_0 p_1 \pi^{\phi_{\mathcal{T}}}_\mathbf{v}\!(\Delta_\mathcal{T},1), \\
    \pi^{\phi_{\mathcal{T}}}_\mathbf{v}\!(\delta,0) &\!=\! \bar{p}_g \bar{p}_0 \pi^{\phi_{\mathcal{T}}}_\mathbf{v}(\delta,0) \!+\! p_g \bar{p}_0 \pi^{\phi_{\mathcal{T}}}_\mathbf{v}(\delta\!-\!1,0) \\
    &+\! p_g \bar{p}_0 p_1 \pi^{\phi_{\mathcal{T}}}_\mathbf{v}\!(\delta\!-\!1,1) \!+\! \bar{p}_g \bar{p}_0 p_1 \pi^{\phi_{\mathcal{T}}}_\mathbf{v}\!(\delta,1), \ \hfill \delta \!>\! \Delta_\mathcal{T}.
\end{align*}
\begin{align*}
    \pi^{\phi_{\mathcal{T}}}_\mathbf{v}\!(0,1) & \!=\! \frac{\bar{p}_g p_0}{1-\bar{p}_g \bar{p}_1} \bar{\eta}^{\phi_{\mathcal{T}}}_0, \\
    \pi^{\phi_{\mathcal{T}}}_\mathbf{v}\!(1,1) &\!=\! \frac{p_g}{\bar{p}_g(1-\bar{p}_g \bar{p}_1)} \pi^{\phi_{\mathcal{T}}}_\mathbf{v}\!(0,1) = \frac{p_g p_0}{(1-\bar{p}_g \bar{p}_1)^2} \bar{\eta}^{\phi_{\mathcal{T}}}_0, \\
    \pi^{\phi_{\mathcal{T}}}_\mathbf{v}\!(\delta,1) &\!=\! \left[\frac{p_g \bar{p}_1}{1-\bar{p}_g \bar{p}_1}\right]^{\delta\!-\!1} \pi^{\phi_{\mathcal{T}}}_\mathbf{v}(1,1), \quad 2 \leq \delta < \Delta_\mathcal{T},\\
    \pi^{\phi_{\mathcal{T}}}_\mathbf{v}\!(\Delta_\mathcal{T},1) &\!=\! \frac{p_g \bar{p}_1}{1-\bar{p}_g \bar{p}_0 \bar{p}_1} \left[\frac{p_g \bar{p}_1}{1-\bar{p}_g \bar{p}_1}\right]^{\Delta_\mathcal{T}-2} \pi^{\phi_{\mathcal{T}}}_\mathbf{v}(1,1), \\
    \pi^{\phi_{\mathcal{T}}}_\mathbf{v}\!(\delta,1) &\!=\! \frac{p_g \bar{p}_0 \bar{p}_1}{1-\bar{p}_g \bar{p}_0 \bar{p}_1} \pi^{\phi_{\mathcal{T}}}_\mathbf{v}(\delta\!-\!1,1) \\
    &\!=\! \left[\frac{p_g \bar{p}_0 \bar{p}_1}{1-\bar{p}_g \bar{p}_0 \bar{p}_1}\right]^{\delta-\Delta_\mathcal{T}} \pi^{\phi_{\mathcal{T}}}_\mathbf{v}(\Delta_\mathcal{T},1), \quad \delta > \Delta_\mathcal{T},
\end{align*}
which yields:
\begin{align}
    \bar{\eta}^{\phi_{\mathcal{T}}}_1 = \sum_{\delta=0}^{\infty} \pi^{\phi_{\mathcal{T}}}_\mathbf{v}(\delta,1) = \frac{p_0}{p_1} \left(1-\frac{p_0 \Theta^{\Delta_\mathcal{T}}}{1-\bar{p}_0 \bar{p}_1}\right) \bar{\eta}^{\phi_{\mathcal{T}}}_0,
\end{align}
where $\Theta=\frac{p_g \bar{p}_1}{1-\bar{p}_g \bar{p}_1}$.
\end{proof}

    \section{Proof of Proposition \ref{Prop_StateProbRandom}}
	\label{Appen_Proof_RSpolicy_SSProbs}

    \begin{figure}[!tb]
		\centering
		\includegraphics[trim={0.2cm 0cm 0.2cm 0cm}, clip, scale=0.49]{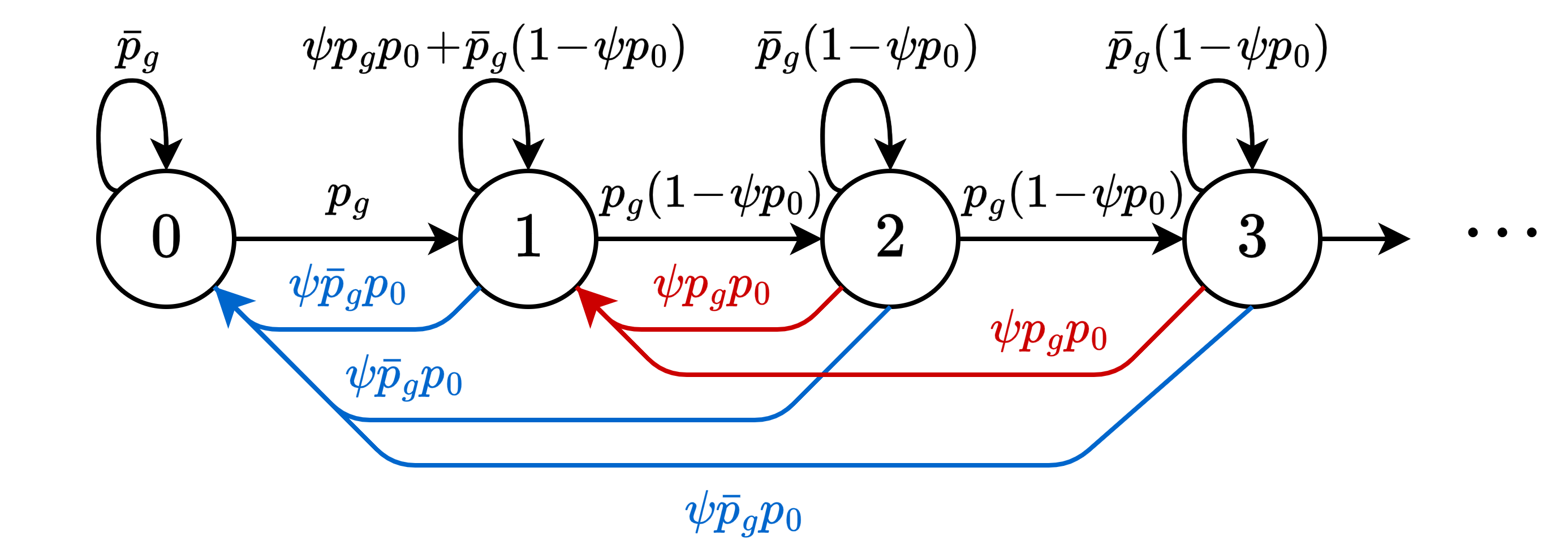} 
		\caption{\textsc{DTMC} model of $\Delta^{\phi_\mathcal{R}}_1$.}
		\label{fig_MC_RSpolicy}
	\end{figure}
    
	\begin{proof}
		At each time slot, a transmission attempt occurs w.p. $\psi$ and succeeds w.p. $p_0$, giving a success probability of $\psi p_0$. Upon success, the \textsc{VAoI} resets to $0$ or $1$ depending on whether a new version is generated in that slot. Otherwise, if no transmission occurs or it fails (w.p. $1 - \psi p_0$), the \textsc{VAoI} increases by $0$ or $1$, again depending on version generation. The Markov chain with transition probabilities is shown in Fig.~\ref{fig_MC_RSpolicy}, and solving its balance equations as described below yields the stationary distribution in Proposition \ref{Prop_StateProbRandom}.
		\[
			\begin{array}{ll}
				\pi^{\phi_{\mathcal{R}}}_{\Delta_1}(0) \!=\! \bar{p}_g \pi^{\phi_{\mathcal{R}}}_{\Delta_1}(0) \!+\! \psi \bar{p}_g p_0 {\sum_{i=1}^{\infty} \! \pi^{\phi_{\mathcal{R}}}_{\Delta_1}(i)}, \\
				\pi^{\phi_{\mathcal{R}}}_{\Delta_1}(1) \!=\! p_g \pi^{\phi_{\mathcal{R}}}_{\Delta_1}(0) \!+\! \big[ \psi p_g p_0 \!+\! \bar{p}_g (1\!-\!\psi p_0) \big] \pi^{\phi_{\mathcal{R}}}_{\Delta_1}(1) \\ 
                \qquad \qquad \quad + \psi p_g p_0 {{\sum_{i=2}^{\infty} \! \pi^{\phi_{\mathcal{R}}}_{\Delta_1}(i)}}, \\
				\pi^{\phi_{\mathcal{R}}}_{\Delta_1}(\delta_1) \!=\! p_g (1\!-\!\psi p_0) \pi^{\phi_{\mathcal{R}}}_{\Delta_1}(\delta_1\!-\!1) \!+\! \bar{p}_g (1\!-\!\psi p_0)\pi^{\phi_{\mathcal{R}}}_{\Delta_1}(\delta_1), \\
                \qquad \qquad \quad \ \ \hfill \delta_1 \geq 2.
			\end{array}
		\]

        The recurrence relation of $\pi^{\phi_{\mathcal{R}}}_{\Delta_1}(\delta_1)$ for $\delta_1 \geq 1$ is geometric: $\pi^{\phi_{\mathcal{R}}}_{\Delta_1}(\delta_1) = r^{\delta_1-1} \pi^{\phi_{\mathcal{R}}}_{\Delta_1}(1)$, where $r = \frac{(1 - \psi p_0)p_g}{\alpha_\mathcal{R}}$. Thus, its expected value is given by $\bar{\Delta}^{\phi_{\mathcal{R}}}_1 = \frac{\pi^{\phi_{\mathcal{R}}}_{\Delta_1}(1)}{(1 - r)^2}$. Noting that $\pi^{\phi_{\mathcal{R}}}_{\Delta_1}(1) = \frac{\psi p_0 p_g}{\alpha_\mathcal{R}^2}$, we obtain $\bar{\Delta}^{\phi_{\mathcal{R}}}_1 = \frac{p_g}{\psi p_0}$.
	\end{proof}

\end{document}